\begin{document}
%
%
\newdimen\cboxwidth
\cboxwidth=\unitlength
\newdimen\cboxheight
\cboxheight=\unitlength
\newcommand{\rgbfbox}[8]{%
\cboxwidth=\unitlength
\cboxheight=\unitlength
\multiply \cboxwidth by #3
\multiply \cboxheight by #4
{\color[rgb]{#5,#6,#7}\put(#1,#2){\rule{\cboxwidth}{\cboxheight}}}
\put(#1,#2){\framebox(#3,#4){#8}}
\cboxwidth=\unitlength
\cboxheight=\unitlength
}
%
%
%
\newcommand{\colorfbox}[6]{%
\cboxwidth=\unitlength
\cboxheight=\unitlength
\multiply \cboxwidth by #3
\multiply \cboxheight by #4
{\color{#5}\put(#1,#2){\rule{\cboxwidth}{\cboxheight}}}
\put(#1,#2){\framebox(#3,#4){#6}}
\cboxwidth=\unitlength
\cboxheight=\unitlength
}
%
%
%
\newcommand{\rgbbox}[8]{%
\cboxwidth=\unitlength
\cboxheight=\unitlength
\multiply \cboxwidth by #3
\multiply \cboxheight by #4
{\color[rgb]{#5,#6,#7}\put(#1,#2){\rule{\cboxwidth}{\cboxheight}}}
\put(#1,#2){\makebox(#3,#4){#8}}
\cboxwidth=\unitlength
\cboxheight=\unitlength
}
%
%

\newcommand{\app}{\mathbf{app}}
\newcommand{\lam}{\mathbf{lam}}
\newcommand{\BB}{\mathit{\boldsymbol{B}}}
\newcommand{\CC}{\mathit{\boldsymbol{C}}}
\newcommand{\DD}{\mathit{\boldsymbol{D}}}
\newcommand{\II}{\mathit{\boldsymbol{I}}}
\renewcommand{\SS}{\mathit{\boldsymbol{S}}}
\newcommand{\KK}{\mathit{\boldsymbol{K}}}
\newcommand{\WW}{\mathit{\boldsymbol{W}}}
\newcommand{\YY}{\mathit{\boldsymbol{Y}}}
\newcommand{\fix}{\mathbf{fix}}
\newcommand{\BIdot}{\mathbf{BI}(\_)^\bullet}
\newcommand{\BCI}{\mathbf{BCI}}
\newcommand{\BCpmI}{\mathbf{BC^\pm I}}
\newcommand{\SK}{\mathbf{SK}}
\newcommand{\BCIWK}{\mathbf{BCIWK}}
\newcommand{\id}{\mathit{id}}
\newcommand{\mycomment}[1]{}
\newcommand{\CA}{\mathcal{C}_\mathcal{A}}
\newcommand{\bbCA}{\mathbb{C}|_A}
\newcommand{\IA}{\mathcal{I}_\mathcal{A}}
\newcommand{\calA}{\mathcal{A}}
\newcommand{\calC}{\mathcal{C}}
\newcommand{\cox}{\mathit{cox}}
\newcommand{\sem}[1]{[\![#1]\!]}
\newcommand{\Tr}{\mathit{\boldsymbol{T\!r}}}
\newcommand{\lineseg}[4]{\qbezier(#1,#2)(#1,#2)(#3,#4)}
\newcommand{\balpha}{{\boldsymbol\alpha}}
\newcommand{\bbeta}{{\boldsymbol\beta}}
\newcommand{\bsigma}{{\boldsymbol\sigma}}
\newcommand{\btheta}{{\boldsymbol\theta}}

\newcommand{\red}{\color[rgb]{1,0,0}}
\newcounter{braidX}
\newcounter{braidXone}
\newcounter{braidXtwo}
\newcounter{braidXthree}
\newcounter{braidXfour}
\newcounter{braidY}
\newcounter{braidYone}
\newcounter{braidYtwo}
\newcounter{braidH}
\newcounter{braidW}
\newcommand{\braid}[4]{%
\setcounter{braidX}{#1}
\setcounter{braidY}{#2}
\setcounter{braidW}{#3}
\setcounter{braidH}{#4}
\setcounter{braidXone}{\value{braidX}}
\addtocounter{braidXone}{\value{braidW}}
\setcounter{braidXtwo}{\value{braidXone}}
\addtocounter{braidXtwo}{\value{braidW}}
\setcounter{braidXthree}{\value{braidXtwo}}
\addtocounter{braidXthree}{\value{braidW}}
\setcounter{braidXfour}{\value{braidXthree}}
\addtocounter{braidXfour}{\value{braidW}}
\setcounter{braidYone}{\value{braidY}}
\addtocounter{braidYone}{\value{braidH}}
\setcounter{braidYtwo}{\value{braidYone}}
\addtocounter{braidYtwo}{\value{braidH}}
\qbezier(\value{braidX},\value{braidYtwo})(\value{braidXone},\value{braidYtwo})(\value{braidXtwo},\value{braidYone})
\qbezier(\value{braidXtwo},\value{braidYone})(\value{braidXthree},\value{braidY})(\value{braidXfour},\value{braidY})
\put(\value{braidXtwo},\value{braidYone})
{\makebox(0,0){\large\color[rgb]{1,1,1}$\bullet$}}
\qbezier(\value{braidX},\value{braidY})(\value{braidXone},\value{braidY})(\value{braidXtwo},\value{braidYone})
\qbezier(\value{braidXtwo},\value{braidYone})(\value{braidXthree},\value{braidYtwo})(\value{braidXfour},\value{braidYtwo})
}%
\newcommand{\braidInv}[4]{%
\setcounter{braidX}{#1}
\setcounter{braidY}{#2}
\setcounter{braidW}{#3}
\setcounter{braidH}{#4}
\setcounter{braidXone}{\value{braidX}}
\addtocounter{braidXone}{\value{braidW}}
\setcounter{braidXtwo}{\value{braidXone}}
\addtocounter{braidXtwo}{\value{braidW}}
\setcounter{braidXthree}{\value{braidXtwo}}
\addtocounter{braidXthree}{\value{braidW}}
\setcounter{braidXfour}{\value{braidXthree}}
\addtocounter{braidXfour}{\value{braidW}}
\setcounter{braidYone}{\value{braidY}}
\addtocounter{braidYone}{\value{braidH}}
\setcounter{braidYtwo}{\value{braidYone}}
\addtocounter{braidYtwo}{\value{braidH}}
\qbezier(\value{braidX},\value{braidY})(\value{braidXone},\value{braidY})(\value{braidXtwo},\value{braidYone})
\qbezier(\value{braidXtwo},\value{braidYone})(\value{braidXthree},\value{braidYtwo})(\value{braidXfour},\value{braidYtwo})
\put(\value{braidXtwo},\value{braidYone})
{\makebox(0,0){\color[rgb]{1,1,1}$\bullet$}}
\qbezier(\value{braidX},\value{braidYtwo})(\value{braidXone},\value{braidYtwo})(\value{braidXtwo},\value{braidYone})
\qbezier(\value{braidXtwo},\value{braidYone})(\value{braidXthree},\value{braidY})(\value{braidXfour},\value{braidY})
}%
\newcommand{\symm}[4]{%
\setcounter{braidX}{#1}
\setcounter{braidY}{#2}
\setcounter{braidW}{#3}
\setcounter{braidH}{#4}
\setcounter{braidXone}{\value{braidX}}
\addtocounter{braidXone}{\value{braidW}}
\setcounter{braidXtwo}{\value{braidXone}}
\addtocounter{braidXtwo}{\value{braidW}}
\setcounter{braidXthree}{\value{braidXtwo}}
\addtocounter{braidXthree}{\value{braidW}}
\setcounter{braidXfour}{\value{braidXthree}}
\addtocounter{braidXfour}{\value{braidW}}
\setcounter{braidYone}{\value{braidY}}
\addtocounter{braidYone}{\value{braidH}}
\setcounter{braidYtwo}{\value{braidYone}}
\addtocounter{braidYtwo}{\value{braidH}}
\qbezier(\value{braidX},\value{braidY})(\value{braidXone},\value{braidY})(\value{braidXtwo},\value{braidYone})
\qbezier(\value{braidXtwo},\value{braidYone})(\value{braidXthree},\value{braidYtwo})(\value{braidXfour},\value{braidYtwo})
%
%
\qbezier(\value{braidX},\value{braidYtwo})(\value{braidXone},\value{braidYtwo})(\value{braidXtwo},\value{braidYone})
\qbezier(\value{braidXtwo},\value{braidYone})(\value{braidXthree},\value{braidY})(\value{braidXfour},\value{braidY})
}%
\newcounter{twistX}
\newcounter{twistX1}
\newcounter{twistX11}
\newcounter{twistX12}
\newcounter{twistX2}
\newcounter{twistY}
\newcounter{twistY1}
\newcounter{twistY2}
\newcounter{twistY0}
\newcommand{\twist}[2]{%
\setcounter{twistX}{#1}
\setcounter{twistX1}{#1}
\addtocounter{twistX1}{10}
\setcounter{twistX11}{#1}
\addtocounter{twistX11}{7}
\setcounter{twistX12}{#1}
\addtocounter{twistX12}{13}
\setcounter{twistX2}{#1}
\addtocounter{twistX2}{20}
\setcounter{twistY}{#2}
\setcounter{twistY1}{#2}
\addtocounter{twistY1}{7}
\setcounter{twistY2}{#2}
\addtocounter{twistY2}{10}
\setcounter{twistY0}{#2}
\addtocounter{twistY0}{2}
\qbezier(\value{twistX1},\value{twistY2})(\value{twistX11},\value{twistY2})(\value{twistX11},\value{twistY1})
\qbezier(\value{twistX11},\value{twistY1})(\value{twistX11},\value{twistY})(\value{twistX2},\value{twistY})
\put(\value{twistX1},\value{twistY0})
{\makebox(0,0){\color[rgb]{1,1,1}\large$\bullet$}}
\qbezier(\value{twistX},\value{twistY})(\value{twistX12},\value{twistY})(\value{twistX12},\value{twistY1})
\qbezier(\value{twistX12},\value{twistY1})(\value{twistX12},\value{twistY2})(\value{twistX1},\value{twistY2})
}%
\newcommand{\twistInv}[2]{%
\setcounter{twistX}{#1}
\setcounter{twistX1}{#1}
\addtocounter{twistX1}{10}
\setcounter{twistX11}{#1}
\addtocounter{twistX11}{7}
\setcounter{twistX12}{#1}
\addtocounter{twistX12}{13}
\setcounter{twistX2}{#1}
\addtocounter{twistX2}{20}
\setcounter{twistY}{#2}
\setcounter{twistY1}{#2}
\addtocounter{twistY1}{7}
\setcounter{twistY2}{#2}
\addtocounter{twistY2}{10}
\setcounter{twistY0}{#2}
\addtocounter{twistY0}{2}
\qbezier(\value{twistX},\value{twistY})(\value{twistX12},\value{twistY})(\value{twistX12},\value{twistY1})
\qbezier(\value{twistX12},\value{twistY1})(\value{twistX12},\value{twistY2})(\value{twistX1},\value{twistY2})
\put(\value{twistX1},\value{twistY0})
{\makebox(0,0){\color[rgb]{1,1,1}\large$\bullet$}}
\qbezier(\value{twistX1},\value{twistY2})(\value{twistX11},\value{twistY2})(\value{twistX11},\value{twistY1})
\qbezier(\value{twistX11},\value{twistY1})(\value{twistX11},\value{twistY})(\value{twistX2},\value{twistY})
}%
\newcommand{\twistG}[2]{%
\setcounter{twistX}{#1}
\setcounter{twistX1}{#1}
\addtocounter{twistX1}{10}
\setcounter{twistX11}{#1}
\addtocounter{twistX11}{7}
\setcounter{twistX12}{#1}
\addtocounter{twistX12}{13}
\setcounter{twistX2}{#1}
\addtocounter{twistX2}{20}
\setcounter{twistY}{#2}
\setcounter{twistY1}{#2}
\addtocounter{twistY1}{7}
\setcounter{twistY2}{#2}
\addtocounter{twistY2}{10}
\setcounter{twistY0}{#2}
\addtocounter{twistY0}{2}
\qbezier(\value{twistX1},\value{twistY2})(\value{twistX11},\value{twistY2})(\value{twistX11},\value{twistY1})
\qbezier(\value{twistX11},\value{twistY1})(\value{twistX11},\value{twistY})(\value{twistX2},\value{twistY})
\put(\value{twistX1},\value{twistY0})
{\makebox(0,0){\color[rgb]{.91,.91,.91}\large$\bullet$}}
\qbezier(\value{twistX},\value{twistY})(\value{twistX12},\value{twistY})(\value{twistX12},\value{twistY1})
\qbezier(\value{twistX12},\value{twistY1})(\value{twistX12},\value{twistY2})(\value{twistX1},\value{twistY2})
}%
\newcommand{\twistInvG}[2]{%
\setcounter{twistX}{#1}
\setcounter{twistX1}{#1}
\addtocounter{twistX1}{10}
\setcounter{twistX11}{#1}
\addtocounter{twistX11}{7}
\setcounter{twistX12}{#1}
\addtocounter{twistX12}{13}
\setcounter{twistX2}{#1}
\addtocounter{twistX2}{20}
\setcounter{twistY}{#2}
\setcounter{twistY1}{#2}
\addtocounter{twistY1}{7}
\setcounter{twistY2}{#2}
\addtocounter{twistY2}{10}
\setcounter{twistY0}{#2}
\addtocounter{twistY0}{2}
\qbezier(\value{twistX},\value{twistY})(\value{twistX12},\value{twistY})(\value{twistX12},\value{twistY1})
\qbezier(\value{twistX12},\value{twistY1})(\value{twistX12},\value{twistY2})(\value{twistX1},\value{twistY2})
\put(\value{twistX1},\value{twistY0})
{\makebox(0,0){\color[rgb]{.91,.91,.91}\large$\bullet$}}
\qbezier(\value{twistX1},\value{twistY2})(\value{twistX11},\value{twistY2})(\value{twistX11},\value{twistY1})
\qbezier(\value{twistX11},\value{twistY1})(\value{twistX11},\value{twistY})(\value{twistX2},\value{twistY})
}%
\newcounter{arcX}
\newcounter{arcY}
\newcounter{arcX1}
\newcounter{arcY1}
\newcounter{arcY2}
\newcommand{\arcLR}[4]{%
\setcounter{arcX}{#1}
\setcounter{arcY}{#2}
\setcounter{arcX1}{#1}
\setcounter{arcY1}{#2}
\addtocounter{arcX1}{#3}
\addtocounter{arcY1}{#4}
\setcounter{arcY2}{\value{arcY1}}
\addtocounter{arcY2}{#4}
\qbezier(\value{arcX},\value{arcY})(\value{arcX1},\value{arcY})(\value{arcX1},\value{arcY1})
\qbezier(\value{arcX},\value{arcY2})(\value{arcX1},\value{arcY2})(\value{arcX1},\value{arcY1})
}%
\newcommand{\idline}[3]{\put(#1,#2){\line(1,0){#3}}}
\newcommand{\idarrow}[3]{\put(#1,#2){\vector(1,0){#3}}}

\newcommand{\rvec}[2]{}
\newcommand{\myline}[4]{\qbezier(#1,#2)(#1,#2)(#3,#4)}

\newcommand{\lamnode}{\color[rgb]{1,0,0}\makebox(0,0){\huge$\bullet$}\color[rgb]{0,0,0}\makebox(0,0){\tiny$\lambda$}}
\newcommand{\appnode}{\color[rgb]{0,.6,.6}\makebox(0,0){\huge$\bullet$}\color[rgb]{0,0,0}\makebox(0,0){\tiny$@$}}
\newcommand{\tinylamnode}{\color[rgb]{1,0,0}\makebox(0,0){$\bullet$}}
\newcommand{\tinyappnode}{\color[rgb]{0,.6,.6}\makebox(0,0){$\bullet$}}

\title{Braids, Twists, Trace and Duality in Combinatory Algebras}
\author{Masahito Hasegawa}
\orcid{0000-0003-3460-8615}
\affiliation{%
  \institution{Research Institute for Mathematical Sciences, Kyoto University}
  \city{Kyoto}
  \country{Japan}
}

\author{Serge Lechenne}
\orcid{0009-0001-9626-4742}
\affiliation{%
  \institution{\'{E}cole Normale Sup\'{e}riere Paris-Saclay}
  \city{Gif-sur-Yvette}
  \country{France}
}
\affiliation{%
  \institution{National Institute for Informatics}
  \city{Tokyo}
  \country{Japan}
}

\renewcommand{\shortauthors}{Hasegawa and Lechenne}
\authorsaddresses{}

\begin{abstract}
We investigate a class of combinatory algebras, called {\em ribbon combinatory algebras}, in which we can interpret both the braided untyped linear lambda calculus
and framed oriented tangles.  Any reflexive object in a ribbon category gives rise to
a ribbon combinatory algebra. Conversely, From a ribbon combinatory algebra, we
can construct a ribbon category with a reflexive object, from which the combinatory
algebra can be recovered. To show this, and also to give the equational characterisation of ribbon combinatory algebras, we make use of the internal PRO
construction developed in Hasegawa's recent work. 
Interestingly, we can characterise ribbon combinatory algebras in two different ways: as balanced combinatory algebras
with a trace combinator, and as balanced combinatory algebras with duality.
\end{abstract}

\maketitle

\renewcommand{\thefootnote}{\fnsymbol{footnote}}
\footnote[0]{
\copyright 2024 Copyright held by the authors. 
This is the authors' version of the work to appear in Proc. 39th Annual ACM/IEEE Symposium on Logic in Computer Science (LICS '24).
The definitive version will be available from \url{https://doi.org/10.1145/3661814.3662098}.}
\renewcommand{\thefootnote}{\arabic{footnote}}

\section{Introduction}

Since the 1980s, there has been a huge amount of research focusing on the
{\em graphical} formulations of the lambda calculus and related systems,
both from theoretical and practical sides. Major themes include
term graphs and graph reductions \cite{Plu99}, proof nets \cite{Gir87}, and Geometry of Interaction \cite{HS11}.  
Most of them are {\em combinatorial}, in the sense that they concern the graph-theoretic
formalism and techniques. On the other hand, the {\em geometric} perspective 
has been less studied. For instance, one could ask: how can we implement such graphical structures in the three-dimensional space? Do they form meaningful geometric entities?

As an example, let us consider the combinator $\CC=\lambda fxy.f\,y\,x$,
that can be represented, under the rooted trivalent graph interpretation developed in \cite{ZG15,Zei16,Zei18} as
\begin{center}
\unitlength=.7pt
\begin{picture}(100,108)
\thicklines
\put(50,60){\circle{80}}
\symm{10}{40}{20}{20}
\put(50,20){\vector(0,-1){20}}
\put(45,100){\vector(-1,0){0}}
\put(10,55){\vector(0,-1){0}}
\put(90,68){\vector(0,1){0}}
\put(35,23){\vector(2,-1){0}}
\put(78,30){\vector(2,1){0}}
\put(80,41){\vector(3,-1){0}}
\put(80,79){\vector(3,1){0}}
\put(14,80){\lamnode}
\put(14,40){\lamnode}
\put(86,80){\appnode}
\put(86,40){\appnode}
\put(50,20){\lamnode}
\put(55,0){\tiny$\lambda fxy.fy\,x$}
\put(-15,15){\tiny$\lambda xy.fy\,x$}
\put(-37,60){\tiny$\lambda y.fy\,x$}
\put(40,108){\tiny$fy\,x$}
\put(95,60){\tiny$fy$}
\put(70,15){\tiny$f$}
\put(25,37){\tiny$x$}
\put(25,67){\tiny$y$}
\end{picture}    
\end{center}
where the lambda abstraction and application are expressed by the nodes \makebox(7,5)[c]{$\lamnode$} and \makebox(7,5)[c]{$\appnode$} respectively
(see the pictures below left), 
which obey the $\beta$-rule (below right).
\begin{center}
\unitlength=.7pt
\begin{picture}(70,50)(-10,0)
\thicklines
\lineseg{0}{50}{25}{30}
\lineseg{50}{50}{25}{30}
\lineseg{25}{30}{25}{0}
\put(25,30){\vector(0,-1){25}}
\put(25,30){\vector(5,4){20}}
\put(0,50){\vector(5,-4){15}}
\put(25,30){\lamnode}
\put(-4,40){\makebox(0,0){\scriptsize$M$}}
\put(50,40){\makebox(0,0){\scriptsize$x$}}
\put(45,10){\makebox(0,0){\scriptsize$\lambda x.M$}}
\end{picture} 
\begin{picture}(50,50)
\end{picture}
\begin{picture}(70,50)(-10,0)
\thicklines
\lineseg{0}{50}{25}{30}
\lineseg{50}{50}{25}{30}
\lineseg{25}{30}{25}{0}
\put(25,30){\vector(0,-1){25}}
\put(50,50){\vector(-5,-4){15}}
\put(0,50){\vector(5,-4){15}}
\put(25,30){\appnode}
\put(-4,40){\makebox(0,0){\scriptsize$M$}}
\put(54,40){\makebox(0,0){\scriptsize$N$}}
\put(42,10){\makebox(0,0){\scriptsize$M\,N$}}
\end{picture} 
\begin{picture}(100,50)
\end{picture}
\begin{picture}(85,50)
\thicklines
\lineseg{0}{0}{25}{25}
\lineseg{0}{50}{25}{25}
\lineseg{25}{25}{60}{25}
\lineseg{60}{25}{85}{50}
\lineseg{60}{25}{85}{0}
\put(25,25){\vector(-1,1){20}}
\put(0,0){\vector(1,1){20}}
\put(40,25){\vector(1,0){10}}
\put(60,25){\vector(1,-1){20}}
\put(85,50){\vector(-1,-1){20}}
\put(25,25){\lamnode}
\put(60,25){\appnode}
\end{picture}
\begin{picture}(30,50)
\put(15,25){\makebox(0,0){$=_\beta$}}
\end{picture}
\begin{picture}(85,50)
\thicklines
\qbezier(0,50)(42.5,20)(85,50)
\qbezier(0,0)(42.5,30)(85,0)
\put(35,35){\vector(-1,0){0}}
\put(50,15){\vector(1,0){0}}
\end{picture}
\end{center}
In the three-dimensional space, the graph of $\CC$ can have several implementations which are topologically different. For example:
\begin{center}
\unitlength=.7pt
\begin{picture}(100,120)
\thicklines
\put(50,60){\circle{80}}
\braid{10}{40}{20}{20}
\put(50,20){\vector(0,-1){20}}
\put(45,100){\vector(-1,0){0}}
\put(10,55){\vector(0,-1){0}}
\put(90,68){\vector(0,1){0}}
\put(35,23){\vector(2,-1){0}}
\put(78,30){\vector(2,1){0}}
\put(80,41){\vector(3,-1){0}}
\put(80,79){\vector(3,1){0}}
\put(14,80){\lamnode}
\put(14,40){\lamnode}
\put(86,80){\appnode}
\put(86,40){\appnode}
\put(50,20){\lamnode}  
\put(50,120){\makebox(0,0){$\CC^+$}}
\end{picture} 
\hspace{3mm}
\begin{picture}(100,100)
\thicklines
\put(50,60){\circle{80}}
\braidInv{10}{40}{20}{20}
\put(50,20){\vector(0,-1){20}}
\put(45,100){\vector(-1,0){0}}
\put(10,55){\vector(0,-1){0}}
\put(90,68){\vector(0,1){0}}
\put(35,23){\vector(2,-1){0}}
\put(78,30){\vector(2,1){0}}
\put(80,41){\vector(3,-1){0}}
\put(80,79){\vector(3,1){0}}
\put(14,80){\lamnode}
\put(14,40){\lamnode}
\put(86,80){\appnode}
\put(86,40){\appnode}
\put(50,20){\lamnode}  
\put(50,120){\makebox(0,0){$\CC^-$}}
\end{picture}    
\hspace{3mm}
\begin{picture}(200,100)
\thicklines
\put(100,60){\oval(180,80)}
\braid{10}{40}{15}{20}
\braid{70}{40}{15}{20}
\braid{130}{40}{15}{20}
\put(100,20){\vector(0,-1){20}}
\put(95,100){\vector(-1,0){0}}
\put(10,55){\vector(0,-1){0}}
\put(190,70){\vector(0,1){0}}
\put(28,22){\vector(2,-1){0}}
\put(179,26){\vector(2,1){0}}
\put(175,44){\vector(1,-1){0}}
\put(175,76){\vector(1,1){0}}
\put(14,80){\lamnode}
\put(14,40){\lamnode}
\put(186,80){\appnode}
\put(186,40){\appnode}
\put(100,20){\lamnode} 
\put(100,120){\makebox(0,0){$(\CC^+)^3$}}
\end{picture} 
\end{center}
This three dimensional nature of $\CC$ was formalised by Hasegawa \cite{Has20}
as braided terms of a new lambda calculus, called a {\em braided lambda calculus}. In that work,
variables are realised as wires, and permutations/exchanges of variables are
realised by appropriate {\em braids} \cite{Art25,KT08}. Two terms are identified modulo
the $\beta\eta$-equality which includes continuous deformations of such wires.
Thus terms are regarded as entities of low-dimensional topology. 

In the subsequent work \cite{Has22}, Hasegawa developed a general approach to combinatory algebras,
where it is shown that any planar, linear, braided and classical combinatory algebra validating the $\beta\eta$-theory of the corresponding 
lambda calculus gives rise to an {\em internally defined} monoidal category equipped with a reflexive object, and the combinatory algebra
induced by the reflexive object is isomorphic to the original combinatory algebra. The key idea is the notion of {\em arity}, which determines the
structure of the monoidal category in a canonical way. Now combinators get a geometric reading via the string diagram of the internally 
defined monoidal categories, which subsumes the case of the braided lambda calculus.
Furthermore, as noted in \cite{Has22}, 
we could have a non-standard combinator $\Tr$ 
with the following graph,
which does not correspond to the usual lambda term:
\begin{center}
\unitlength=.7pt
\begin{picture}(100,100)
\thicklines
\put(50,60){\circle{80}}
\lineseg{10}{60}{50}{60}
\arcLR{90}{40}{-40}{20}
\put(50,20){\vector(0,-1){20}}
\put(45,100){\vector(-1,0){0}}
\put(20,60){\vector(1,0){20}}
\put(90,70){\vector(0,1){0}}
\put(35,23){\vector(2,-1){0}}
\put(78,30){\vector(2,1){0}}
\put(70,41){\vector(3,-1){0}}
\put(70,79){\vector(3,1){0}}
\put(10,60){\lamnode}
\put(50,60){\lamnode}
\put(86,80){\appnode}
\put(86,40){\appnode}
\put(50,20){\lamnode}
\end{picture}    
\end{center}
$\Tr$, if combined with braided terms, can be used for creating {\em knots}:
$\beta$-reductions on $\Tr\,(\Tr\,(\CC^+)^3)$ produce a trefoil knot.
\newsavebox{\boxTr}
\savebox{\boxTr}{%
\unitlength=.5pt
\begin{picture}(80,80)(10,20)\thicklines
\put(50,60){\circle{80}}
\qbezier(10,60)(10,60)(50,60)
\qbezier(50,60)(60,40)(70,40)\qbezier(70,40)(85,40)(85,40)
\qbezier(50,60)(60,80)(70,80)\qbezier(70,80)(85,80)(85,80)
\put(40,60){\vector(1,0){0}}
\put(83,80){\vector(1,0){0}}
\put(83,40){\vector(1,0){0}}
\put(45,100){\vector(-1,0){0}}
\put(90,68){\vector(0,1){0}}
\put(30,24){\vector(1,-1){0}}
\put(76,28){\vector(2,1){0}}
\put(50,20){\lamnode}
\put(50,60){\lamnode}
\put(10,60){\lamnode}
\put(85,80){\appnode}
\put(85,40){\appnode}
\end{picture}
}%
\newsavebox{\boxbraid}
\savebox{\boxbraid}{%
\unitlength=.5pt
\begin{picture}(30,40)\thicklines
\qbezier(0,0)(10,0)(15,20)\qbezier(15,20)(20,40)(30,40)
\qbezier(0,40)(10,40)(13,25)\qbezier(17,15)(20,0)(30,0)
\end{picture}
}%
\begin{center}
\unitlength=.5pt
\begin{picture}(340,140)
\thicklines
\put(270,100){\oval(140,80)}
\qbezier(205,120)(205,120)(225,120)
\qbezier(205,80)(205,80)(225,80)
\put(225,80){\usebox{\boxbraid}}
\put(255,80){\usebox{\boxbraid}}
\put(285,80){\usebox{\boxbraid}}
\qbezier(315,120)(335,120)(335,120)
\qbezier(315,80)(335,80)(335,80)
\put(205,60){\oval(130,40)[b]}
\put(122.5,40){\oval(165,40)[b]}
\put(122.5,20){\line(0,-1){20}}
\put(0,40){\usebox{\boxTr}}
\put(100,60){\usebox{\boxTr}}
\put(90,20){\vector(1,0){0}}
\put(155,20){\vector(-1,0){0}}
\put(180,40){\vector(1,0){0}}
\put(230,40){\vector(-1,0){0}}
\put(250,60){\vector(1,0){0}}
\put(310,60){\vector(1,0){0}}
\put(330,120){\vector(1,0){0}}
\put(330,80){\vector(1,0){0}}
\put(260,140){\vector(-1,0){0}}
\put(340,108){\vector(0,1){0}}
\put(200,92){\vector(0,-1){0}}
\put(122.5,20){\vector(0,-1){25}}
\put(270,60){\lamnode}
\put(205,120){\lamnode}
\put(205,80){\lamnode}
\put(335,120){\appnode}
\put(335,80){\appnode}
\put(205,40){\appnode}
\put(122.5,20){\appnode}
\end{picture}
\begin{picture}(60,140)
\put(30,70){\makebox(0,0){$=_\beta$}}
\end{picture}
\begin{picture}(140,140)
\thicklines
\put(70,80){\oval(140,120)}
\put(70,20){\line(0,-1){20}}
\qbezier(70,120)(50,120)(50,90)\qbezier(50,90)(50,70)(67,52)
\qbezier(73,48)(85,40)(100,40)\qbezier(100,40)(120,40)(120,60)
\qbezier(120,60)(120,80)(90,90)\qbezier(90,90)(70,95)(53,90)
\qbezier(47,90)(20,80)(20,60)\qbezier(20,60)(20,40)(40,40)
\qbezier(40,40)(55,40)(70,50)\qbezier(70,50)(90,70)(90,87)
\qbezier(90,93)(90,120)(70,120)
\put(70,20){\vector(0,-1){25}}
\put(65,140){\vector(-1,0){0}}
\put(79,93){\vector(1,0){0}}
\put(70,20){\lamnode}
\end{picture}
\end{center}
Here, the application of the non-standard combinator $\Tr$ implements a (partial) braid closure, or more generally an abstract {\em trace} \cite{JSV96}.

Following these results, in this paper, we make a further step towards the geometric reading of terms / combinators by investigating 
the structure of {\em ribbon categories} in combinatory algebras. Ribbon categories \cite{Tur94}, also known as tortile categories \cite{Shu94}, 
are monoidal categories with braids, twists and duality which can model the framed tangles (ribbons) \cite{Shu94,Yet01}, just in the same way that cartesian closed
categories can model the simply typed lambda calculus \cite{LS86}. In \cite{Has22} Hasegawa considered braids and symmetry. In this paper, we deal with
twist, trace and duality. 

As a result, we get a situation in which tangles and lambda terms/combinators can be freely mixed, which has not been studied before.
Our results are theoretical and we do not claim that they have some practical application (at least immediately). Nevertheless,
we think that this approach exhibits the less-known geometric perspectives of combinatory logic which deserve further study, and they
might lead to
some practical applications (e.g., low-level physical implementation models such as topological quantum computation \cite{Kit03,Wan10}). 
Also, we shall point out that some of the key ideas were already present in Curry's pioneering work \cite{Cur30a,Cur30b}; it seems that Curry had a
clear geometric reading of combinators in mind, which we believe is not very far from ours. 

In addition, in the presence of braids, twists and trace/duality, we find that combinatory algebras offer a much simpler formalism 
than lambda calculi in which we cannot avoid quite complicated substitution mechanisms as observed in \cite{Has20}. 

\paragraph{Organisation}
The rest of this paper is organised as follows.
In Section \ref{sec:prelim}, we recall monoidal categories and related structures to be used throughout this work.
In Section \ref{sec:intPRO}, we study how monoidal categories can be derived from combinatory algebras, following Hasegawa's approach. Section \ref{sec:ribbon} is devoted to the study of reflexive objects in ribbon categories.
Using the results of Section \ref{sec:intPRO} and \ref{sec:ribbon}, we introduce ribbon combinatory algebras in Section \ref{sec:ribbonCA}.
In Section \ref{sec:comparison} we compare a ribbon category with a reflexive object and the internal PROB of the ribbon combinatory algebra 
derived from the reflexive object. In Section \ref{sec:related} we discuss related work, including Curry's classical work.
We conclude this paper with a summary and some future work in Section \ref{sec:concl}.

\section{Monoidal categories}\label{sec:prelim}

\paragraph{Notations.}
In most cases, we write $f;g$ for the composition of arrows $f\colon A\rightarrow B$
and $g\colon B\rightarrow C$ in a category. On the other hand, we write 
$a\circ b$ for the composition $\BB\,a\,b$ in combinatory algebras and their internal PROs (Section \ref{sec:intPRO}).
Note that, in an internal PRO, $a\colon l\rightarrow m$ and $b\colon m\rightarrow n$ imply
$a\circ b\colon l\rightarrow n$
(Lemma \ref{lem:compositionality}), which is $a;b$ in the other cases and is the opposite of 
the conventional use of the symbol $\circ$.

\subsection{Monoidal categories}\label{subsec:monoidal}

A {\em monoidal category} (also called {\em tensor category}) \cite{Mac71,JS93}
is a category $\mathbb{C}$ equipped with a functor
$\otimes\colon {\mathbb{C}}\times{\mathbb{C}}\rightarrow{\mathbb{C}}$ (tensor product),
an object $I\in{\mathbb{C}}$ (tensor unit) and natural 
isomorphisms
$a_{A,B,C}\colon 
(A\otimes B)\otimes C\stackrel{\sim}{\rightarrow}A\otimes(B\otimes C)$,
$l_A\colon I\otimes A\stackrel{\sim}{\rightarrow}A$ and
$r_A\colon A\otimes I\stackrel{\sim}{\rightarrow}A$
subject to the standard coherence axioms. 
It is said to be strict if $a,l,r$ are  
identity morphisms. Whenever convenient, we will pretend as if our monoidal categories are strict;
the coherence theorem \cite{Mac71,JS93} ensures that nothing goes wrong in doing so.

In the sequel, we will make use of the {\em string diagrams}: the graphical presentation of 
morphisms in monoidal categories \cite{JS91,Sel11}. A morphism
$  f\colon A_1\otimes A_2\otimes\dots\otimes A_m\rightarrow
B_1\otimes B_2\otimes\dots\otimes B_n$ in a monoidal category
will be drawn as (to be read from left to right):
\begin{center}
\unitlength=0.5pt{\footnotesize \thicklines
\begin{picture}(130,80)(0,0)
\rgbfbox{40}{0}{50}{80}{.8}{1}{.8}{\scriptsize$f$}%
\idline{0}{70}{40} \put(-20,70){\makebox(0,0){\small$A_m$}}
\idline{0}{30}{40} \put(-20,30){\makebox(0,0){\small$A_2$}} \put(-20,45){$\vdots$}
\idline{0}{10}{40} \put(-20,10){\makebox(0,0){\small$A_1$}}
\idline{90}{70}{40} \put(145,70){\makebox(0,0){\small$B_n$}}
\idline{90}{30}{40} \put(145,30){\makebox(0,0){\small$B_2$}} \put(145,45){$\vdots$}
\idline{90}{10}{40} \put(145,10){\makebox(0,0){\small$B_1$}}
\end{picture}}
\end{center}
The composition and tensor of morphisms are depicted as follows.
\begin{center}
\unitlength=.5pt\footnotesize
\begin{picture}(120,80)(200,-30)
\thicklines
\idline{200}{20}{20} \put(200,23){\tiny$A$}
\rgbfbox{220}{0}{30}{40}{.8}{1}{.8}{\scriptsize$f$}%
\idline{250}{20}{20} \put(255,23){\tiny$B$}
\rgbfbox{270}{0}{30}{40}{.8}{.8}{1}{\scriptsize$g$}%
\idline{300}{20}{20} \put(310,23){\tiny$C$}
\put(260,-28){\makebox(0,0){ \small$f;g\colon A\rightarrow C$}}
\end{picture}
\hspace{20mm}
\begin{picture}(70,100)(150,-10)
\thicklines
\idline{150}{25}{20} \put(150,30){\tiny$A_1$}
\rgbfbox{170}{5}{30}{40}{1}{1}{.8}{\scriptsize$f_1$}%
\idline{200}{25}{20} \put(210,30){\tiny$B_1$}
\idline{150}{65}{20} \put(150,70){\tiny$A_2$}
\rgbfbox{170}{45}{30}{40}{.8}{1}{1}{\scriptsize$f_2$}%
\idline{200}{65}{20} \put(210,70){\tiny$B_2$}
\put(185,-10){\makebox(0,0){ \small$f_1\otimes f_2\colon A_1\otimes A_2\rightarrow B_1\otimes B_2$}}
\end{picture}
\end{center}

\paragraph{PROs}
A {\em PRO}, meaning a "PROduct category", is a strict monoidal category generated by 
a single object. We identify objects of a PRO with natural numbers $\mathbb{N}$; the tensor product of objects $m$ and $n$ 
is the addition $m+n$, while the tensor unit $I$ is $0$. Thus a PRO $\mathbb{C}$ can be defined
as an $\mathbb{N}^2$-indexed family $\mathbb{C}(m,n)$ with the following constructs
\begin{itemize}
\item the identity $\id_n\in\mathbb{C}(n,n)$
\item the composition $f;g\in\mathbb{C}(l,n)$ for $f\in\mathbb{C}(l,m)$, $g\in\mathbb{C}(m,n)$
\item the tensor $f_1+f_2\in\mathbb{C}(m_1+m_2,n_1+n_2)$ for $f_i\in\mathbb{C}(m_i,n_i)$ 
\end{itemize}
which are subject to the following axioms.
$$
\begin{array}{lr}
\multicolumn{2}{l}{\id_m;f=f=f;\id_n\hspace{2mm}(f\in\mathbb{C}(m,n))}\\
\multicolumn{2}{l}{(f;g);h = f;(g;h)\hspace{2mm}(f\in\mathbb{C}(l,m),~g\in\mathbb{C}(m,n),~h\in\mathbb{C}(n,p))}\\
f+\id_0 = f = \id_0+f \\ 
(f_1+f_2)+f_3=f_1+(f_2+f_3) \\ 
\id_m+\id_n = \id_{m+n}\\
(f_1+f_2);(g_1+g_2) = (f_1;g_1)+(f_2;g_2) & (f_i\in\mathbb{C}(l_i,m_i),~g_i\in\mathbb{C}(m_i,n_i))
\end{array}
$$
From these axioms, we have the {\em interchange law}
$$
(f+\id_{m'});(\id_n+g)=f+g=(\id_m+g);(f+\id_{n'})
$$
for $f\in\mathbb{C}(m,n)$ and $g\in\mathbf{C}(m',n')$.
Indeed, instead of $f+g$ for arrows $f$ and $g$, we may use $f+m$ $(=f+\id_m)$ and $m+f$ $(=\id_m+f)$ as primitive constructs, 
with slightly simplified axioms $f+0=f=0+f$, $(m+f)+m'=m+(f+m')$, $(m+m')+f=m+(m'+f)$, $(f+m)+m'=f+(m+m')$
and the interchange law 
$(f+m');(n+g)=(m+g);(f+n')$
for defining PROs.
$f+g$ is then defined either as $(f+m');(n+g)$ or $(m+g);(f+n')$.
This alternative characterisation will be useful in developing PROs in combinatory algebras.

\subsection{Braids and twists}\label{subsec:braids}

A {\em braid} \cite{JS93} in a monoidal category is a natural isomorphism 
$ \sigma_{A,B}\colon A\otimes B\stackrel{\cong}{\rightarrow} B\otimes A$
such that both $\sigma$ and $\sigma^{-1}$ 
satisfy the following ``bilinearity'' or ``Hexagon Axiom"
(the case for $\sigma^{-1}$ is omitted):
\begin{center}
\unitlength=1.2pt
\begin{picture}(200,70)(0,5)
\put(50,60){\vector(1,0){25}} \put(62.5,65){\makebox(0,0){\footnotesize$a_{A,B,C}$}}
\put(125,60){\vector(1,0){25}} \put(137.5,65){\makebox(0,0){\footnotesize$\sigma_{A,B\otimes C}$}}
\put(50,20){\vector(1,0){25}} \put(62.5,15){\makebox(0,0){\footnotesize$a_{B,A,C}$}}
\put(125,20){\vector(1,0){25}} \put(137.5,15){\makebox(0,0){\footnotesize$B\otimes \sigma_{A,C}$}}
\put(25,52.5){\vector(0,-1){25}} \put(9,40){\makebox(0,0){\footnotesize$\sigma_{A,B}\otimes C$}}
\put(175,52.5){\vector(0,-1){25}} \put(187,40){\makebox(0,0){\footnotesize$a_{B,C,A}$}}
\put(25,60){\makebox(0,0){$(A\otimes B)\otimes C$}}
\put(100,60){\makebox(0,0){$A\otimes (B\otimes C)$}}
\put(175,60){\makebox(0,0){$(B\otimes C)\otimes A$}}
\put(25,20){\makebox(0,0){$(B\otimes A)\otimes C$}}
\put(100,20){\makebox(0,0){$B\otimes (A\otimes C)$}}
\put(175,20){\makebox(0,0){$B\otimes (C\otimes A)$}}
\end{picture}
\end{center}
A {\em braided monoidal category} is a monoidal category equipped with a braid.

A {\em twist} or {\em balance} \cite{JS93,Shu94} for a braided monoidal category is a natural isomorphism
$\theta_A\colon A\stackrel{\cong}{\rightarrow}A$ satisfying 
$\theta_{A\otimes B}=\sigma_{A,B};(\theta_B\otimes\theta_A);\sigma_{B,A}$.
A {\em balanced monoidal category} is a braided monoidal category with a twist.

The braids and twists will be depicted in the string diagrams (oriented from left to right)
as follows (cf. \cite{Has09}).
\unitlength=.6pt
$$
\begin{array}{cccccc}
\sigma_{X,Y}
&
\sigma_{X,Y}^{-1}
&
\theta_X
&
\theta_X^{-1}
\\
\\
\begin{picture}(60,20)(-10,0)
\thicklines
\braid{10}{0}{5}{10}
\put(5,0){\makebox(0,0){\scriptsize$X$}}
\put(5,20){\makebox(0,0){\scriptsize$Y$}}
\put(35,0){\makebox(0,0){\scriptsize$Y$}}
\put(35,20){\makebox(0,0){\scriptsize$X$}}
\end{picture}
&
\begin{picture}(60,20)(-10,0)
\thicklines
\braidInv{10}{0}{5}{10}
\put(5,0){\makebox(0,0){\scriptsize$Y$}}
\put(5,20){\makebox(0,0){\scriptsize$X$}}
\put(35,0){\makebox(0,0){\scriptsize$X$}}
\put(35,20){\makebox(0,0){\scriptsize$Y$}}
\end{picture}
&
\begin{picture}(60,20)(-10,0)
\thicklines
\twist{10}{10}
\put(5,10){\makebox(0,0){\scriptsize$X$}}
\put(35,10){\makebox(0,0){\scriptsize$X$}}
\end{picture}
&
\begin{picture}(60,20)(-10,0)
\thicklines
\twistInv{10}{10}
\put(5,10){\makebox(0,0){\scriptsize$X$}}
\put(35,10){\makebox(0,0){\scriptsize$X$}}
\end{picture}
\\
\end{array}
$$
A braid $\sigma$ is called a {\em symmetry} when it satisfies $\sigma_{X,Y}^{-1}=\sigma_{Y,X}$ for all $X$ and $Y$,
which will be depicted as
\begin{picture}(60,20)(-10,0)
\thicklines
\symm{10}{0}{5}{10}
\put(5,0){\makebox(0,0){\scriptsize$X$}}
\put(5,20){\makebox(0,0){\scriptsize$Y$}}
\put(35,0){\makebox(0,0){\scriptsize$Y$}}
\put(35,20){\makebox(0,0){\scriptsize$X$}}
\end{picture}.
A {\em symmetric monoidal category} is a monoidal category equipped with a symmetry.
Note that a balanced monoidal category is symmetric exactly when its twist is the identity.

\paragraph{PROPs and PROBs}
A {\em PROP}, meaning a "PROduct and Permutation category", is a PRO equipped with a symmetry,
thus a strict symmetric monoidal category generated by 
a single object. In a PROP, its symmetry $\sigma$ is determined by its $(1,1)$-component $\sigma_{1,1}\colon 2\rightarrow 2$
as the Hexagon axiom implies $\sigma_{l,m+n}=(\sigma_{l,m}+n);(m+\sigma_{l,n})$ and 
$\sigma_{m+n,1}=(m+\sigma_{n,1});(\sigma_{m,1}+n)$. We may define a PROP as a PRO equipped with $\sigma\colon 2\rightarrow 2$
such that the {\em Coxter relations} $\sigma;\sigma=\id_2$ and $(1+\sigma);(\sigma+1);(1+\sigma)=(\sigma+1);(1+\sigma);(\sigma+1)$
\begin{center}
\begin{picture}(40,40)
\thicklines
\symm{0}{10}{5}{10}
\symm{20}{10}{5}{10}
\end{picture}
\begin{picture}(20,40)
\put(10,20){\makebox(0,0){$=$}}
\end{picture}
\begin{picture}(30,40)
\thicklines
\lineseg{0}{30}{30}{30}
\lineseg{0}{10}{30}{10}
\end{picture}
\hspace{15mm}
\begin{picture}(60,40)
\thicklines
\symm{0}{20}{5}{10}
\symm{20}{0}{5}{10}
\symm{40}{20}{5}{10}
\lineseg{0}{0}{20}{0}
\lineseg{20}{40}{40}{40}
\lineseg{40}{0}{60}{0}
\end{picture}
\begin{picture}(20,40)
\put(10,20){\makebox(0,0){$=$}}
\end{picture}
\begin{picture}(60,40)
\thicklines
\symm{0}{0}{5}{10}
\symm{20}{20}{5}{10}
\symm{40}{0}{5}{10}
\lineseg{0}{40}{20}{40}
\lineseg{20}{0}{40}{0}
\lineseg{40}{40}{60}{40}
\end{picture}
\end{center}
hold and $\sigma_{m,n}\colon m+n\rightarrow n+m$ induced by $\sigma=\sigma_{1,1}$ satisfies the naturality $(f+g);\sigma_{n,n'}=\sigma_{m,m'};(g+f)$. 

Similarly, a {\em PROB} is a PRO equipped with a braid, i.e., a strict braided monoidal category with a single generating object.
It can be defined as a PRO equipped with $\sigma^+,\sigma^-\colon 2\rightarrow 2$
subject to the Coxter relations (or {\em Reidemeister move II and III}) $\sigma^\pm;\sigma^\mp=\id_2$ and 
$(1+\sigma^\pm);(\sigma^\pm+1);(1+\sigma^\pm)=(\sigma^\pm+1);(1+\sigma^\pm);(\sigma^\pm+1)$
\begin{center}
\begin{picture}(40,40)
\thicklines
\braid{0}{10}{5}{10}
\braidInv{20}{10}{5}{10}
\end{picture}
\begin{picture}(20,40)
\put(10,20){\makebox(0,0){$=$}}
\end{picture}
\begin{picture}(30,40)
\thicklines
\lineseg{0}{30}{30}{30}
\lineseg{0}{10}{30}{10}
\end{picture}
\begin{picture}(20,40)
\put(10,20){\makebox(0,0){$=$}}
\end{picture}
\begin{picture}(40,40)
\thicklines
\braidInv{0}{10}{5}{10}
\braid{20}{10}{5}{10}
\end{picture}
\begin{picture}(40,40)
\put(10,20){}
\end{picture}
\begin{picture}(60,45)
\thicklines
\braid{0}{20}{5}{10}
\braid{20}{0}{5}{10}
\braid{40}{20}{5}{10}
\lineseg{0}{0}{20}{0}
\lineseg{20}{40}{40}{40}
\lineseg{40}{0}{60}{0}
\end{picture}
\begin{picture}(20,40)
\put(10,20){\makebox(0,0){$=$}}
\end{picture}
\begin{picture}(60,40)
\thicklines
\braid{0}{0}{5}{10}
\braid{20}{20}{5}{10}
\braid{40}{0}{5}{10}
\lineseg{0}{40}{20}{40}
\lineseg{20}{0}{40}{0}
\lineseg{40}{40}{60}{40}
\end{picture}
\hspace{5mm}
\begin{picture}(60,40)
\thicklines
\braidInv{0}{20}{5}{10}
\braidInv{20}{0}{5}{10}
\braidInv{40}{20}{5}{10}
\lineseg{0}{0}{20}{0}
\lineseg{20}{40}{40}{40}
\lineseg{40}{0}{60}{0}
\end{picture}
\begin{picture}(20,40)
\put(10,20){\makebox(0,0){$=$}}
\end{picture}
\begin{picture}(60,40)
\thicklines
\braidInv{0}{0}{5}{10}
\braidInv{20}{20}{5}{10}
\braidInv{40}{0}{5}{10}
\lineseg{0}{40}{20}{40}
\lineseg{20}{0}{40}{0}
\lineseg{40}{40}{60}{40}
\end{picture}
\end{center}
and the naturality of $\sigma_{m,n}\colon m+n\rightarrow n+m$ determined from $\sigma_{1,1}=\sigma^+$.

In addition, we can talk about {\em balanced PROB}: it is a PROB equipped with $\theta^+,\theta^-\colon 1\rightarrow 1$
satisfying $\theta^\pm;\theta^\mp=\id_1$ and the naturality of $\theta_m\colon m\rightarrow m$ determined by 
$\theta_0=\id_0$,
$\theta_1=\theta^+$ and
$\theta_{m+n}=\sigma_{m,n};(\theta_n+m);\sigma_{n,m};(\theta_m+n)$.

\subsection{Duality and trace}

\paragraph{Duality.}
In a monoidal category, a  {\em dual pairing} between two objects
$A$ and $B$ is a pair of morphisms $\eta\colon I\rightarrow A\otimes B$, called unit, 
and $\varepsilon\colon B\otimes A\rightarrow I$, called counit, 
depicted in the string diagrams
as follows 
\unitlength=.6pt
$$
\renewcommand{\arraystretch}{.5}
\begin{array}{cccccc}
\eta
&
\varepsilon
\\
\\
\begin{picture}(40,20)(5,0)
\thicklines
\arcLR{20}{0}{-10}{10}
\lineseg{20}{20}{30}{20}
\lineseg{20}{0}{30}{0}
\put(35,0){\makebox(0,0){\scriptsize$A$}}
\put(35,20){\makebox(0,0){\scriptsize$B$}}
\end{picture}
&
\begin{picture}(40,20)(-5,0)
\thicklines
\arcLR{20}{0}{10}{10}
\lineseg{20}{20}{10}{20}
\lineseg{20}{0}{10}{0}
\put(0,20){\makebox(0,0){\scriptsize$A$}}
\put(0,0){\makebox(0,0){\scriptsize$B$}}
\end{picture}
\\
\\
\end{array}
$$
which are subject to the snake equations:
\begin{center}
\unitlength=.6pt
\begin{picture}(60,50)
\thicklines
\arcLR{20}{0}{-10}{10}
\arcLR{40}{20}{10}{10}
\lineseg{0}{40}{40}{40}
\lineseg{20}{20}{40}{20}
\lineseg{20}{0}{60}{0}
\put(5,46){\makebox(0,0){\scriptsize$A$}}
\put(30,26){\makebox(0,0){\scriptsize$B$}}
\put(55,6){\makebox(0,0){\scriptsize$A$}}
\end{picture}   
\begin{picture}(40,50)
\put(20,20){\makebox(0,0){$=$}}
\end{picture}
\begin{picture}(40,50)
\thicklines
\lineseg{0}{20}{40}{20}
\put(5,26){\makebox(0,0){\scriptsize$A$}}
\put(35,26){\makebox(0,0){\scriptsize$A$}}
\end{picture}
\begin{picture}(80,50)
\end{picture}
\begin{picture}(60,50)
\thicklines
\arcLR{20}{20}{-10}{10}
\arcLR{40}{0}{10}{10}
\lineseg{0}{0}{40}{0}
\lineseg{20}{20}{40}{20}
\lineseg{20}{40}{60}{40}
\put(5,6){\makebox(0,0){\scriptsize$B$}}
\put(30,26){\makebox(0,0){\scriptsize$A$}}
\put(55,46){\makebox(0,0){\scriptsize$B$}}
\end{picture}   
\begin{picture}(40,50)
\put(20,20){\makebox(0,0){$=$}}
\end{picture}
\begin{picture}(40,50)
\thicklines
\lineseg{0}{20}{40}{20}
\put(5,26){\makebox(0,0){\scriptsize$B$}}
\put(35,26){\makebox(0,0){\scriptsize$B$}}
\end{picture}
\end{center}
In such a dual pairing, $B$ is called the {\em left dual} of $A$ while $A$ is the {\em right dual} of $B$.
For any object, its left (or right) dual, if exists, is unique up to isomorphism.
A {\em left autonomous category} is a monoidal category $\mathbb{C}$ in which
every object $X$ has a (chosen) left dual $X^*$ with unit $\eta_X\colon I\rightarrow X\otimes X^*$ and counit $\varepsilon_X\colon X^*\otimes X\rightarrow I$.
The map $X\mapsto X^*$ extends to a contravariant functor $(\_)^*\colon \mathbb{C}^\mathrm{op}\rightarrow\mathbb{C}$. 
A symmetric left autonomous category is called a {\em compact closed category} \cite{KL80}.

\paragraph{Ribbon categories.}

A {\em ribbon category} \cite{Tur94}, also called a {\em tortile monoidal category} \cite{Shu94}, is a balanced monoidal category which is left autonomous and satisfies $(\theta_X)^*=\theta_{X^*}$.
Ribbon categories are of great importance in low-dimensional topology and representation theory of quantum groups,
since the free ribbon category is equivalent to the category of (oriented framed) tangles \cite{Shu94,Yet01}, and many useful ribbon categories
are obtained as categories of representations of quantum groups \cite{Has12,Tur94,Yet01}.
Compact closed categories are ribbon categories whose braid is symmetric and whose twist is the identity.

\paragraph{Trace.}

A (right) {\em trace} on a balanced monoidal category $\mathbb{C}$ is a family of maps 
$\mathit{Tr}^X_{A,B}\colon \mathbb{C}(A\otimes X,B\otimes X)\rightarrow\mathbb{C}(A,B)$
subject to a few coherence axioms \cite{JSV96}. A {\em traced monoidal category} is a balanced
monoidal category equipped with a trace.

If $\mathbb{C}$ is a ribbon category,
it has a unique trace \cite{Has09,JSV96} given by
$$
\mathit{Tr}^X_{A,B}(f)=
(A\otimes\eta_A);
(f\otimes X^*);
(B\otimes(\sigma_{X,X'};(X^*\otimes\theta_X);\varepsilon_X)).
$$
\begin{center}
\unitlength=.7pt
\begin{picture}(70,55)(-5,0)
\thicklines
\lineseg{0}{5}{20}{5}
\lineseg{0}{25}{20}{25}
\lineseg{40}{5}{60}{5}
\lineseg{40}{25}{60}{25}
\rgbfbox{20}{0}{20}{30}{.5}{1}{.5}{\scriptsize$f$}%
\put(-5,25){\makebox(0,0){\scriptsize$X$}}
\put(-5,5){\makebox(0,0){\scriptsize$A$}}
\put(65,25){\makebox(0,0){\scriptsize$X$}}
\put(65,5){\makebox(0,0){\scriptsize$B$}}
\end{picture}
\begin{picture}(60,55)
\put(30,20){\makebox(0,0){$\mapsto$}}
\end{picture}
\begin{picture}(60,55)
\thicklines
\arcLR{10}{25}{-10}{10}
\lineseg{10}{45}{50}{45}
\arcLR{50}{25}{10}{10}
\lineseg{0}{5}{20}{5}
\lineseg{10}{25}{20}{25}
\lineseg{40}{5}{60}{5}
\lineseg{40}{25}{50}{25}
\rgbfbox{20}{0}{20}{30}{.5}{1}{.5}{\scriptsize$f$}%
\put(13,30){\makebox(0,0){\scriptsize$X$}}
\put(47,30){\makebox(0,0){\scriptsize$X$}}
\put(-5,5){\makebox(0,0){\scriptsize$A$}}
\put(65,5){\makebox(0,0){\scriptsize$B$}}
\end{picture}
\begin{picture}(40,55)
\put(20,20){\makebox(0,0){$=$}}
\end{picture}
\begin{picture}(100,55)
\thicklines
\arcLR{10}{25}{-10}{10}
\lineseg{10}{45}{50}{45}
\braid{50}{25}{5}{10}
\twist{70}{45}
\arcLR{90}{25}{10}{10}
\lineseg{0}{5}{20}{5}
\lineseg{10}{25}{20}{25}
\lineseg{70}{25}{90}{25}
\lineseg{40}{25}{50}{25}
\lineseg{40}{5}{100}{5}
\rgbfbox{20}{0}{20}{30}{.5}{1}{.5}{\scriptsize$f$}%
\put(10,47){\scriptsize$X^*$}
\put(10,27){\scriptsize$X$}
\put(42,47){\scriptsize$X^*$}
\put(42,27){\scriptsize$X$}
\put(65,47){\scriptsize$X$}
\put(75,27){\scriptsize$X^*$}
\put(87,47){\scriptsize$X$}
\put(-5,5){\makebox(0,0){\scriptsize$A$}}
\put(105,5){\makebox(0,0){\scriptsize$B$}}
\end{picture}
\end{center}
Hence a ribbon category is also a traced monoidal category.
While not all traced monoidal categories are ribbon,
any traced monoidal category can be embedded in 
a ribbon category \cite{JSV96}. Specifically, given a traced monoidal
category $\mathbb{C}$, one can construct a ribbon category $\mathbf{Int}\mathbb{C}$
with a fully faithful structure-preserving functor from $\mathbb{C}$ to $\mathbf{Int}\mathbb{C}$.
So traced monoidal categories can be characterised as the monoidal full subcategories of ribbon categories.

\subsection{Closed structure and reflexive objects}

We say a monoidal category $\mathbb{C}$ is {\em closed} when the functor $(\_)\otimes A\colon \mathbb{C}\rightarrow\mathbb{C}$ has a right adjoint
$A\multimap(\_)\colon \mathbb{C}\rightarrow\mathbb{C}$, hence $\mathbb{C}(X\otimes A,B)\cong\mathbb{C}(X,A\multimap B)$. 
In particular, left autonomous categories and ribbon categories are closed with $A\multimap B= B\otimes A^*$. 

In a 
monoidal category $\mathbb{C}$
(which does not have to be closed), we say an object $A$ is {\em (strictly) reflexive} when the
functor $\mathbb{C}(\_\otimes A,A)$ is represented by $A$ itself, thus there exists a natural isomorphism
$\mathsf{cur}\colon \mathbb{C}(\_\otimes A,A)\stackrel{\cong}{\rightarrow}\mathbb{C}(\_,A)$. When $\mathbb{C}$ is closed,
$A$ is a reflexive object if and only if $A$ is isomorphic to $A\multimap A$.
Using a reflexive object, we can model the abstraction and application in the untyped lambda calculus as follows.
For $f\colon X\otimes A\rightarrow A$, its currying $\mathsf{cur}(f)\colon X\rightarrow A$ amounts to the lambda abstraction. For the application,
we have $\mathsf{app}=\mathsf{cur}^{-1}(\mathit{id}_A)\colon A\otimes A\rightarrow A$. Then the $\beta$-rule  $(\mathsf{cur}(f)\otimes X);\mathsf{app}=f$
holds for $\mathsf{cur}^{-1}(\mathsf{cur}(f))=f$ and the naturality of $\mathsf{cur}$.
The $\eta$-rule is immediate as $\mathsf{cur}(\mathsf{cur}^{-1}(g))=g$.
Alternatively, a reflexive object can be defined as an object $A$ equipped with a morphism $\mathsf{app}\colon A\otimes A\rightarrow A$
such that, for any $f\colon X\otimes A\rightarrow A$, there exists a unique $\mathsf{cur}(f)\colon X\rightarrow A$ satisfying 
$(\mathsf{cur}(f)\otimes A);\mathsf{app}=f$. Note that the naturality $h;\mathsf{cur}(f)=\mathsf{cur}((h\otimes A);f)$ ($h\colon X\rightarrow Y$, $f\colon Y\otimes A\rightarrow A$)
follows from the uniqueness of the currying.

(In the literature, it is more common to define a reflexive object to be an object $A$ such that  $A\multimap A$ is a retract of $A$,
which can model the $\beta$-theory of the lambda calculus.
In this paper, we focus on the extensional (i.e., $\beta\eta$) cases, and by a reflexive object we mean a strict one as defined above.)

\paragraph{Reflexive PROs}

We are particularly interested in PROs in which $1$ is a reflexive object, which gives a neat setting to discuss the untyped lambda calculus
and combinatory algebras.
We call such a PRO a {\em reflexive PRO}. Thus a reflexive PRO is a PRO equipped with $\mathsf{app}\colon 2\rightarrow 1$ and
a unique $\mathsf{cur}(f)\colon n\rightarrow 1$ satisfying $(\mathsf{cur}(f)+1);\mathsf{app}=f$ for $f\colon n+1\rightarrow 1$. 
In particular, for any $f\colon n\rightarrow 1$, $\mathsf{cur}^n(f)\colon 0\rightarrow 1$ is a unique element such that
$(\mathsf{cur}^n(f)+n);(\mathsf{app}+(n-1));\dots;(\mathsf{app}+1);\mathsf{app}=f$ holds.

In this paper ribbon categories with a reflexive object will play a central role. 
For a concrete nontrivial (non-symmetric) example of such a ribbon category, we refer to \cite{Has20}.
A combinatory algebra arising from such an example will be described in Example \ref{ex:trees}. 

\section{Monoidal structures in combinatory algebras}\label{sec:intPRO}

Hasegawa \cite{Has22} has shown that a combinatory algebra arises from a reflexive object in a monoidal category if 
and only if it has a monoidal category with a reflexive object {internally defined} in the combinatory algebra itself. The basic idea 
follows from the standard practice  of encoding tuples of terms $M_1,\dots, M_n$ by $\lambda f.f\,M_1\,\dots\,M_n$ in the lambda calculus
\cite{Bar84,HS08}:
we use a combinator corresponding to the closed lambda term of the form 
$$\lambda fx_1\dots x_m.f\,M_1\,\dots M_n$$ 
as a
morphism with $m$ inputs and $n$ outputs, where the variable $f$, expressing the evaluation context or the continuation of the
process, is not free in $M_i$'s. Intuitively, such a combinator takes a continuation $f$ and inputs $x_1,\dots x_m$, 
and returns the outputs $M_1,\dots,M_n$ to the continuation.
We say that such a combinator $a$ is {\em of arity} $m\rightarrow n$ and express this as $a\colon m\rightarrow n$.
While not all combinators have arity (e.g., the combinator corresponding to $\lambda xy.y\,x$), 
several standard combinators do have arity, including 
$\BB=\lambda fxy.f\,(x\,y)\colon 2\rightarrow 1$, $\II=\lambda f.f\colon 0\rightarrow 0$, $\CC=\lambda fxy.f\,y\,x\colon 2\rightarrow 2$,
$\SS=\lambda fxy.f\,y\,(x\,y)\colon 2\rightarrow 2$, $\KK=\lambda fx.f\colon 1\rightarrow 0$, and $\WW=\lambda fx.f\,x\,x\colon 1\rightarrow 2$.
Also, for any combinator $a$ (which may not have an arity), $\lambda f.f\,a$ is of arity $0\rightarrow 1$.
We can depict these combinators with arity using string diagrams as follows:
\begin{center}
\unitlength=.7pt
\begin{picture}(60,40)
\put(30,30){\makebox(0,0){$\BB\colon 2\rightarrow 1$}}
\thicklines
\qbezier(10,20)(28,20)(30,10)
\qbezier(10,0)(28,0)(30,10)
\myline{30}{10}{50}{10}
\rvec{17}{20}
\rvec{17}{0}
\rvec{45}{10}
\put(30,10){\appnode}
\put(2,0){\makebox(0,0){\tiny$x$}}
\put(2,20){\makebox(0,0){\tiny$y$}}
\put(60,10){\makebox(0,0){\tiny$x\,y$}}
\end{picture}
\hspace{15mm}
\begin{picture}(60,40)
\put(30,30){\makebox(0,0){$\CC\colon 2\rightarrow 2$}}
\thicklines
\qbezier(10,20)(25,20)(30,10)\qbezier(30,10)(35,0)(50,0)
\qbezier(10,0)(25,0)(30,10)\qbezier(30,10)(35,20)(50,20)
\rvec{50}{20}
\rvec{50}{0}
\put(5,0){\makebox(0,0){\tiny$x$}}
\put(5,20){\makebox(0,0){\tiny$y$}}
\put(55,0){\makebox(0,0){\tiny$y$}}
\put(55,20){\makebox(0,0){\tiny$x$}}
\end{picture}
\end{center}
The central findings of \cite{Has22} are
\begin{itemize}
\item {\bf algebraic characterisation of arities}:  a combinator $a$ with arity $m\rightarrow n$
can be characterised as the one satisfying an equation expressing that the following {\em interchange law} 
\begin{center}
\unitlength=.7pt
\begin{picture}(100,55)
\thicklines
\put(15,-5){\dashbox(30,60){}}
\put(55,-5){\dashbox(30,60){}}
\myline{0}{42.5}{100}{42.5}
\myline{0}{32.5}{100}{32.5}
\rvec{10}{42.5}
\rvec{10}{32.5}
\rvec{95}{42.5}
\rvec{95}{32.5}
\myline{0}{20}{20}{20}
\myline{0}{12.5}{20}{12.5}
\myline{0}{5}{20}{5}
\rvec{10}{20}
\rvec{10}{12.5}
\rvec{10}{5}
\myline{40}{17.5}{100}{17.5}
\myline{40}{7.5}{100}{7.5}
\rvec{95}{17.5}
\rvec{95}{7.5}
\rgbfbox{20}{0}{20}{25}{.7}{.7}{1}{$a$}%
\rgbfbox{60}{25}{20}{25}{1}{.9}{.9}{$b$}%
\put(-3,12.5){\makebox(0,0){$\Big\{$}}
\put(-10,12.5){\makebox(0,0){$m$}}
\put(103,12.5){\makebox(0,0){$\Big\}$}}
\put(110,12.5){\makebox(0,0){$n$}}
\end{picture}
\begin{picture}(40,50)
\put(20,25){\makebox(0,0){$=$}}
\end{picture}
\begin{picture}(100,50)
\thicklines
\put(15,-5){\dashbox(30,60){}}
\put(55,-5){\dashbox(30,60){}}
\myline{0}{42.5}{100}{42.5}
\myline{0}{32.5}{100}{32.5}
\rvec{10}{42.5}
\rvec{10}{32.5}
\rvec{95}{42.5}
\rvec{95}{32.5}
\myline{0}{20}{60}{20}
\myline{0}{12.5}{60}{12.5}
\myline{0}{5}{60}{5}
\rvec{10}{20}
\rvec{10}{12.5}
\rvec{10}{5}
\myline{80}{17.5}{100}{17.5}
\myline{80}{7.5}{100}{7.5}
\rvec{95}{17.5}
\rvec{95}{7.5}
\rgbfbox{60}{0}{20}{25}{.7}{.7}{1}{$a$}%
\rgbfbox{20}{25}{20}{25}{1}{.9}{.9}{$b$}%
\put(-3,12.5){\makebox(0,0){$\Big\{$}}
\put(-10,12.5){\makebox(0,0){$m$}}
\put(103,12.5){\makebox(0,0){$\Big\}$}}
\put(110,12.5){\makebox(0,0){$n$}}
\end{picture}
\end{center}
holds for any $b$;
\item {\bf internal PRO}: combinators with arity form a strict monoidal category generated by a single object $1$ (the {\em internal PRO});
\item {\bf reflexive object}: $1$ is a reflexive object in the internal PRO, with the application $\BB\colon 2\rightarrow 1$; and
\item {\bf internal representation}: the combinatory algebra derived from the reflexive object is isomorphic to the original combinatory algebra,
\end{itemize}
provided the combinatory algebra validates the $\beta\eta$-theory of the lambda calculus (i.e., they are Curry algebras).
They immediately imply a version of Scott's theorem \cite{Sco80}: any such combinatory algebra arises
 via a reflexive object in a monoidal category.
Also, we can systematically derive axioms for such combinatory algebras from these results: axioms directly follow from those of PROs
and reflexive objects.
Below we first recall the most basic case of the combinatory algebras corresponding to the {\em planar lambda calculus}; 
as shown in \cite{Has22}, cases with symmetry (the linear lambda calculus), braids (the braided lambda calculus), 
copying and discarding (the lambda calculus) are covered by adding appropriate constructs to the planar case.
In the remaining part of this section, We shall explain how we can introduce and axiomatise braids, symmetry and twists in combinatory algebras.
The main contribution of the present paper is then to show that this story can be carried out for the case with duality, i.e., ribbon categories,
which will be done in the following sections.

\subsection{Combinatory algebras}
Let us recall some basics on applicative structures and combinatory algebras
 \cite{Bar84,HS08,Sel02} which will be used in the rest of this paper. 
 A {\em applicative structure} $(\calA,\cdot)$ is a set $\calA$ equipped with
a binary map $(\_)\cdot(\_):\calA\times \calA\rightarrow \calA$ called {\em application}. 
As is customary, applications are assumed to be left associative, and the infix $\cdot$ is often omitted.
Thus $a\,b\,c$ means $(a\cdot b)\cdot c$ while $a\,(b\,c)$ is $a\cdot(b\cdot c)$.
In the literature, a {\em combinatory algebra} often refers to an {\em $\SK$-algebra}, which is an applicative structure 
equipped with elements $\SS$ and $\KK$ satisfying $\SS\,a\,b\,c=(a\,c)\,(b\,c)$ and $\KK\,a\,b=a$.
However, combinatory algebras can also be characterised more conceptually by the {\em combinatory completeness} \cite{Cur30a,Cur30b}.
For an applicative structure $\calA$, let $\calA[x_1,\dots,x_n]$ be the set of expressions (polynomials)
generated by elements of $\calA$
and indeterminates $x_1,\dots,x_n$ and application $\cdot$. 
We say $\calA$ is combinatory complete if, for any $p\in\calA[x_1,\dots,x_n,x]$, there exists
$\lambda^*x.p\in\calA[x_1,\dots,x_n]$ such that $(\lambda^*x.p)\cdot x=p$ holds; and then a combinatory algebra can be defined
as a combinatory complete applicative structure. It is well-known that these two definitions of combinatory algebras coincide:
combinatory completeness implies existence of $\SS$ and $\KK$, while it is standard to define $\lambda^*x.p$ using $\SS$ and $\KK$
\cite{Bar84,HS08}. 

In this paper, we will deal with more general classes of combinatory algebras than those recalled above. 
By focusing on combinatory completeness concerning {\em linear} polynomials (in which each indeterminate is used exactly once),
we have the linear variant of combinatory algebras, which are {\em $\BCI$-algebras} with elements $\BB$, $\CC$ and $\II$
satisfying $\BB\,a\,b\,c=a\,(b\,c)$, $\CC\,a\,b\,c=a\,c\,b$ and $\II\,a=a$. Similarly, by further restricting our attention to
{\em planar} polynomials (in which each indeterminate is used exactly once in the prescribed order), 
we have a planar variant of combinatory algebras, which will be recalled below.
Hasegawa \cite{Has22} argues that the notion of operads (one-object multicategories) gives a uniform foundation handling 
these variations and more (the braided case in particular).

\subsection{Planar combinatory algebras and their internal PROs}

The {\em planar lambda calculus} \cite{Zei16,Tom21,Has22} is an untyped linear lambda calculus in which no exchange of variables is allowed.
Its terms are generated by the following rules for variables, lambda abstractions and applications:
$$
\frac{}{x\vdash x}
\hspace{15mm}
\frac{\Gamma,x\vdash M}{\Gamma\vdash \lambda x.M}
\hspace{15mm}
\frac{\Gamma\vdash M \hspace{3mm}\Gamma'\vdash N}{\Gamma,\Gamma'\vdash M\,N}
$$
Planar terms are closed under $\beta\eta$-conversions. The most important planar terms include
$\BB=\lambda fxy.f\,(x\,y)$, $\II=\lambda f.f$ and $M^\bullet=\lambda f.f\,M$ for closed planar $M$.
On the other hand, $\CC=\lambda fxy.f\,y\,x$ is not planar.

Tomita \cite{Tom21} introduced {\em $\BIdot$-algebras} as the combinatory algebras corresponding to the planar lambda calculus.

\begin{definition}\cite{Tom21}
A {\em $\BIdot$-algebra} is an applicative structure $\calA$ with elements $\BB$, $\II$ and $a^\bullet$ for $a\in\calA$, satisfying
$\BB\,a\,b\,c=a\,(b\,c)$, 
$\II\,a=a$ and 
$a^\bullet\,b=b\,a$.
\end{definition}
Tomita has shown that a $\BIdot$-algebras is {\em planar combinatory complete}:
for a planar term $p$ with its rightmost variable $x$, there is $\lambda^*x.p$ satisfying $(\lambda^*x.p)\,x=p$.
$\lambda^*x.p$ which can be given by
$$
\begin{array}{rcll}
\lambda^*x.x &=& \II\\
\lambda^*x.p\,q &=& \BB\,q^\bullet\,(\lambda^*x.p) & (x~\mathrm{free~in}~p)\\
\lambda^*x.p\,q &=& \BB\,p\,(\lambda^*x.q) & (x~\mathrm{free~in}~q).
\end{array}
$$
$\BIdot$-algebras are central in Tomita's theory of {\em non-symmetric realizability}: the assemblies of a
$\BIdot$-algebra form a {\em closed multicategory} \cite{Tom21} which models the implicational fragment of the
non-commutative multiplicative intuitionistic linear logic.

\paragraph{Extensionality axioms and the internal PRO} 

As is well-known, combinatory completeness alone is not sufficient for interpreting the $\beta\eta$-theory of the
lambda calculus in combinatory algebras, and the same is the case for the planar lambda calculus and $\BIdot$-algebras.
Hasegawa \cite{Has22} identified additional {\em extensionality} axioms for $\BIdot$-algebras making them sound and complete for 
the planar lambda calculus with $\beta\eta$-theory. 
(This notion of extensionality differs from the standard terminology \cite{Bar84,HS08} where an applicative structure 
$A$ is said to be extensional when $a\,x=b\,x$ implies $a=b$ for all {\em elements $x$ of $\calA$}.
Our extensionality agrees with the notion of Curry-algebras; see \cite{Has22,Sel02} for relevant discussions.)

\begin{definition} A $\BIdot$-algebra is {\em extensional} when it satisfies the following axioms,
where $a\circ b$ is an abbreviation for $\BB\,a\,b$. 
\setlength{\FrameSep}{4pt}
\samepage
\begin{oframed}
$$
\begin{array}{rcll}
\BB\,\II &=& \II & (BI)\\
(a\,b)^\bullet &=& b^\bullet\circ(a^\bullet\circ\BB) & (\bullet\mathit{app})\\
\BB^\bullet\circ (\BB\circ(\BB\circ\BB)) &=& (\BB\,\BB)\circ\BB & (\mathit{arity}_\BB)\\
\II^\bullet \circ \BB &=& \BB\,\II & (\mathit{arity}_\II)\\
a^{\bullet\bullet}\circ\BB &=& (\BB\,a^\bullet)\circ\BB & (\mathit{arity}_{a^\bullet})\\
\end{array}
$$
\end{oframed}
\end{definition}

It is routine to check that reflexive objects in monoidal categories give extensional $\BIdot$-algebras:
\begin{proposition}\label{prop:refl:planar}
Suppose that $A$ is a reflexive object in a monoidal category $\mathbb{C}$.
Then $\mathbb{C}(I,A)$ is an extensional $\BIdot$-algebra with 
$a\cdot b=(a\otimes b);\mathsf{app}$,
$\BB=\mathsf{cur}(\mathsf{cur}(\mathsf{cur}((A\otimes\mathsf{app});\mathsf{app})))$, $\II=\mathsf{cur}(\id_A)$ and
$a^\bullet=\mathsf{cur}((A\otimes a);\mathsf{app})$.
\end{proposition}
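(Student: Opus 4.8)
The plan is to verify that the four $\BIdot$-algebra equations and the five extensionality axioms all hold in $\mathbb{C}(I,A)$ by unfolding the definitions of $\cdot$, $\BB$, $\II$ and $(\_)^\bullet$ in terms of $\mathsf{cur}$, $\mathsf{app}$ and the monoidal structure, and then reducing each side to the same morphism using only the $\beta$-rule $(\mathsf{cur}(f)\otimes A);\mathsf{app}=f$, the $\eta$-rule $\mathsf{cur}(\mathsf{cur}^{-1}(g))=g$, the naturality $h;\mathsf{cur}(f)=\mathsf{cur}((h\otimes A);f)$, functoriality of $\otimes$, and the interchange/bifunctoriality laws (pretending $\mathbb{C}$ is strict, as licensed in Section~\ref{subsec:monoidal}). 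It is cleanest to work with string diagrams throughout. The key preliminary computation is the combinator-application lemma: for $a,b\colon I\to A$, we have $a\cdot b=(a\otimes b);\mathsf{app}$, and more generally for a morphism $f\colon X\to A$ one gets $\mathsf{cur}(g)\cdot$-style identities. In particular I would first record the ``$\beta$ at higher arity'' fact that $\mathsf{cur}^n(f)$ applied to $n$ arguments recovers $f$, which is exactly the displayed identity for reflexive PROs in Section~\ref{subsec:monoidal} adapted to $\mathbb{C}(I,A)$; this turns each combinator equation into a diagram identity about $\mathsf{app}$.

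First I would check the three defining $\BIdot$ equations. For $\II\,a=a$: $\II\cdot a=(\mathsf{cur}(\id_A)\otimes a);\mathsf{app}$; since $a=(\id_I\otimes a)$ up to the unit isomorphism and $\mathsf{cur}(\id_A)$ is an arrow $I\to A$, the $\beta$-rule gives $(\mathsf{cur}(\id_A)\otimes A);\mathsf{app}=\id_A$, and precomposing with $a\colon I\to A$ (using bifunctoriality to slide $a$ past) yields $a$. For $a^\bullet\,b=b\,a$: $a^\bullet\cdot b=(\mathsf{cur}((A\otimes a);\mathsf{app})\otimes b);\mathsf{app}$; applying $\beta$ to the outer $\mathsf{app}$ after tensoring with $b$ gives $(b\otimes a);\mathsf{app}=b\cdot a$, where I again use bifunctoriality to rearrange $b\otimes(A\otimes a)$ appropriately. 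For $\BB\,a\,b\,c=a\,(b\,c)$: since $\BB=\mathsf{cur}^3((A\otimes\mathsf{app});\mathsf{app})$, applying it to three arguments $a,b,c$ and using $\beta$ three times collapses $\BB\cdot a\cdot b\cdot c$ to $(a\otimes b\otimes c);(A\otimes\mathsf{app});\mathsf{app}=(a\otimes((b\otimes c);\mathsf{app}));\mathsf{app}=a\cdot(b\cdot c)$, again by bifunctoriality.

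Next come the five extensionality axioms. The pattern is identical: expand $a\circ b=\BB\,a\,b$ using the $\BB$-computation above to get $a\circ b=\mathsf{cur}((A\otimes(a\otimes b;\ldots))\ldots)$—more usefully, $a\circ b$ corresponds to ``precompose-then-apply'', and crucially each axiom is an equation between two elements of $\mathbb{C}(I,A)$, so by $\eta$ it suffices to show the corresponding morphisms agree after uncurrying the appropriate number of times. Concretely, for each axiom I would uncurry both sides fully (using that $\mathsf{cur}$ is a bijection, so two currying-normal forms are equal iff their uncurryings are), push everything through the $\beta$-rule, and check that the two resulting composites of $\mathsf{app}$'s, identities, and associators/unitors coincide—which they do by bifunctoriality and the monoidal coherence. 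For instance $(BI)$, $\BB\,\II=\II$, uncurries to the claim that $(A\otimes\mathsf{cur}(\id_A));\ldots$-style composite equals $\id$-at-the-right-arity, i.e.\ $(\mathsf{cur}(\id_A)\otimes A);\mathsf{app}=\id_A$ read off one level up; $(\bullet\mathit{app})$ is the fact that the currying of $(A\otimes(a\cdot b));\mathsf{app}$ decomposes via two applications; and the three $\mathit{arity}$-axioms are precisely the arity/interchange conditions for $\BB$, $\II$, $a^\bullet$, which hold because these combinators genuinely have the stated arities $2\to1$, $0\to0$, $0\to1$ in the reflexive PRO sense—this is where I would invoke the algebraic characterisation of arity recalled just before the proposition (a combinator has arity $m\to n$ iff the displayed interchange law holds), observing that $\BB,\II,a^\bullet$ satisfy it by direct diagram chase.

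The main obstacle is purely bookkeeping: keeping the associativity/unit isomorphisms and the tensor-factor orderings straight while uncurrying, since each combinator equation involves three or more nested curryings and the two sides distribute the arguments differently. There is no conceptual difficulty—every step is forced by $\beta$, $\eta$, naturality and bifunctoriality—so the honest write-up is either a sequence of string-diagram equalities (which makes the bifunctoriality rearrangements visually obvious) or a careful term-by-term unfolding; I would choose the string-diagram route and remark that the verification is routine, matching the ``It is routine to check'' framing of the proposition, with $(\mathit{arity}_\BB)$ being the most laborious single case.
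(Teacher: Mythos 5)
The paper gives no proof of this proposition beyond the remark that it is routine to check, and your plan is exactly the intended routine verification: unfold $\cdot$, $\BB$, $\II$ and $(\_)^\bullet$, uncurry both sides of each axiom (legitimate since $\mathsf{cur}$ is a bijection), and reduce using the $\beta$/$\eta$ rules, naturality of $\mathsf{cur}$, and bifunctoriality in a strictified $\mathbb{C}$. Your case-by-case sketches (including reading the three arity axioms as instances of the interchange law witnessing the arities $2\rightarrow1$, $0\rightarrow0$, $0\rightarrow1$) are correct, so this matches the paper's approach.
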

We will soon see that all extensional $\BIdot$-algebras arise in this way (Cor. \ref{cor:Scott:planar}).

\begin{lemma} \label{lem:assoc}
In an extensional $\BIdot$-algebra, $(a\circ b)\circ c=a\circ(b\circ c)$ and $a\circ\II=a=\II\circ a$ hold.
Moreover, $\BB\,\II=\II$ and $\BB\,(a\circ b)=(\BB\,a)\circ(\BB\,b)$ hold.
\end{lemma}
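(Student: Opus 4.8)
The plan is to establish all four assertions by direct equational calculation from the axioms, but in a carefully chosen order, since one of the four ought to be proved as a corollary of another. I would deal with the two ``Moreover'' identities first. The identity $\BB\,\II=\II$ requires nothing: it is literally the axiom $(BI)$. For $\BB\,(a\circ b)=(\BB\,a)\circ(\BB\,b)$ I would take the element-equation $(\mathit{arity}_\BB)$, namely $\BB^\bullet\circ(\BB\circ(\BB\circ\BB))=(\BB\,\BB)\circ\BB$, and simply apply both sides to two arbitrary arguments $a$ and then $b$, reducing each side using only the defining equation $\BB\,p\,q\,r=p\,(q\,r)$, the equation $c^\bullet\,d=d\,c$, and the abbreviation $x\circ y=\BB\,x\,y$. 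The left-hand side $\BB^\bullet\circ(\BB\circ(\BB\circ\BB))$ then reduces, via $\BB^\bullet\,z=z\,\BB$ used once, to $\BB\,(\BB\,a)\,(\BB\,b)=(\BB\,a)\circ(\BB\,b)$, while the right-hand side $(\BB\,\BB)\circ\BB$ reduces to $\BB\,(\BB\,a\,b)=\BB\,(a\circ b)$; equating gives the claim.

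Given this, associativity is a short computation: $(a\circ b)\circ c=\BB\,(\BB\,a\,b)\,c=((\BB\,a)\circ(\BB\,b))\,c=\BB\,(\BB\,a)\,(\BB\,b)\,c=(\BB\,a)\,((\BB\,b)\,c)=\BB\,a\,(\BB\,b\,c)=a\circ(b\circ c)$, where the second equality is the identity just proved and the rest are unfoldings of $\circ$ and uses of $\BB\,p\,q\,r=p\,(q\,r)$. The left unit law is immediate, $\II\circ a=\BB\,\II\,a=\II\,a=a$, using $(BI)$ and $\II\,a=a$. The right unit law $a\circ\II=a$ is the only assertion with genuine content, and I would get it from $(\mathit{arity}_\II)$, i.e. $\II^\bullet\circ\BB=\BB\,\II$, again applied to the single argument $a$: the left-hand side gives $(\II^\bullet\circ\BB)\,a=\BB\,\II^\bullet\,\BB\,a=\II^\bullet\,(\BB\,a)=(\BB\,a)\,\II=\BB\,a\,\II=a\circ\II$ (using $\II^\bullet\,z=z\,\II$), while the right-hand side gives $(\BB\,\II)\,a=\II\,a=a$ (once more by $(BI)$ and $\II\,a=a$); hence $a\circ\II=a$.

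No individual step is technically difficult, so the real ``obstacle'' is organisational: one must resist the temptation to prove associativity directly by injectivity of $(\_)^\bullet$ and expanding $((a\circ b)\circ c)^\bullet$ and $(a\circ(b\circ c))^\bullet$ via $(\bullet\mathit{app})$, because normalising the resulting nested $\circ$-expressions already presupposes associativity; deriving it instead from $\BB\,(a\circ b)=(\BB\,a)\circ(\BB\,b)$ sidesteps this, which is precisely why the two ``Moreover'' identities are in fact used before the unit and associativity laws. The remaining care is merely bookkeeping about which of $(BI)$, the combinator equations $\BB\,a\,b\,c=a\,(b\,c)$ and $c^\bullet\,d=d\,c$, $(\mathit{arity}_\BB)$, and $(\mathit{arity}_\II)$ is invoked at each reduction; note that $(\bullet\mathit{app})$ and $(\mathit{arity}_{a^\bullet})$ play no role in this particular lemma.
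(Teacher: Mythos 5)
Your proposal is correct and follows essentially the same route as the paper: the paper's proof likewise derives $\BB\,(a\circ b)=(\BB\,a)\circ(\BB\,b)$ by applying $(\mathit{arity}_\BB)$ to two arguments and then obtains associativity as the short computation $(a\circ b)\circ c=\BB\,(a\circ b)\,c=((\BB\,a)\circ(\BB\,b))\,c=a\circ(b\circ c)$. The only difference is that the paper leaves the unit laws and $\BB\,\II=\II$ implicit, whereas you correctly supply them (the right unit law from $(\mathit{arity}_\II)$ exactly as one should).
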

\begin{proof}
$\BB\,(a\circ b)=((\BB\,\BB)\circ\BB)\,a\,b=
(\BB^\bullet\circ(\BB\circ(\BB\circ \BB)))\,a\,b=(\BB\,a)\circ(\BB\,b)$.
$(a\circ b)\circ c = \BB\,(a\circ b)\,c=((\BB\,a)\circ(\BB\,b))\,c=a\circ(b\circ c)$.    
\end{proof}
For an element $a$ of a $\BIdot$-algebra, let $a^0=\II$ and $a^{n+1}=a\circ a^n$.

\begin{definition}\label{def:arity}
An element $a$ of an extensional $\BIdot$-algebra is said to be of arity $m\rightarrow n$
(notation: $a\colon m\rightarrow n$)
when it satisfies 
$$
a^\bullet\circ\BB^{m+1}\,=\,(\BB\,a)\circ\BB^n \hspace{1cm} (\mathit{arity}_a).
$$
\end{definition}
In the combinatory algebra of closed terms of the $\lambda_{\beta\eta}$-calculus, a term is of arity $m\rightarrow n$
if and only if 
it equals a term of the form $\lambda fx_1\dots x_m.f\,M_1\dots M_n$ with no free $f$ in $M_i$'s \cite{Has22}.

\begin{lemma} \label{lem:compositionality}
In an extensional $\BIdot$-algebra, the following results hold. 
\begin{enumerate}
\item For $a\colon m\rightarrow n$ and $b$, $(\BB^m\,b)\circ a = a\circ (\BB^n\,b)$ holds.
\item 
$\BB\colon 2\rightarrow 1$, 
$\II\colon 0\rightarrow0$
and
$a^\bullet\colon 0\rightarrow 1$ for any $a\in\calA$.
\item 
If $a\colon m\rightarrow n$, then
$\BB\,a\colon m+1\rightarrow n+1$. 
\item
If $a\colon l\rightarrow m$
and $b\colon m\rightarrow n$, then
$a\circ b\colon l\rightarrow n$. 
\item 
If $a\colon m\rightarrow n$, then
$a\colon m+1\rightarrow n+1$.

\item $a\colon m\rightarrow 1$ iff $a=(a\,\II)^\bullet\circ\BB^m$.
\end{enumerate}
\end{lemma}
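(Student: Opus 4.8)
The plan is to derive all six clauses purely equationally, from the single arity identity of Definition~\ref{def:arity}, the extensionality axioms, and Lemma~\ref{lem:assoc}, in the order (1), (2), (5), then (3) and (4), then (6). As preliminaries I would record a few bookkeeping facts about $\BB$-towers that are immediate from Lemma~\ref{lem:assoc}: $\BB^1=\BB$, $\BB^k\circ\BB^l=\BB^{k+l}$ (induction on $k$ using associativity of $\circ$), and $\BB^k\,\II=\II$; these let me freely split and merge the $\BB^k$'s that occur throughout. I will be careful to use the arity equation only as a genuine equation between elements and never to slip into elementwise extensionality ($a\,x=b\,x\Rightarrow a=b$), which the paper explicitly does not assume.

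For clause (1), I would just apply both sides of $a^\bullet\circ\BB^{m+1}=(\BB\,a)\circ\BB^n$ to an arbitrary $b$ and rewrite using the defining equations $x\circ y=\BB\,x\,y$, $\BB\,x\,y\,z=x\,(y\,z)$, $x^\bullet\,y=y\,x$ and $\BB^{m+1}=\BB\circ\BB^m$: the left side becomes $(\BB^m\,b)\circ a$ and the right side $a\circ(\BB^n\,b)$. (This is the forward half of the ``interchange law'' characterisation of arity from the introduction; the converse is not needed.) For clause (2), after rewriting the $\BB^k$'s with the preliminary facts each assertion is literally one axiom: $\BB\colon 2\rightarrow1$ is $(\mathit{arity}_\BB)$, $\II\colon 0\rightarrow0$ is $(\mathit{arity}_\II)$ (using $x\circ\II=x$), and $a^\bullet\colon 0\rightarrow1$ is $(\mathit{arity}_{a^\bullet})$. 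For clause (5), I would compose the arity equation for $a\colon m\rightarrow n$ with $\BB$ on the right, push $\circ$ inward by associativity, and use $\BB^{m+1}\circ\BB=\BB^{m+2}$ and $\BB^n\circ\BB=\BB^{n+1}$.

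Clauses (3) and (4) are where the real work is. The engine is clause (1) in three guises, all instances of applying (1) to an arity-$0\rightarrow1$ combinator or to $\BB\colon 2\rightarrow1$: $\BB\circ a^\bullet=a^\bullet\circ(\BB\,\BB)$, $(\BB\,a)\circ b^\bullet=b^\bullet\circ(\BB\,(\BB\,a))$, and $\BB\circ(\BB\,a)=(\BB\,(\BB\,a))\circ\BB$. For (3) I would expand $(\BB\,a)^\bullet$ via $(\bullet\mathit{app})$, use (2) and (5) to rewrite $\BB^\bullet\circ\BB^{m+3}=(\BB\,\BB)\circ\BB^{m+1}$ (i.e.\ $\BB\colon m+2\rightarrow m+1$), and then apply the first and third commutation moves together with the hypothesis $a^\bullet\circ\BB^{m+1}=(\BB\,a)\circ\BB^n$ and associativity; both $(\BB\,a)^\bullet\circ\BB^{m+2}$ and $(\BB\,(\BB\,a))\circ\BB^{n+1}$ then reduce to the same term. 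Clause (4) is the same calculation one level up: expand $(a\circ b)^\bullet$ by two applications of $(\bullet\mathit{app})$, reduce $\BB^\bullet\circ\BB^{l+3}$ as before, and use the first two commutation moves plus the hypotheses on $a$ and $b$ and associativity to bring $(a\circ b)^\bullet\circ\BB^{l+1}$ into the shape $b^\bullet\circ((\BB\,(\BB\,a))\circ\BB^{m+1})$; this is also what $(\BB\,(a\circ b))\circ\BB^n=((\BB\,a)\circ(\BB\,b))\circ\BB^n$ (via $\BB\,(a\circ b)=(\BB\,a)\circ(\BB\,b)$ from Lemma~\ref{lem:assoc}) becomes after using the hypothesis on $b$.

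For clause (6), the forward direction is the chain
$$
(a\,\II)^\bullet\circ\BB^m=\II^\bullet\circ(a^\bullet\circ\BB^{m+1})=\II^\bullet\circ((\BB\,a)\circ\BB)=(a\circ\II^\bullet)\circ\BB=a\circ(\II^\bullet\circ\BB)=a\circ\II=a,
$$
using $(\bullet\mathit{app})$, the hypothesis $a\colon m\rightarrow1$, clause (1) applied to $\II^\bullet\colon 0\rightarrow1$ (giving $\II^\bullet\circ(\BB\,a)=a\circ\II^\bullet$), the axiom $(\mathit{arity}_\II)$ together with $\BB\,\II=\II$, and Lemma~\ref{lem:assoc}. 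For the converse I would first establish the auxiliary fact that $c^\bullet\circ\BB^m\colon m\rightarrow1$ for every element $c$, by induction on $m$: the base $c^\bullet\colon 0\rightarrow1$ is clause (2), and the step writes $c^\bullet\circ\BB^{m+1}=(c^\bullet\circ\BB^m)\circ\BB$, lifts the induction hypothesis $c^\bullet\circ\BB^m\colon m\rightarrow1$ to $m+1\rightarrow2$ by clause (5), and composes it with $\BB\colon 2\rightarrow1$ by clause (4); applying this with $c=a\,\II$ to the hypothesis $a=(a\,\II)^\bullet\circ\BB^m$ yields $a\colon m\rightarrow1$. I expect this converse — in particular finding the right phrasing of the auxiliary lemma — to be the main obstacle, because the naive reading of $c^\bullet\circ\BB^m$ as an ``arity composition'' fails ($c^\bullet$ has arity $0\rightarrow1$ but $\BB^m$ has arity $m+1\rightarrow1$, so they do not compose directly), and one must peel off a single $\BB$ and realign arities via the lift (5) before invoking (4). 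A secondary source of friction is simply the $\BB$-tower bookkeeping in (3) and (4), where associativity and $\BB^k\circ\BB^l=\BB^{k+l}$ must be applied with some care.
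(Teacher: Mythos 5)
Your proposal is correct and follows essentially the same route as the paper's proof: clause (1) by applying the arity equation to $b$ and unfolding $\circ$ and $(\_)^\bullet$, clauses (2) and (5) directly from the axioms and right-composition with $\BB$, clauses (3) and (4) by expanding via $(\bullet\mathit{app})$ and commuting with the arity-$0\rightarrow1$ elements using clause (1), and clause (6) by the same forward chain together with the arity count $(a\,\II)^\bullet\colon m\rightarrow m+1$ and $\BB^m\colon m+1\rightarrow 1$ for the converse. The only differences are organisational (re-deriving the arity of $\BB\,a$ inside clause (4) rather than citing clause (3), and packaging the converse of (6) as a single induction on $c^\bullet\circ\BB^m$), and your observation that $c^\bullet$ must first be lifted via clause (5) before composing with $\BB^m$ is exactly what the paper does implicitly.
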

\begin{proof}
\begin{enumerate}
\item $(\BB^m\,b)\circ a = (a^\bullet\circ \BB^{m+1})\,b= ((\BB\,a)\circ\BB^n)\,b= a\circ (\BB^n\,b)$.
\item Immediate from the axioms of extensionality.
\item $(\BB\,a)^\bullet\circ\BB^{m+2}=a^\bullet\circ\BB^\bullet\circ\BB^{m+3}=
a^\bullet\circ(\BB\,\BB)\circ\BB^{m+1}=\BB\circ a^\bullet\circ\BB^{m+1}=\BB\circ(\BB\,a)\circ\BB^n=(\BB\,a)\circ\BB^{n+1}$.
\item $(a\circ b)^\bullet\circ\BB^{l+1}=b^\bullet\circ(\BB\,a)^\bullet\circ\BB^{l+2}=
b^\bullet\circ(\BB\,(\BB\,a))\circ\BB^{m+1}
= (\BB\,a)\circ b^\bullet\circ\BB^{m+1}
= (\BB\,a)\circ(\BB\,b)\circ\BB^n
= (\BB\,(a\circ b))\circ\BB^n$.
\item Immediate.
\item If $a\colon m\rightarrow 1$, then 
$(a\,\II)^\bullet\circ \BB^m=\II^\bullet\circ a^\bullet\circ\BB^{m+1}=\II^\bullet\circ(\BB\,a)\circ\BB=a\circ\II^\bullet\circ\BB=a\circ\II=a$.
Conversely, if $a=(a\,\II)^\bullet\circ\BB^m$, $a\colon m\rightarrow 1$ because $(a\,\II)^\bullet\colon m\rightarrow m+1$ and $\BB^m\colon m+1\rightarrow 1$.
\end{enumerate}
\end{proof}
From Lemma \ref{lem:assoc} and \ref{lem:compositionality} and discussions on PRO in Section \ref{subsec:monoidal} we have:
\begin{proposition}
Given an extensional $\BIdot$-algebra $\calA$, the following data determine a PRO $\IA$.
\begin{itemize}
\item $\IA(m,n)=\{a\in\calA~|~a\colon m\rightarrow n\}$.
\item For $a\colon l\rightarrow m$ and $b\colon m\rightarrow n$, $a;b=a\circ b\colon l\rightarrow n$.
\item $\id_m=\II\colon m\rightarrow m$.
\item For $a\colon m\rightarrow n$, $l+a=\BB^l\,a\colon l+m\rightarrow l+n$.
\item For $a\colon m\rightarrow n$, $a+l=a\colon m+l\rightarrow n+l$.
\end{itemize}
\end{proposition}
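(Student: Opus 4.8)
The plan is to verify directly that the listed data satisfy the axioms of a PRO, using the \emph{simplified} presentation recalled in Section~\ref{subsec:monoidal} (the one taking $f+m$ and $m+f$ as primitive), since the five operations given are precisely of that shape: $a+l$ leaves the underlying element unchanged and only re-indexes its arity, whereas $l+a$ is the genuine operation $\BB^l\,a$. First I would record that all five operations are well-typed. This is exactly Lemma~\ref{lem:compositionality}: items (2)--(4) give the types of $\BB\colon 2\rightarrow 1$, $\II\colon 0\rightarrow 0$, $a^\bullet\colon 0\rightarrow 1$, of $\BB\,a$, and of $a\circ b$; iterating item (5) gives $\II\colon m\rightarrow m$ and $a\colon m+l\rightarrow n+l$; and a short induction on $l$ using item (3) gives $\BB^l\,a\colon l+m\rightarrow l+n$.

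Next I would dispatch the category axioms: the unit laws $\II\circ a=a=a\circ\II$ and associativity $(a\circ b)\circ c=a\circ(b\circ c)$ are literally Lemma~\ref{lem:assoc}. For the tensor axioms, the two ``right'' associativities $(f+m)+m'=f+(m+m')$ and $(m+f)+m'=m+(f+m')$ are trivial since adding on the right changes only the arity label, not the element, and $f+0=f$ holds for the same reason; the other unit law is $0+f=\BB^0\,f=\II\,f=f$. For $(m+m')+f=m+(m'+f)$ I would first prove the auxiliary identity $\BB^{m+m'}=\BB^m\circ\BB^{m'}$ by induction on $m$ (base case $\II\circ\BB^{m'}=\BB^{m'}$, step case by associativity of $\circ$), whence $(m+m')+f=\BB^{m+m'}\,f=(\BB^m\circ\BB^{m'})\,f=\BB^m(\BB^{m'}\,f)=m+(m'+f)$ using the $\BB$-axiom. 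I would also note $\BB^m\,\II=\II$ (an easy induction from $\BB\,\II=\II$, Lemma~\ref{lem:assoc}), which makes $\id_m+\id_n=\id_{m+n}$ hold no matter how the left-hand side is decomposed.

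The only point that uses the $\BIdot$-structure beyond Lemma~\ref{lem:assoc} is the interchange law $(f+m');(n+g)=(m+g);(f+n')$ for $f\colon m\rightarrow n$ and $g\colon m'\rightarrow n'$: unwinding the definitions, the left side is $f\circ(\BB^n\,g)$ and the right side is $(\BB^m\,g)\circ f$, so the claim is exactly Lemma~\ref{lem:compositionality}(1) applied with $a=f$ and $b=g$. This simultaneously shows that the derived operation $f+g$ (defined as either composite) is well-defined. With the simplified axioms established, the remaining PRO data and axioms — in particular functoriality of $+$ with respect to $;$ — follow from the discussion of the alternative presentation in Section~\ref{subsec:monoidal}, completing the verification. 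I do not expect a real obstacle here: the argument is essentially bookkeeping on top of Lemmas~\ref{lem:assoc} and~\ref{lem:compositionality}, and the only thing to be careful about is the asymmetry between left- and right-tensoring ($l+a=\BB^l\,a$ genuinely alters the element while $a+l=a$ does not) and the consequent need to check, via the interchange law, that $f+g$ is independent of the order of tensoring.
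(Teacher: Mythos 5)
Your proposal is correct and follows essentially the same route as the paper, which offers no written-out proof but explicitly derives the proposition "from Lemma \ref{lem:assoc} and \ref{lem:compositionality} and discussions on PRO in Section \ref{subsec:monoidal}" — i.e., precisely the simplified $f+m$ / $m+f$ presentation, with the unit and associativity laws coming from Lemma \ref{lem:assoc} and the interchange law being Lemma \ref{lem:compositionality}(1). Your only additions are routine bookkeeping the paper leaves implicit (e.g., $\BB^{m+m'}=\BB^m\circ\BB^{m'}$ and $\BB^m\,\II=\II$), and these are verified correctly.
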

Note the asymmetry between $n+(\_)$ and $(\_)+n$; the former is given by applying $\BB$ $n$-times, while the latter 
we do not need to do anything because of the extensionality. 
To help with the intuition, it might be useful to look at the case of the closed term model of the $\lambda_{\beta\eta}$-calculus
where we have,
for $a=\lambda fx_1\dots x_m.f\,M_1\dots M_n\colon m\rightarrow n$, 
$$
\begin{array}{rll}
{\red 1\,+\,}a
&=&
\lambda f{\red x_0}x_1\dots x_m.f\,{\red x_0}\,M_1\dots M_n\\
&=_\beta&
(\lambda fxy.f\,(x\,y))\,(\lambda fx_1\dots x_m.f\,M_1\dots M_n)\\
&=&\BB\,a\\
\\
a{\red \,+\,1}
&=&
\lambda fx_1\dots x_m{\red x_{m+1}}.f\,M_1\dots M_n\,{\red x_{m+1}}\\
&=\eta&
\lambda fx_1\dots x_m.f\,M_1\dots M_n\\
&=&
a
\end{array}
$$
This difference between left-tensor and right-tensor in internal PROs cannot be ignored even 
in the braided or symmetric cases. In particular, 
we will later discuss {\em left} traces on internal PROBs which are much easier to describe than right traces because of this asymmetry.

Another possible source of confusion would be that $\BB$ plays multiple important roles in this story: it gives the composition $a\circ b=\BB\,a\,b$,
the left-tensor $1+a=\BB\,a$, as well as an element of arity $2\rightarrow1$ serving as the application. 
This complexity is already present in the calculus of $\BB$ studied by Ikebuchi and Nakano \cite{IN19}.

\begin{proposition}
$\IA$ defined as above is reflexive, with $\mathsf{app}=\BB\colon 2\rightarrow 1$
and $\mathsf{cur}(a)=(a\,\II)^\bullet\circ\BB^m\colon m\rightarrow 1$ for $a\colon m+1\rightarrow 1$.
Moreover, $\IA(0,1)=\{a\in\calA~|~a\colon 0\rightarrow 1\}=\{a^\bullet~|~a\in\calA\}=\calA^\bullet$ has a $\BIdot$-algebra structure
with $a\cdot_{\calA^\bullet} b=b\circ a\circ \BB$, $\BB_{\calA^\bullet}=\BB^\bullet$,
$\II_{\calA^\bullet}=\II^\bullet$, and $a^{\bullet_{\calA^\bullet}}=a^\bullet$.
This $\BIdot$-algebra is isomorphic to $\calA$, with isomorphisms $a\mapsto a^\bullet\colon \calA\stackrel{\cong}{\rightarrow}\calA^\bullet$ and
$a\mapsto a\,\II\colon \calA^\bullet\stackrel{\cong}{\rightarrow}\calA$.
\end{proposition}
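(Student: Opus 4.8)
The plan is to settle the two claims — that $1$ is reflexive in $\IA$, and that $\IA(0,1)$ carries the stated $\BIdot$-algebra structure — separately, in each case reducing everything to Lemmas \ref{lem:assoc} and \ref{lem:compositionality} and to the PRO structure on $\IA$ already established; essentially no new computation is required, only careful tracking of arities.

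For reflexivity, $\mathsf{app}=\BB\colon 2\rightarrow 1$ by Lemma \ref{lem:compositionality}(2), and a short arity count (Lemma \ref{lem:compositionality}(2),(4),(5), together with $\BB^m\colon m+1\rightarrow 1$) shows that for $f\colon m+1\rightarrow 1$ the element $\mathsf{cur}(f)=(f\,\II)^\bullet\circ\BB^m$ has arity $m\rightarrow 1$. Since right tensoring by $1$ is the identity on $\IA$, $\mathsf{cur}(f)+1=\mathsf{cur}(f)$, so using associativity of $\circ$ (Lemma \ref{lem:assoc}),
$$(\mathsf{cur}(f)+1);\mathsf{app}=\mathsf{cur}(f)\circ\BB=(f\,\II)^\bullet\circ\BB^{m+1}=f,$$
the last step being Lemma \ref{lem:compositionality}(6). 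For uniqueness, if $g\colon m\rightarrow 1$ also satisfies $g\circ\BB=f$, then $f\,\II=(g\circ\BB)\,\II=\BB\,g\,\BB\,\II=g\,(\BB\,\II)=g\,\II$ by $\BB\,\II=\II$ (Lemma \ref{lem:assoc}), and hence $g=(g\,\II)^\bullet\circ\BB^m=(f\,\II)^\bullet\circ\BB^m=\mathsf{cur}(f)$, again by Lemma \ref{lem:compositionality}(6). This gives the first sentence of the proposition (and naturality of $\mathsf{cur}$ then follows from uniqueness, as noted in Section \ref{subsec:monoidal}).

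For the second part, I would first identify $\IA(0,1)$ with $\calA^\bullet$: the inclusion $\calA^\bullet\subseteq\IA(0,1)$ is Lemma \ref{lem:compositionality}(2), and conversely $a\colon 0\rightarrow 1$ forces $a=(a\,\II)^\bullet\circ\BB^0=(a\,\II)^\bullet$ by Lemma \ref{lem:compositionality}(6) (using $\BB^0=\II$ and $x\circ\II=x$ from Lemma \ref{lem:assoc}). That same identity shows $(\_)\,\II$ is a left inverse of $(\_)^\bullet$ on $\calA^\bullet$, while $a^\bullet\,\II=\II\,a=a$ shows it is a right inverse on $\calA$; hence $(\_)^\bullet\colon\calA\to\calA^\bullet$ and $(\_)\,\II\colon\calA^\bullet\to\calA$ are mutually inverse bijections. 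It then suffices to check that $(\_)^\bullet$ intertwines the $\BIdot$-algebra operations of $\calA$ with the ones declared on $\calA^\bullet$, since transporting a $\BIdot$-algebra structure along a bijection yields a $\BIdot$-algebra isomorphic to the original. Writing $\alpha=a^\bullet$, $\beta=b^\bullet$, the axiom $(\bullet\mathit{app})$ and associativity of $\circ$ give $(a\,b)^\bullet=b^\bullet\circ a^\bullet\circ\BB=\beta\circ\alpha\circ\BB=\alpha\cdot_{\calA^\bullet}\beta$; the constants are sent to $\BB^\bullet=\BB_{\calA^\bullet}$ and $\II^\bullet=\II_{\calA^\bullet}$ by fiat; and $a=(a\,\II)^\bullet$ for $a\in\calA^\bullet$ makes $(\_)^\bullet$ coincide with its own transport. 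Hence $\calA^\bullet$ is a $\BIdot$-algebra and the two maps above are inverse isomorphisms.

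I expect the only mildly delicate points to be the reliance on $\mathsf{cur}(f)+1=\mathsf{cur}(f)$ (so that the $\beta$-equation collapses to $\mathsf{cur}(f)\circ\BB=f$) and the uniqueness step, which works only because $\BB\,\II=\II$ lets one recover $f\,\II=g\,\II$ from $g\circ\BB=f$; everything else is bookkeeping with Lemma \ref{lem:compositionality} at the appropriate arities.
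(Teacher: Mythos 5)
Your proof is correct and follows exactly the route the paper intends (the paper states this proposition without proof, as an immediate consequence of Lemmas \ref{lem:assoc} and \ref{lem:compositionality}): item (6) of Lemma \ref{lem:compositionality} does all the work for both the $\beta$-equation $\mathsf{cur}(f)\circ\BB=(f\,\II)^\bullet\circ\BB^{m+1}=f$ and the uniqueness/bijectivity claims, and the axiom $(\bullet\mathit{app})$ gives the homomorphism property. The two points you flag as delicate --- the collapse $\mathsf{cur}(f)+1=\mathsf{cur}(f)$ coming from the right-tensor being trivial, and recovering $g\,\II=f\,\II$ from $g\circ\BB=f$ via $\BB\,\II=\II$ --- are indeed the only places where anything beyond bookkeeping happens, and you handle both correctly.
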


\begin{corollary} \label{cor:Scott:planar}
Every extensional $\BIdot$-algebra is isomorphic to one arising from a reflexive object in a monoidal category.
\end{corollary}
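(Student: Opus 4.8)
The plan is to assemble the statement directly from the constructions already carried out, with essentially no new argument. Given an extensional $\BIdot$-algebra $\calA$, the two preceding propositions produce the PRO $\IA$ --- in particular a strict monoidal category --- in which $1$ is a reflexive object, with $\mathsf{app}=\BB\colon 2\rightarrow 1$ and $\mathsf{cur}(a)=(a\,\II)^\bullet\circ\BB^m$ for $a\colon m+1\rightarrow 1$. First I would apply Proposition~\ref{prop:refl:planar} to this reflexive object, obtaining an extensional $\BIdot$-algebra structure on the hom-set $\IA(I,1)=\IA(0,1)$. Then I would invoke the last preceding proposition, which identifies $\IA(0,1)=\calA^\bullet$ as a $\BIdot$-algebra and exhibits an isomorphism of $\BIdot$-algebras $\calA\stackrel{\cong}{\rightarrow}\calA^\bullet$ given by $a\mapsto a^\bullet$ (with inverse $a\mapsto a\,\II$). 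Composing these, $\calA$ is isomorphic to the $\BIdot$-algebra arising from the reflexive object $1$ of the monoidal category $\IA$, which is exactly the assertion.

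The one point that deserves a check is that the $\BIdot$-algebra structure which Proposition~\ref{prop:refl:planar} puts on $\IA(0,1)$ --- product $(a\otimes b);\mathsf{app}$, combinators $\BB=\mathsf{cur}(\mathsf{cur}(\mathsf{cur}((1\otimes\mathsf{app});\mathsf{app})))$ and $\II=\mathsf{cur}(\id_1)$, and operation $x\mapsto\mathsf{cur}((1\otimes x);\mathsf{app})$ --- is precisely the structure $a\cdot_{\calA^\bullet}b=b\circ a\circ\BB$, $\BB_{\calA^\bullet}=\BB^\bullet$, $\II_{\calA^\bullet}=\II^\bullet$, $a^{\bullet_{\calA^\bullet}}=a^\bullet$ recorded in the last preceding proposition. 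I would verify this by unfolding the PRO operations of $\IA$ ($l+a=\BB^l\,a$, $a+l=a$, composition $=\circ$) together with the formulas for $\mathsf{app}$ and $\mathsf{cur}$ above. It is cleanest to use the uniqueness clause in the definition of a reflexive PRO: for instance, to get $\mathsf{cur}(\mathsf{cur}(\mathsf{cur}((1\otimes\mathsf{app});\mathsf{app})))=\BB^\bullet$ it suffices to check $(\BB^\bullet+3);(\mathsf{app}+2);(\mathsf{app}+1);\mathsf{app}=(1\otimes\mathsf{app});\mathsf{app}$, which after unfolding becomes $\BB^\bullet\circ(\BB\circ(\BB\circ\BB))=(\BB\,\BB)\circ\BB$, i.e.\ exactly the extensionality axiom $(\mathit{arity}_\BB)$; similarly the identities for $\II$ and for $(\_)^\bullet$ reduce respectively to $(\mathit{arity}_\II)$ (together with Lemma~\ref{lem:assoc}) and to $(\mathit{arity}_{a^\bullet})$, while for $a,b\colon 0\rightarrow 1$ the tensor $a\otimes b$ in $\IA$ is $(a+0);(1+b)=a\circ(\BB\,b)=b\circ a$ by Lemma~\ref{lem:compositionality}(1), whence $(a\otimes b);\mathsf{app}=(b\circ a)\circ\BB$.

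I expect this last verification to be the only real content, and it is routine bookkeeping rather than a conceptual obstacle: each required identity turns out to be exactly one of the extensionality axioms, or an already-established consequence in Lemmas~\ref{lem:assoc}--\ref{lem:compositionality}. Hence nothing beyond the preceding propositions is needed, and the corollary follows.
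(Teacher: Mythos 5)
Your proposal is correct and follows exactly the route the paper intends: the corollary is stated as an immediate consequence of the two preceding propositions (the reflexive PRO $\IA$ and the isomorphism $\calA\cong\calA^\bullet=\IA(0,1)$), which is precisely how you assemble it. The extra verification that the $\BIdot$-structure from Proposition~\ref{prop:refl:planar} on $\IA(0,1)$ matches the $\calA^\bullet$ structure is correct and is the kind of routine unfolding the paper leaves implicit.
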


A pleasant consequence of having the internal PRO is that we can use the string diagrams of monoidal categories for expressing and
reasoning about combinators with arity. This makes the lengthy and dry equational reasoning on combinators 
much easier and more accessible. Let us draw 
{\unitlength=.6pt
\begin{picture}(40,20)(0,5)
\thicklines
\arcLR{0}{0}{20}{10}
\lineseg{20}{10}{40}{10}
\put(20,10){\makebox(0,0){\appnode}}
\end{picture}
}
for $\BB\colon 2\rightarrow 1$ and
{\unitlength=.6pt
\begin{picture}(40,20)(0,5)
\thicklines
\lineseg{20}{10}{40}{10}
\put(10,10){\circle{20}}
\put(10,10){\makebox(0,0){$a$}}
\end{picture}
}
for $a^\bullet\colon 0\rightarrow 1$.
For example, the arity axioms can be drawn (keeping in mind that adding a lower wire amounts to applying $\BB$ while nothing to do for adding an upper wire)
as follows.
\begin{center}
\begin{picture}(90,85)
\thicklines
\lineseg{0}{69}{60}{69}
\lineseg{0}{50}{45}{50}
\lineseg{0}{30}{30}{30}
\lineseg{75}{52}{90}{52}
\arcLR{60}{35}{15}{17}
\arcLR{45}{20}{15}{15}
\arcLR{30}{10}{15}{10}
\put(20,10){\circle{20}}
\put(20,10){\makebox(0,0){$\BB$}}
\put(45,20){\makebox(0,0){\appnode}}
\put(60,35){\makebox(0,0){\appnode}}
\put(75,52){\makebox(0,0){\appnode}}
\put(45,80){\makebox(0,0){$\BB^\bullet\circ\BB\circ\BB\circ\BB$}}
\end{picture}
\begin{picture}(20,70)
\put(10,35){\makebox(0,0){$=$}}
\end{picture}
\begin{picture}(55,70)
\thicklines
\arcLR{0}{35}{20}{10}
\arcLR{20}{15}{20}{15}
\lineseg{0}{15}{20}{15}
\lineseg{40}{30}{55}{30}
\put(20,45){\makebox(0,0){\appnode}}
\put(40,30){\makebox(0,0){\appnode}}
\put(27.5,80){\makebox(0,0){$(\BB\,\BB)\circ\BB$}}
\end{picture}
\end{center}
\begin{center}
\begin{picture}(60,60)
\thicklines
\lineseg{0}{30}{30}{30}
\lineseg{45}{20}{60}{20}
\arcLR{30}{10}{15}{10}
\put(20,10){\circle{20}}
\put(20,10){\makebox(0,0){$\II$}}
\put(45,20){\makebox(0,0){\appnode}}
\put(30,45){\makebox(0,0){$\II^\bullet\circ\BB$}}
\end{picture}
\begin{picture}(20,40)
\put(10,20){\makebox(0,0){$=$}}
\end{picture}
\begin{picture}(40,40)
\thicklines
\lineseg{0}{20}{40}{20}
\put(20,45){\makebox(0,0){$\BB\,\II$}}
\end{picture}
\begin{picture}(40,40)
\end{picture}
\begin{picture}(60,40)
\thicklines
\lineseg{0}{30}{30}{30}
\lineseg{45}{20}{60}{20}
\arcLR{30}{10}{15}{10}
\put(20,10){\circle{20}}\put(20,10){\circle{14}}
\put(20,10){\makebox(0,0){$a$}}
\put(45,20){\makebox(0,0){\appnode}}
\put(30,45){\makebox(0,0){$a^{\bullet\bullet}\circ\BB$}}
\end{picture}
\begin{picture}(20,40)
\put(10,20){\makebox(0,0){$=$}}
\end{picture}
\begin{picture}(60,40)
\thicklines
\lineseg{0}{10}{30}{10}
\lineseg{45}{20}{60}{20}
\arcLR{30}{10}{15}{10}
\put(20,30){\circle{20}}
\put(20,30){\makebox(0,0){$a$}}
\put(45,20){\makebox(0,0){\appnode}}
\put(30,50){\makebox(0,0){$(\BB\,a^\bullet)\circ\BB$}}
\end{picture}
\end{center}
Thus they have very natural graphical meaning, from which it is immediate to see that $\calA^\bullet$ forms a $\BIdot$-algebra.

It is natural to ask if every reflexive PRO arises as the internal PRO of an extensional $\BIdot$-algebra. 
The answer is no; a simple counterexample is as follows. 
\begin{example} \label{ex:pro}
Let $\mathbb{C}$ be the PRO
such that $\mathbb{C}(0,0)=\{0,1\}$ and $\mathbb{C}(m,n)=\{0\}$ for $m+n>0$, 
with composition and tensor of $f$ and $g$ given by the multiplication
$f\cdot g$, and $\id_0=1$ while $\id_m=0$ for $m>0$. Thus $\mathbb{C}$ is trivial except $\mathbb{C}(0,0)$. It is reflexive
as $\mathbb{C}(m+1,1)=\calC(m,1)=\{0\}$. 
$\calA=\mathbb{C}(0,1)$ is the trivial $\BIdot$-algebra, and its
internal PRO $\IA$ is the trivial PRO ($\IA(m,n)=\{0\}$ for any $m$ and $n$).
(In this example, $\mathbb{C}(0,0)$ can be replaced by any commutative monoid with an absorbing element $0$. Hence there are infinitely many reflexive PROs giving rise to the trivial $\BIdot$-algebra.) 
\end{example}

\subsection{Adding braids and symmetry}\label{subsec:braidedCA}

We have seen that, for an extensional $\BIdot$-algebra $\calA$, $\IA$ is a reflexive PRO. Now suppose that
$\IA$ has a braid, i.e., it is a reflexive PROB. Following the discussion on PROBs
(Section \ref{subsec:braids}), we need 
elements $\CC^+$ and $\CC^-$ which are of arity $2\rightarrow 2$ and satisfy the Coxter relations
$$
\begin{array}{rcll}
\CC^\pm\circ\CC^\mp &=& \BB\,(\BB\,\II)\\
(\BB\,\CC^\pm)\circ\CC^\pm\circ(\BB\,\CC^\pm) &=& \CC^\pm\circ(\BB\,\CC^\pm)\circ\CC^\pm
\end{array}
$$
such that the 
family of elements $\bsigma_{m,n}\colon m+n\rightarrow n+m$ given by
$$
\begin{array}{rclrcl}
\bsigma_{0,1} &=& \II & \hspace{5mm}
\bsigma_{m+1,1} &=& 
(\BB\,\bsigma_{m,1})\circ\CC^+ \\
\bsigma_{m,0} &=& \II &
\bsigma_{m,n+1} &=&  
\bsigma_{m,1}\circ(\BB\,\bsigma_{m,n}).
\end{array}
$$
is natural in $m$ and $n$. It turns out that naturality follows from just the following instances:
$$
\begin{array}{rcll}
(1+\BB)\circ\bsigma_{1,1}^{\pm1} &=& \bsigma_{1,2}^{\pm1}\circ (\BB+1) \\
(1+a^\bullet)\circ\bsigma_{1,1}^{\pm1} &=& a^\bullet+1
\end{array}
$$
or equivalently
$$
\begin{array}{rcll}
(\BB\,\BB)\circ\CC^\pm &=& \CC^\pm\circ(\BB\,\CC^\pm)\circ\BB \\
(\BB\,a^\bullet)\circ\CC^\pm &=& a^\bullet
\end{array}
$$
By drawing
{\unitlength=.6pt
\begin{picture}(40,20)(0,5)
\thicklines
\braid{0}{0}{10}{10}
\end{picture}
}
and
{\unitlength=.6pt
\begin{picture}(40,20)(0,5)
\thicklines
\braidInv{0}{0}{10}{10}
\end{picture}
}
for $\CC~+$ and $\CC^-$ respectively,
these naturality with respect to $\BB$ and $a^\bullet$ can be depicted as follows.
\begin{center}
\unitlength=.7pt
\begin{picture}(60,60)
\thicklines
\arcLR{0}{20}{20}{10}  
\lineseg{0}{0}{20}{0}
\braid{20}{0}{10}{15}
\put(20,30){\makebox(0,0){$\appnode$}}
\put(30,50){\makebox(0,0){$(\BB\,\BB)\circ\CC^+$}}
\end{picture}
\begin{picture}(40,40)
\put(20,20){\makebox(0,0){$=$}}
\end{picture}
\begin{picture}(75,40)
\thicklines
\arcLR{40}{0}{20}{10}  
\lineseg{0}{40}{20}{40}
\braid{0}{0}{5}{10}
\braid{20}{20}{5}{10}
\lineseg{20}{0}{40}{0}
\lineseg{60}{10}{75}{10}
\lineseg{40}{40}{75}{40}
\put(60,10){\makebox(0,0){$\appnode$}}
\put(37.5,50){\makebox(0,0){$\CC^+\circ (\BB\,\CC^+)\circ\BB$}}
\end{picture}
\begin{picture}(40,40)
\end{picture}
\begin{picture}(60,60)
\thicklines
\arcLR{0}{20}{20}{10}  
\lineseg{0}{0}{20}{0}
\braidInv{20}{0}{10}{15}
\put(20,30){\makebox(0,0){$\appnode$}}
\put(30,50){\makebox(0,0){$(\BB\,\BB)\circ\CC^-$}}
\end{picture}
\begin{picture}(40,40)
\put(20,20){\makebox(0,0){$=$}}
\end{picture}
\begin{picture}(75,40)
\thicklines
\arcLR{40}{0}{20}{10}  
\lineseg{0}{40}{20}{40}
\braidInv{0}{0}{5}{10}
\braidInv{20}{20}{5}{10}
\lineseg{20}{0}{40}{0}
\lineseg{60}{10}{75}{10}
\lineseg{40}{40}{75}{40}
\put(60,10){\makebox(0,0){$\appnode$}}
\put(37.5,50){\makebox(0,0){$\CC^-\circ (\BB\,\CC^-)\circ\BB$}}
\end{picture}
\end{center}
\begin{center}
\unitlength=.7pt
\begin{picture}(60,55)
\thicklines
\lineseg{0}{0}{20}{0}
\braid{20}{0}{10}{10}
\put(10,20){\circle{20}}
\put(10,20){\makebox(0,0){$a$}}
\put(30,40){\makebox(0,0){$(\BB\,a^\bullet)\circ\CC^+$}}
\end{picture}
\begin{picture}(40,40)
\put(20,10){\makebox(0,0){$=$}}
\end{picture}
\begin{picture}(40,30)
\thicklines
\lineseg{0}{25}{40}{25}
\lineseg{20}{10}{40}{10}
\put(10,10){\circle{20}}
\put(10,10){\makebox(0,0){$a$}}
\put(20,40){\makebox(0,0){$a^\bullet$}}
\end{picture}
\begin{picture}(40,40)
\put(20,10){\makebox(0,0){$=$}}
\end{picture}
\begin{picture}(60,50)
\thicklines
\lineseg{0}{0}{20}{0}
\braidInv{20}{0}{10}{10}
\put(10,20){\circle{20}}
\put(10,20){\makebox(0,0){$a$}}
\put(30,40){\makebox(0,0){$(\BB\,a^\bullet)\circ\CC^-$}}
\end{picture}
\end{center}

\noindent
Note the naturality with respect to $a^\bullet$ implies $\CC^\pm\,a\,b\,c=a\,c\,b$ as well as $\CC^\pm\,\II\,a=a^\bullet$
and more generally $\CC^\pm\,b\,a=a^\bullet\circ b$.
So it is possible to axiomatise algebras with braid without using $(\_)^\bullet$ as a primitive construct.
The equation $a^\bullet\circ\BB^{m+1}=(\BB\,a)\circ\BB^n$ for characterising the arity $m\rightarrow n$ 
(Definition \ref{def:arity}) can be 
replaced by 
$$(\CC^\pm\,\BB\,a)\circ\BB^m=(\BB\,a)\circ\BB^n$$
in the braided setting. As a result, we obtain {\em extensional $\BCpmI$-algebras} \cite{Has22}:

\begin{definition}
An {\em  extensional $\BCpmI$-algebra} is an applicative structure
with elements $\BB$, $\CC^+$, $\CC^-$ and $\II$ satisfying the following axioms.
\setlength{\FrameSep}{4pt}
\begin{oframed}
$$
\begin{array}{rcll}
\BB\,a\,b\,c &=& a\,(b\,c) & (B)\\
\CC^\star\,a\,b\,c &=& a\,c\,b & (C)\\
\II\,a &=& a & (I)\\ 
\CC^+\,a\,b &=& \CC^-\,a\,b & (C2)\\
\BB\,\II &=& \II  & (\lambda)\\
\CC^\star\,\BB\,\II &=& \BB\,\II & (\rho)\\
(\CC^\star\,\BB\,\BB)\circ(\BB\circ\BB) &=& (\BB\,\BB)\circ\BB  & (\alpha)\\
\CC^\pm\circ\CC^\mp &=& \BB\,(\BB\,\II) & (\cox_1)\\
(\CC^\star\,\BB\,\CC^\pm)\circ(\BB\circ\BB) &=& (\BB\,\CC^\pm)\circ(\BB\circ\BB)  & (\cox_2)\\
~~~
(\BB\,\CC^\pm)\circ(\CC^\pm\circ(\BB\,\CC^\pm)) &=& 
\CC^\pm\circ((\BB\,\CC^\pm)\circ\CC^\pm) & (\cox_3)~~~\\
(\BB\,\BB)\circ\CC^\pm &=& \CC^\pm\circ((\BB\,\CC^\pm)\circ\BB) &(bc)\\
\end{array}
$$ 
\end{oframed}
The double signs $\pm$ and $\mp$ in an equation should be taken 
as appropriately linked, while $\star$ indicates an arbitrary choice of $+$ or $-$.
(As we have $(C2)$, assuming just an instance of $\star$ suffices.)
\end{definition}

\begin{proposition}\label{prop:refl:braided}
Suppose that $A$ is a reflexive object in a braided monoidal category $\mathbb{C}$.
Then $\mathbb{C}(I,A)$ is an extensional $\BCpmI$-algebra with (in addition to the constructs from Proposition \ref{prop:refl:planar})
$$\CC^\pm=\mathsf{cur}(\mathsf{cur}(\mathsf{cur}((A\otimes\sigma_{A,A}^{\pm1});(\mathsf{app}\otimes A);\mathsf{app}))).$$
\end{proposition}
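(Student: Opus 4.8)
The plan is to lean on Proposition~\ref{prop:refl:planar}, which already equips $\mathbb{C}(I,A)$ with the structure of an extensional $\BIdot$-algebra, and to verify only the axioms that mention $\CC^\pm$. In particular $(B)$, $(I)$ and $(\lambda)=(BI)$ come for free, and I will use the element $a^\bullet=\mathsf{cur}((A\otimes a);\mathsf{app})$ provided by that proposition, which satisfies $a^\bullet\cdot z=z\cdot a$. The one tool I need is \emph{uncurrying}: since $\mathsf{cur}$ is a bijection, two elements $x,y$ of $\mathbb{C}(I,A)$ agree as soon as $(x\otimes A);\mathsf{app}=(y\otimes A);\mathsf{app}$, so to prove an equation it suffices to apply both sides to as many fresh $A$-wires as the arities demand and compare the resulting morphisms $A^{\otimes k}\to A$; uncurrying unfolds $\BB$ to $(A\otimes\mathsf{app});\mathsf{app}$ and $\CC^\pm$ to the defining morphism $(A\otimes\sigma_{A,A}^{\pm1});(\mathsf{app}\otimes A);\mathsf{app}$.

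First I would dispose of $(C)$ and $(C2)$. Uncurrying $\CC^\pm\,a\,b\,c$, the arguments $b$ and $c$ are states $I\to A$; by naturality of $\sigma$ and $\sigma_{I,A}=\id_A$, a braid $\sigma_{A,A}^{\pm1}$ sitting between a bare $A$-wire and a state collapses to the identity, so the composite reads off as $a\cdot c\cdot b$ regardless of the sign. This is $(C)$, and sign-independence is $(C2)$. Uncurrying once more, $\CC^\star\,b\,a$ and $a^\bullet\circ b$ both send the fresh wire $w$ to $b\cdot w\cdot a$, hence $\CC^\star\,b\,a=a^\bullet\circ b$ for either value of $\star$; I will use this to handle $(\rho)$, $(\alpha)$ and $(\cox_2)$.

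Using that identity together with the associativity of $\circ$ from Lemma~\ref{lem:assoc}, $(\rho)$ turns into $\II^\bullet\circ\BB=\BB\,\II$ and $(\alpha)$ into $\BB^\bullet\circ(\BB\circ(\BB\circ\BB))=(\BB\,\BB)\circ\BB$, which are precisely the arity axioms $(\mathit{arity}_\II)$ and $(\mathit{arity}_\BB)$ already satisfied by $\mathbb{C}(I,A)$ as a $\BIdot$-algebra (both can also be checked head-on, since after uncurrying and applying $(B)$, $(C)$, $(I)$ each side collapses to the same composite of $\mathsf{app}$'s). The axiom $(\cox_2)$ asserts exactly that $\CC^\pm$ has arity $2\to2$ in the sense of Definition~\ref{def:arity}; uncurrying its arity equation makes it an instance of the interchange law of $\mathbb{C}$ --- the braid $\sigma_{A,A}^{\pm1}$ slides freely past a morphism on disjoint wires --- via naturality of $\sigma$ and $\mathsf{app}$, and Lemma~\ref{lem:compositionality} then supplies the composite arities used below.

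What remains are the Reidemeister-type equations. Uncurried, $(\cox_1)$ becomes $\sigma_{A,A}^{\pm1};\sigma_{A,A}^{\mp1}=\id_{A\otimes A}$ on the two fresh wires (Reidemeister~II, i.e.\ invertibility of $\sigma$), while its right-hand side $\BB\,(\BB\,\II)$, equal to $\II$ by $(BI)$, uncurries to the identity on those wires. The equation $(\cox_3)$ uncurries to the Yang--Baxter equation $(\sigma_{A,A}^{\pm1}\otimes A);(A\otimes\sigma_{A,A}^{\pm1});(\sigma_{A,A}^{\pm1}\otimes A)=(A\otimes\sigma_{A,A}^{\pm1});(\sigma_{A,A}^{\pm1}\otimes A);(A\otimes\sigma_{A,A}^{\pm1})$, a standard consequence of the Hexagon Axiom and naturality of $\sigma$; and $(bc)$ uncurries to $(A\otimes\mathsf{app});\sigma_{A,A}=\sigma_{A,A\otimes A};(\mathsf{app}\otimes A)$, which is naturality of $\sigma_{A,-}$ at the morphism $\mathsf{app}$ once the hexagon is used to rewrite $\sigma_{A,A\otimes A}$ as $(\sigma_{A,A}\otimes A);(A\otimes\sigma_{A,A})$. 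I do not anticipate a real obstacle; the only thing that needs care is the bookkeeping of the continuation-passing encoding when uncurrying --- which wire carries the continuation $f$, the order in which the $\mathsf{app}$'s compose, and the left/right-tensor asymmetry discussed after the internal-PRO proposition --- and once this is unwound faithfully $(\cox_3)$ and $(bc)$ are literally the hexagon axiom of $\mathbb{C}$ while the rest reduces to the planar case.
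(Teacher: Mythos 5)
Your verification is correct and is exactly the routine check the paper has in mind: the paper states this proposition without proof (as with Proposition~\ref{prop:refl:planar}, it is treated as a straightforward computation), and your strategy --- inherit $(B)$, $(I)$, $(\lambda)$ from the planar case, reduce $(\rho)$, $(\alpha)$, $(\cox_2)$ to the planar arity axioms via the identity $\CC^\star\,b\,a=a^\bullet\circ b$, and uncurry the remaining axioms down to naturality of $\sigma$ against states, invertibility of $\sigma$, the Yang--Baxter/hexagon identities, and naturality of $\sigma$ at $\mathsf{app}$ --- covers every axiom of an extensional $\BCpmI$-algebra and each reduction is sound. No gaps.
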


\begin{proposition}
For an extensional $\BCpmI$-algebra $\calA$, its internal PRO $\IA$ is braided (i.e., a PROB) with the braid $\bsigma$.
$\IA$ is reflexive, and $\IA(0,1)=\calA^\bullet$ forms an extensional $\BCpmI$-algebra
which is isomorphic to $\calA$.
\end{proposition}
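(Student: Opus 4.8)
The plan is to reduce everything to the planar case already treated. First I would note that every extensional $\BCpmI$-algebra is an extensional $\BIdot$-algebra once we set $a^\bullet:=\CC^+\,\II\,a$ (which equals $\CC^-\,\II\,a$ by $(C2)$): then $a^\bullet\,b=b\,a$ holds by $(C)$ and $(I)$, and the five extensionality axioms of the $\BIdot$-case follow from $(\lambda),(\rho),(\alpha)$ together with the derived identity $\CC^\pm\,b\,a=a^\bullet\circ b$, exactly as in \cite{Has22}. Consequently the Proposition on internal PROs of extensional $\BIdot$-algebras applies, so $\IA$ is already a reflexive PRO, with $\mathsf{app}=\BB\colon 2\rightarrow 1$, $\mathsf{cur}(a)=(a\,\II)^\bullet\circ\BB^m$, and $\IA(0,1)=\calA^\bullet$ equipped with a $\BIdot$-algebra structure isomorphic to $\calA$ via $a\mapsto a^\bullet$ and $a\mapsto a\,\II$. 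This disposes of the reflexivity clause, and it remains to install the braid and to upgrade the $\BIdot$-isomorphism $\calA\cong\calA^\bullet$ to a $\BCpmI$-isomorphism.

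To get the PROB structure I would take $\bsigma_{1,1}^{\pm1}:=\CC^\pm$ and verify the four requirements of Section~\ref{subsec:braids}. (i) \emph{Arity:} rewriting $\CC^\star\,\BB\,\CC^\pm$ as $(\CC^\pm)^\bullet\circ\BB$, the axiom $(\cox_2)$ becomes $(\CC^\pm)^\bullet\circ\BB^3=(\BB\,\CC^\pm)\circ\BB^2$, which is precisely $\CC^\pm\colon 2\rightarrow 2$ in the sense of Definition~\ref{def:arity}. (ii) \emph{Reidemeister~II:} by $(\cox_1)$ and $(\lambda)$, $\CC^\pm\circ\CC^\mp=\BB\,(\BB\,\II)=\BB\,\II=\II=\id_2$ in $\IA$. (iii) \emph{Reidemeister~III:} since the left tensor $1+(\_)$ is $\BB\,(\_)$ and the right tensor $(\_)+1$ does nothing, the axiom $(\cox_3)$ is, modulo associativity of $\circ$ (Lemma~\ref{lem:assoc}), exactly $(1+\CC^\pm)\circ(\CC^\pm+1)\circ(1+\CC^\pm)=(\CC^\pm+1)\circ(1+\CC^\pm)\circ(\CC^\pm+1)$. (iv) \emph{Naturality:} I would define $\bsigma_{m,n}$ by the displayed recursion, check that the hexagon identities hold by construction from $(\cox_3)$ and Lemma~\ref{lem:assoc}, and then prove $(f+g)\circ\bsigma_{n,n'}=\bsigma_{m,m'}\circ(g+f)$ for all $f\colon m\rightarrow n$ and $g\colon m'\rightarrow n'$. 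Since the arrows satisfying this law are closed under $\circ$ and under both tensor operations, and every arrow of $\IA$ is obtained from $\BB\colon 2\rightarrow 1$, the elements $a^\bullet\colon 0\rightarrow 1$, $\II$ and $\CC^\pm$ by these operations, it suffices to check the generators: $\II$ is trivial, $\CC^\pm$ follows from (ii)--(iii), naturality with respect to $\BB$ is axiom $(bc)$, and naturality with respect to $a^\bullet$ is the remaining consequence of $(C)$. Hence $\IA$ is a reflexive PROB with braid $\bsigma$.

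For the last clause I would transport structure along the applicative-structure isomorphism $a\mapsto a^\bullet\colon\calA\rightarrow\calA^\bullet=\IA(0,1)$ already obtained in the planar case (it is multiplicative by $(\bullet\mathit{app})$ and bijective with inverse $b\mapsto b\,\II$ by Lemma~\ref{lem:compositionality}(6)). Defining $\BB_{\calA^\bullet}:=\BB^\bullet$, $\CC^\pm_{\calA^\bullet}:=(\CC^\pm)^\bullet$ and $\II_{\calA^\bullet}:=\II^\bullet$ makes $\calA^\bullet$ into an extensional $\BCpmI$-algebra and the map above into a $\BCpmI$-isomorphism, simply because every $\BCpmI$-axiom is an equation over the signature $(\BB,\CC^+,\CC^-,\II,\cdot)$ and is therefore preserved; this is the $\BIdot$-isomorphism of the previous section refined by the constants $\CC^\pm$.

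The only genuinely substantial point is step (iv): the reduction of naturality of $\bsigma$ to the generators, which is what consumes the axioms $(C2),(\cox_1),(\cox_2),(\cox_3)$ and $(bc)$ beyond the planar ones. Conceptually this is the familiar fact that a braid on a PRO is determined, and needs to be checked, only on generators; concretely it relies on the structural description of $\IA$ and follows the line of argument in \cite{Has22}. Everything else is either inherited verbatim from the planar Propositions or a formal transport of structure.
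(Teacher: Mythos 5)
Most of your proposal matches the paper's (largely implicit) argument: the reduction to the planar case via $a^\bullet=\CC^\pm\,\II\,a$, the reading of $(\cox_1)$--$(\cox_3)$ and $(\cox_2)$ as Reidemeister~II/III and the arity of $\CC^\pm$, and the transport of the $\BCpmI$-structure along $a\mapsto a^\bullet$ are all fine. The problem is in your step (iv). You justify naturality of $\bsigma$ by claiming that every arrow of $\IA$ is obtained from $\BB$, the $a^\bullet$, $\II$ and $\CC^\pm$ by composition and tensor, so that it suffices to check generators. That claim is false: $\IA(m,n)$ is by definition the set of \emph{all} elements of $\calA$ satisfying the arity equation $a^\bullet\circ\BB^{m+1}=(\BB\,a)\circ\BB^n$, not an inductively generated class. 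For a concrete failure, note that none of your generators has codomain $0$ except $\II\colon 0\rightarrow 0$, so no arrow $m\rightarrow 0$ with $m>0$ is generated; yet a ribbon combinatory algebra (a special extensional $\BCpmI$-algebra) has $\bbeta\in\IA(2,0)$, and likewise $\Tr\in\IA(1,2)$ is not a braided-planar composite. Your argument therefore only establishes naturality on a subcategory of $\IA$, not on $\IA$ itself.

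The paper closes this gap differently, and you need one of its two reduction steps that your proposal omits. First, every arrow $a\colon m\rightarrow 1$ — generated or not — decomposes as $(a\,\II)^\bullet\circ\BB^m$ by Lemma~\ref{lem:compositionality}(6), so naturality for all arrows into $1$ does follow from naturality for $\BB$ and $a^\bullet$ together with closure of the naturality condition under composition and tensor (this part of your argument is sound). Second, for a general $h\colon m\rightarrow n$ one passes to $(\BB\,h)\circ\BB^n\colon m+1\rightarrow 1$, applies the previous case, and then recovers naturality for $h$ itself by cancelling the $\BB^n=\mathsf{app}_n$ factor using the reflexive structure (uniqueness of currying). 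Without this second step your proof does not cover arrows such as $\bbeta$ or $\Tr$, i.e.\ exactly the arrows the rest of the paper cares about.
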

\begin{proof}
The most non-trivial point is the naturality of $\bsigma$. Naturality with respect to elements of arity $m\rightarrow 1$ easily follows from that with respect to $\BB$ and $a^\bullet$
as they can be expressed as $a^\bullet\circ\BB^m$ for some $a$. Naturality for general $h\colon m\rightarrow n$
is reduced to that w.r.t. $(\BB\,h)\circ\BB^n\colon m+1\rightarrow 1$.
\end{proof}

\begin{corollary}
Every extensional $\BCpmI$-algebra is isomorphic to one arising from a reflexive object in a braided monoidal category.
\end{corollary}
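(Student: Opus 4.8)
The plan is to chain together the two propositions immediately preceding the corollary, exactly as Corollary \ref{cor:Scott:planar} was obtained in the planar case. Given an extensional $\BCpmI$-algebra $\calA$, the preceding proposition supplies its internal PRO $\IA$, and asserts that $\IA$ is a reflexive PROB: in particular $\IA$ is a (strict) braided monoidal category, and its generating object $1$ is a reflexive object with application $\BB\colon 2\rightarrow 1$. Proposition \ref{prop:refl:braided} then applies to this reflexive object in a braided monoidal category and produces an extensional $\BCpmI$-algebra on the hom-set $\IA(I,1)=\IA(0,1)$. Since the same preceding proposition identifies $\IA(0,1)=\calA^\bullet$, equipped with an extensional $\BCpmI$-algebra structure that is isomorphic to $\calA$ via the mutually inverse maps $a\mapsto a^\bullet$ and $a\mapsto a\,\II$, we conclude that $\calA$ is isomorphic to a combinatory algebra arising from a reflexive object in a braided monoidal category, which is precisely the claim.

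The one substantive thing to verify is that the $\BCpmI$-algebra structure that Proposition \ref{prop:refl:braided} puts on $\IA(0,1)$ really coincides with the structure on $\calA^\bullet$ described in the preceding proposition. Concretely, one unfolds $\mathsf{app}=\BB$ and $\mathsf{cur}(a)=(a\,\II)^\bullet\circ\BB^m$ coming from reflexivity of $\IA$, and then checks the identities $a\cdot_{\calA^\bullet} b=(a+b);\mathsf{app}$, $\mathsf{cur}(\mathsf{cur}(\mathsf{cur}((A\otimes\mathsf{app});\mathsf{app})))=\BB^\bullet$, $\mathsf{cur}(\id_A)=\II^\bullet$, $\mathsf{cur}((A\otimes a);\mathsf{app})=a^\bullet$, and the new clause $\mathsf{cur}(\mathsf{cur}(\mathsf{cur}((A\otimes\sigma_{A,A}^{\pm1});(\mathsf{app}\otimes A);\mathsf{app})))=\CC^{\pm\bullet}$, all computed inside $\IA$ using Lemmas \ref{lem:assoc} and \ref{lem:compositionality} (or, more pleasantly, using the string diagrams of $\IA$). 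The planar identities are exactly what is used implicitly in Corollary \ref{cor:Scott:planar}; the braided case adds only the clause for $\CC^\pm$, which is a routine if slightly tedious string-diagram calculation.

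The main — and essentially the only — obstacle is this bookkeeping step: matching the \emph{external} description of the induced combinatory algebra (currying the canonical diagrams in $\IA$) with the \emph{internal} description $\calA^\bullet$, so that the already-established isomorphism $\calA\cong\calA^\bullet$ transports to an isomorphism with the combinatory algebra genuinely produced by the reflexive object of $\IA$. Once this compatibility is recorded, the corollary is immediate and is the expected braided analogue of Corollary \ref{cor:Scott:planar}.
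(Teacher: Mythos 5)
Your argument is exactly the paper's: the corollary is an immediate consequence of the preceding proposition, which already packages the facts that $\IA$ is a reflexive PROB and that the induced algebra $\IA(0,1)=\calA^\bullet$ is an extensional $\BCpmI$-algebra isomorphic to $\calA$ via $a\mapsto a^\bullet$ and $a\mapsto a\,\II$. The compatibility check you flag is indeed the content of that proposition (mirroring the planar case), so your proposal is correct and follows the same route.
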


Similarly, we can characterise {\em extensional $\BCI$-algebras} as extensional $\BCpmI$-algebras in which $\CC^+$ agrees with $\CC^-$.
For the resulting axiomatisation, we refer to \cite{Has22}. The internal PRO $\IA$
of an extensional $\BCI$-algebra $\calA$ is symmetric (i.e., a PROP),
and its reflexive object $1$ induces an extensional $\BCI$-algebra $\calA^\bullet$  which is isomorphic to $\calA$.

\subsection{Adding twists}

Now we shall add a {\em twist} to the internal PROB $\IA$ of an extensional $\BCpmI$-algebra $\calA$
following the discussions on balanced PROB in Section \ref{subsec:braids}.

\begin{definition}
An extensional $\BCpmI$-algebra is {\em balanced}
when it has elements $\btheta^+$ and $\btheta^-$ satisfying
the following axioms.
\setlength{\FrameSep}{4pt}
\begin{oframed}
$$
\begin{array}{rcll}
(\BB\,\btheta^\pm)\circ\BB &=& (\CC^\star\,\BB\,\btheta^\pm)\circ\BB & 
(\theta_1)\\
\btheta^\pm\circ\btheta^\mp &=& \BB\,\II & (\theta_2)\\
\CC^\star\,\btheta^\star\,a&=& \CC^\star\,\II\,a & (\theta_3)\\
\BB\circ\btheta^\pm &=&
\btheta^\pm\circ\CC^\pm\circ\btheta^\pm\circ\CC^\pm\circ\BB & (\theta_4)\\
\end{array}
$$ 
\end{oframed}
\end{definition}
$(\theta_1)$ says $\btheta^+$ and $\btheta^-$ are of arity $1\rightarrow 1$.
$(\theta_2)$ says $\btheta^+$ and $\btheta^-$ are inverse to each other.
$(\theta_3)$ implies $a^\bullet\circ\btheta^\pm=a^\bullet$
(naturality of the twist for $a^\bullet$). 
It also implies
$\btheta^\pm\,a\,b=a\,b$. 
$(\theta_4)$ is the naturality of the twist for $\BB$.
By drawing
{\unitlength=.6pt
\begin{picture}(20,20)
\thicklines
\twist{0}{0}
\end{picture}
}
and 
{\unitlength=.6pt
\begin{picture}(20,20)
\thicklines
\twistInv{0}{0}
\end{picture}
}
for $\btheta^+$ and $\btheta^-$, $(\theta_4)$ is depicted as follows.
\begin{center}
\unitlength=.7pt
\begin{picture}(60,40)
\thicklines
\arcLR{0}{0}{20}{10}
\lineseg{20}{10}{30}{10}
\twist{30}{10}
\lineseg{50}{10}{60}{10}
\put(20,10){\makebox(0,0){$\appnode$}}
\put(30,30){\makebox(0,0){$\BB\circ\btheta^+$}}
\end{picture}
\begin{picture}(40,20)
\put(20,10){\makebox(0,0){$=$}}
\end{picture}
\begin{picture}(120,20)
\thicklines
\arcLR{80}{0}{20}{10}
\lineseg{0}{20}{20}{20}
\twist{0}{0}
\braid{20}{0}{5}{10}
\lineseg{40}{20}{60}{20}
\twist{40}{0}
\braid{60}{0}{5}{10}
\lineseg{100}{10}{120}{10}
\put(100,10){\makebox(0,0){$\appnode$}}
\put(60,30){\makebox(0,0){$\btheta^+\circ\CC^+\circ\btheta^+\circ\CC^+\circ\BB$}}
\end{picture}
\end{center}

\begin{proposition}\label{prop:refl:balanced}
Suppose that $A$ is a reflexive object in a balanced monoidal category $\mathbb{C}$.
Then $\mathbb{C}(I,A)$ is a balanced extensional $\BCpmI$-algebra with (in addition to the constructs from Prop. \ref{prop:refl:planar} and \ref{prop:refl:braided})
$\btheta^\pm=\mathsf{cur}(\mathsf{cur}((A\otimes\theta_A^{\pm1});\mathsf{app}))$.
\end{proposition}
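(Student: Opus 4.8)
The plan is to lean on Proposition~\ref{prop:refl:braided}: it already equips $\mathbb{C}(I,A)$ with the structure of an extensional $\BCpmI$-algebra under the constructs from Propositions~\ref{prop:refl:planar} and \ref{prop:refl:braided}, so the only thing left is to verify the four extra axioms $(\theta_1)$--$(\theta_4)$ for the proposed element $\btheta^\pm=\mathsf{cur}(\mathsf{cur}((A\otimes\theta_A^{\pm1});\mathsf{app}))$. For each I would argue exactly as in the proofs of those earlier propositions: unfolding the explicit descriptions of $\BB$, of composition $\circ$, of $(\_)^\bullet$, and of $\CC^\pm$ (each a concrete morphism built from $\mathsf{app}$ and $\sigma_{A,A}^{\pm1}$ by currying) and using that $\mathsf{cur}$ is a bijection, each of $(\theta_1)$--$(\theta_4)$ becomes, after currying/uncurrying the appropriate number of times, an equation between morphisms of $\mathbb{C}$ assembled from $\mathsf{app}$, $\sigma_{A,A}^{\pm1}$, $\theta_A^{\pm1}$ and coherence isomorphisms, which I would then close by string-diagram manipulation. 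A convenient preliminary computation is that $\btheta^\pm$ behaves, as an element, like the arity $1\to1$ combinator $\lambda f x.\, f\,(\theta_A^{\pm1}\text{-image of }x)$; this already makes $(\theta_1)$ plausible and organizes the remaining checks.

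First I would dispatch $(\theta_1)$ and $(\theta_2)$. For $(\theta_1)$ --- that $\btheta^\pm\colon 1\to1$ --- I would use the braided form of the arity criterion of Definition~\ref{def:arity} (rewritten via $\CC^\star\,\BB\,a=a^\bullet\circ\BB$); after unfolding it reduces to an instance of the naturality of the braiding in $\mathbb{C}$ applied to the endomorphism $\theta_A^{\pm1}$, namely $(\theta_A^{\pm1}\otimes A);\sigma_{A,A}=\sigma_{A,A};(A\otimes\theta_A^{\pm1})$ together with its mirror for $\sigma^{-1}$. For $(\theta_2)$, unfolding turns $\btheta^\pm\circ\btheta^\mp=\BB\,\II$ into $\theta_A^{\pm1};\theta_A^{\mp1}=\id_A$, which is immediate since the twist of $\mathbb{C}$ is an isomorphism.

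For $(\theta_3)$ I would use the observation recorded after the definition that it is equivalent to $a^\bullet\circ\btheta^\pm=a^\bullet$ for every $a$, i.e.\ the naturality square of the twist at the morphism $a^\bullet\colon I\to A$; since $\theta_I=\id_I$ this square reads $a^\bullet;\theta_A=a^\bullet$, which is exactly what unfolding produces. I would be a little careful here to invoke the naturality square at the right morphism, since this is the step that pins down the interaction between the twist and the application.

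The hard part will be $(\theta_4)$, the naturality of the twist for $\BB=\mathsf{app}$. After a careful unfolding --- minding the left/right-tensor asymmetry of internal PROs (so that the two occurrences of $\btheta^\pm$ contribute $\theta_A^{\pm1}$ on the correct tensor factors) and the identification of $\CC^+,\CC^-$ with $\sigma_{A,A}^{\pm1}$ --- the equation $\BB\circ\btheta^\pm=\btheta^\pm\circ\CC^\pm\circ\btheta^\pm\circ\CC^\pm\circ\BB$ should turn into the morphism identity
\[
\mathsf{app};\theta_A^{\pm1}=\sigma_{A,A}^{\pm1};(\theta_A^{\pm1}\otimes\theta_A^{\pm1});\sigma_{A,A}^{\pm1};\mathsf{app},
\]
which is exactly the combination of the twist equation $\theta_{A\otimes A}=\sigma_{A,A};(\theta_A\otimes\theta_A);\sigma_{A,A}$ (and its $\theta^{-1}$ variant) with the naturality square $\mathsf{app};\theta_A^{\pm1}=\theta_{A\otimes A}^{\pm1};\mathsf{app}$ in the balanced monoidal category $\mathbb{C}$. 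The genuine difficulty is pure bookkeeping: getting the uncurried forms right, resolving signs (which of $\CC^+,\CC^-$ plays $\sigma_{A,A}$ versus $\sigma_{A,A}^{-1}$), and keeping $\theta_A^{\pm1}$ on the correct tensor factor given that "$\btheta^\pm+1=\btheta^\pm$" as an element. Once that translation is fixed, the identity is immediate from the balanced monoidal category axioms, so I would budget essentially all the effort for the unfolding.
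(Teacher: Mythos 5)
The paper gives no explicit proof of this proposition---like Propositions \ref{prop:refl:planar} and \ref{prop:refl:braided} it is left as a routine verification by unfolding---and your proposal carries out exactly that verification along the intended lines: each of $(\theta_1)$--$(\theta_4)$ uncurries to a coherence/naturality identity for $\theta_A^{\pm1}$ and $\sigma_{A,A}^{\pm1}$ in $\mathbb{C}$. Your individual reductions are all correct: $(\theta_1)$ to the arity-$1\rightarrow1$ condition settled by naturality of $\sigma$ at $\theta_A^{\pm1}$, $(\theta_2)$ to $\theta_A^{\pm1};\theta_A^{\mp1}=\id_A$, $(\theta_3)$ to naturality of $\theta$ at $a\colon I\rightarrow A$ with $\theta_I=\id_I$, and $(\theta_4)$ to $\theta_{A\otimes A}^{\pm1}=\sigma_{A,A}^{\pm1};(\theta_A^{\pm1}\otimes\theta_A^{\pm1});\sigma_{A,A}^{\pm1}$ combined with naturality of $\theta$ at $\mathsf{app}$.
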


Note that an extensional $\BCI$-algebra is an extensional balanced 
$\BCpmI$-algbera 
with $\CC^+=\CC^-$ and $\btheta^+=\btheta^-=\II$.

Similarly to the case of braids, the naturality of $\btheta^\pm$ for $\BB$ and $a^\bullet$ is sufficient to 
ensure the naturality of the induced twist. Thus the internal PROB is balanced:

\begin{proposition}
The internal PROB of a balanced extensional $\BCpmI$-algebra
is balanced with the twist $\btheta_m\colon m\rightarrow m$ determined by
$\btheta_0=\II$, $\btheta_1=\btheta^+$ and 
$\btheta_{m+1}=\btheta^+\circ\bsigma_{1,m}\circ\btheta_m\circ\bsigma_{m,1}$.
\end{proposition}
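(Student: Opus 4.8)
The plan is to run exactly the argument used to show that the internal PRO of a $\BCpmI$-algebra is a PROB: everything except the naturality of the candidate twist is read straight off the axioms, and the naturality reduces, just as for $\bsigma$, to its naturality with respect to the two ``generators'' $\BB$ and $a^\bullet$, which is precisely what $(\theta_4)$ and $(\theta_3)$ provide.

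First I would check that the recursion makes sense inside $\IA$: by $(\theta_1)$ each of $\btheta^\pm$ is of arity $1\rightarrow 1$, hence also $k+1\rightarrow k+1$ for every $k$ by Lemma \ref{lem:compositionality}; since $\bsigma_{k,l}\colon k+l\rightarrow l+k$ and $\btheta_m$ is built inductively as a composite of such elements, $\btheta_m\colon m\rightarrow m$ follows. The requirement that $\btheta^+$ and $\btheta^-$ be mutually inverse, i.e.\ $\btheta^\pm;\btheta^\mp=\id_1$, is $(\theta_2)$. For naturality I want $\btheta_m\circ h=h\circ\btheta_n$ for every $h\colon m\rightarrow n$; as in the proof for $\bsigma$, it suffices to treat $h$ of arity $m\rightarrow 1$ (a general $h$ is handled via $(\BB\,h)\circ\BB^n\colon m+1\rightarrow 1$), and such an $h$ equals $b^\bullet\circ\BB^m$ by Lemma \ref{lem:compositionality}, so by functoriality of $\circ$ and $+$ it is enough to check naturality with respect to $h=a^\bullet\colon 0\rightarrow 1$ and $h=\BB\colon 2\rightarrow 1$. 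For $a^\bullet$ the statement is $a^\bullet\circ\btheta^+=a^\bullet$, which is the consequence of $(\theta_3)$ recorded just after its statement; for $\BB$ the statement is $\btheta_2\circ\BB=\BB\circ\btheta^+$, and since the recursion (with $m=1$) gives $\btheta_2=\btheta^+\circ\bsigma_{1,1}\circ\btheta_1\circ\bsigma_{1,1}=\btheta^+\circ\CC^+\circ\btheta^+\circ\CC^+$, this is exactly $(\theta_4)$. The negative-sign cases are identical, using $(\theta_3)$ with $\star=-$ and the lower signs of $(\theta_4)$.

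It then remains to see that the $\btheta_m$ defined above really is the twist of a balanced PROB, i.e.\ that $\btheta_{m+n}=\bsigma_{m,n}\circ(\btheta_n+m)\circ\bsigma_{n,m}\circ(\btheta_m+n)$ for all $m,n$; I would prove this by induction on $n$. The case $n=0$ is immediate ($\bsigma_{m,0}=\bsigma_{0,m}=\II$, $\btheta_0=\II$); the case $n=1$ is the recursion of the statement once one is allowed to cycle $\bsigma_{m,1}$ around the composite, which is legitimate after naturality is established, since $\btheta_{m+1}$ then commutes with $\bsigma_{m,1}$; and the step from $n$ to $n+1$ combines the Hexagon-type decompositions of $\bsigma_{m,n+1}$ and $\bsigma_{n+1,m}$ from Section \ref{subsec:braids} with the naturality of $\bsigma$ and $\btheta$ and the induction hypothesis. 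I expect the only real obstacle to be the bookkeeping forced by the left/right-tensor asymmetry of the internal PRO, where $l+(\_)$ is ``apply $\BB$ $l$ times'' while $(\_)+l$ is the identity operation: this makes the symmetric-looking balanced-PROB equations a little delicate to transcribe into combinators, and one has to respect the dependency order, using that $\bsigma$ is already known to be natural (from $\IA$ being a PROB) before rewriting the statement's recursion into the symmetric twist identity. Apart from that, the proof is a routine translation of the balanced-PROB axioms via the dictionary already fixed for $\bsigma$.
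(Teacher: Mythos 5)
Your proposal is correct and follows essentially the same route as the paper, which disposes of this proposition in a single sentence ("the naturality of $\btheta^\pm$ for $\BB$ and $a^\bullet$ is sufficient") by analogy with the braid case: arities from $(\theta_1)$, invertibility from $(\theta_2)$, and naturality reduced via Lemma \ref{lem:compositionality}(6) to the generating instances $a^\bullet$ and $\BB$, which are exactly $(\theta_3)$ and $(\theta_4)$ once one computes $\bsigma_{1,1}=\CC^+$ and $\btheta_2=\btheta^+\circ\CC^+\circ\btheta^+\circ\CC^+$. Your additional induction reconciling the one-step recursion of the statement with the general balanced-PROB identity $\btheta_{m+n}=\bsigma_{m,n}\circ(\btheta_n+m)\circ\bsigma_{n,m}\circ(\btheta_m+n)$ is a detail the paper silently omits, and you are right that it must be deferred until naturality of $\bsigma$ and $\btheta$ is in hand.
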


\begin{corollary}
Every balanced extensional $\BCpmI$-algebra is isomorphic to one arising from a reflexive object in a balanced monoidal category.
\end{corollary}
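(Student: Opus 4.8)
The plan is to follow the same pattern as the corollaries for the planar, braided, and symmetric cases: combine the preceding propositions and upgrade the isomorphism $\calA\cong\calA^\bullet$ to one of \emph{balanced} $\BCpmI$-algebras. Starting from a balanced extensional $\BCpmI$-algebra $\calA$, I would first invoke the immediately preceding Proposition, which gives that the internal PROB $\IA$ is a balanced monoidal category, with braid $\bsigma_{m,n}$ and twist $\btheta_m$ (where $\btheta_0=\II$ and $\btheta_1=\btheta^+$), and that $1$ is a reflexive object in $\IA$ with $\mathsf{app}=\BB\colon 2\rightarrow1$ and $\mathsf{cur}(a)=(a\,\II)^\bullet\circ\BB^m$ for $a\colon m+1\rightarrow1$. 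Feeding this reflexive object of the balanced monoidal category $\IA$ into Proposition \ref{prop:refl:balanced} then shows that $\IA(0,1)=\calA^\bullet$ is a balanced extensional $\BCpmI$-algebra whose underlying $\BCpmI$-algebra is exactly the one already identified (and shown isomorphic to $\calA$) in the braided case.

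What remains is to check that the isomorphism $a\mapsto a^\bullet\colon\calA\stackrel{\cong}{\rightarrow}\calA^\bullet$ from the braided case (with inverse $a\mapsto a\,\II$) also respects the twist combinators, i.e.\ that $(\btheta^\pm)^\bullet=\btheta^\pm_{\calA^\bullet}$. By Proposition \ref{prop:refl:balanced} applied inside $\IA$ --- whose twist at the object $1$ is $\btheta_1=\btheta^+$, with inverse $\btheta^-$ --- the right-hand side is $\mathsf{cur}(\mathsf{cur}((1+\btheta^\pm);\BB))$, where $1+\btheta^\pm=\BB\,\btheta^\pm$ is the left tensor and $(\_);\BB=(\_)\circ\BB$ is composition with $\mathsf{app}=\BB$. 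Unwinding $\mathsf{cur}$ twice via $\mathsf{cur}(g)=(g\,\II)^\bullet\circ\BB^m$ and simplifying with $\BB\,\II=\II$, Lemma \ref{lem:assoc} ($a\circ\II=a$), and the equations $\BB\,a\,b=a\circ b$, $\BB\,a\,b\,c=a\,(b\,c)$, $a^\bullet\,b=b\,a$, $\II\,a=a$, one computes $((\BB\,\btheta^\pm)\circ\BB)\,\II=\btheta^\pm$, hence $\mathsf{cur}((\BB\,\btheta^\pm)\circ\BB)=(\btheta^\pm)^\bullet\circ\BB$, and then $((\btheta^\pm)^\bullet\circ\BB)\,\II=\btheta^\pm$, hence $\mathsf{cur}((\btheta^\pm)^\bullet\circ\BB)=(\btheta^\pm)^\bullet$, which is the required identity.

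Putting the two steps together, $a\mapsto a^\bullet$ is an isomorphism of balanced extensional $\BCpmI$-algebras from $\calA$ onto $\calA^\bullet=\IA(0,1)$, and the latter arises from the reflexive object $1$ in the balanced monoidal category $\IA$; this proves the corollary. The only part that needs genuine work is the double unwinding of the currying in the second step, and even there the subtlety is just bookkeeping --- keeping straight the asymmetry between left tensor ($\BB\,(\_)$) and right tensor (nothing to do) noted after the definition of $\IA$, so that $(1+\btheta^\pm);\BB$ is read correctly as $(\BB\,\btheta^\pm)\circ\BB$. I do not anticipate a real obstacle: everything runs in parallel with the braided-case corollary, with the twist contributing just this one extra equational verification.
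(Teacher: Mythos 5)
Your proposal is correct and follows exactly the route the paper intends: the corollary is an immediate consequence of the preceding propositions (the internal PROB $\IA$ is a balanced monoidal category with reflexive object $1$, and $\IA(0,1)=\calA^\bullet\cong\calA$), with the only new content being the check that $a\mapsto a^\bullet$ carries $\btheta^\pm$ to the twist combinator induced by Proposition \ref{prop:refl:balanced}. Your double unwinding of $\mathsf{cur}$ yielding $(\btheta^\pm)^\bullet$ is the right bookkeeping and matches what the paper leaves implicit.
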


\section{Ribbon categories with a reflexive object}\label{sec:ribbon}

Starting with combinatory algebras which give rise to reflexive PROs, we have enriched our algebras with braids
and twists.
The next step is to add trace and/or duality. But before doing that, it is useful to look at the situation from the categorical side.
Specifically, we consider ribbon categories with a reflexive object and figure out what the combinatory algebra arising from the 
reflexive object looks like. Most interestingly, we will see that these combinatory algebras have an element whose application
sends elements of arity $1+m\rightarrow 1+n$ to its (left) trace of arity $m\rightarrow n$.

\subsection{Reflexive objects in ribbon categories}
Suppose that $A$ is an object of a ribbon category $\mathbb{C}$
with an isomorphism
$\varphi\colon A\otimes A^*\stackrel{\cong}{\rightarrow} A$, 
i.e., $A$ is a reflexive object.
Define $\app\colon A\otimes A\rightarrow A$ and
$\lam\colon A\rightarrow A\otimes A$ by
$$
\app~=~
A\otimes A\xrightarrow{\varphi^{-1}\otimes A}
A\otimes A^*\otimes A \xrightarrow{A\otimes\varepsilon}
A
$$
$$
\begin{array}{rcl}
\lam
&=&
A\xrightarrow{A\otimes\eta}
A\otimes A\otimes A^* \xrightarrow{A\otimes A\otimes\theta^{-1}}
A\otimes A\otimes A^* 
\xrightarrow{A\otimes\sigma^{-1}}
A\otimes A^*\otimes A \xrightarrow{\varphi\otimes A}
A\otimes A
\\
&=&
A\xrightarrow{A\otimes\eta}
A\otimes A\otimes A^* \xrightarrow{A\otimes A\otimes\theta}
A\otimes A\otimes A^* 
\xrightarrow{A\otimes\sigma}
A\otimes A^*\otimes A \xrightarrow{\varphi\otimes A}
A\otimes A
\end{array}
$$
\begin{center}
\unitlength=.7pt
\begin{picture}(60,50)
\thicklines
\lineseg{0}{30}{20}{30}
\lineseg{0}{10}{20}{10}
\lineseg{40}{20}{60}{20}
\rgbfbox{20}{10}{20}{20}{.8}{.8}{1}{\tiny$\app$}
\end{picture}
\begin{picture}(40,50)
\put(20,20){\makebox(0,0){$=$}}
\end{picture}
\begin{picture}(60,50)
\thicklines
\lineseg{0}{40}{50}{40}
\arcLR{50}{20}{10}{10}
\lineseg{0}{10}{20}{10}
\lineseg{40}{20}{50}{20}
\lineseg{40}{0}{60}{0}
\rgbfbox{20}{0}{20}{20}{1}{1}{.5}{\small$\varphi\!\!^{-1}$}%
\end{picture}
\end{center}
\begin{center}
\unitlength=.7pt
\begin{picture}(60,50)
\thicklines
\lineseg{40}{30}{60}{30}
\lineseg{40}{10}{60}{10}
\lineseg{0}{20}{20}{20}
\rgbfbox{20}{10}{20}{20}{1}{.5}{.5}{\tiny$\lam$}
\end{picture}
\begin{picture}(40,50)
\put(20,20){\makebox(0,0){$=$}}
\end{picture}
\begin{picture}(100,50)
\thicklines
\arcLR{10}{20}{-10}{10}
\twistInv{20}{40}
\braidInv{40}{20}{5}{10}
\lineseg{0}{0}{60}{0}
\lineseg{10}{40}{20}{40}
\lineseg{10}{20}{40}{20}
\lineseg{60}{40}{100}{40}
\lineseg{80}{10}{100}{10}
\put(60,0){\framebox(20,20){\small$\varphi$}}
\rgbfbox{60}{0}{20}{20}{.5}{1}{1}{\small$\varphi$}%
\end{picture}
\begin{picture}(40,50)
\put(20,20){\makebox(0,0){$=$}}
\end{picture}
\begin{picture}(100,50)
\thicklines
\arcLR{10}{20}{-10}{10}
\twist{20}{40}
\braid{40}{20}{5}{10}
\lineseg{0}{0}{60}{0}
\lineseg{10}{40}{20}{40}
\lineseg{10}{20}{40}{20}
\lineseg{60}{40}{100}{40}
\lineseg{80}{10}{100}{10}
\put(60,0){\framebox(20,20){\small$\varphi$}}
\rgbfbox{60}{0}{20}{20}{.5}{1}{1}{\small$\varphi$}%
\end{picture}
\end{center}
Using $\lam$ and trace, we can express the currying
$$
\mathsf{cur}(f)=\mathit{Tr}^A_{X,A}(f;\lam)
=(X\otimes \eta_A);(f\otimes A^*);\varphi
\colon X\rightarrow A $$
for $f\colon X\otimes A\rightarrow A$.
\begin{center}
\unitlength=.7pt
\begin{picture}(100,45)
\thicklines
\arcLR{10}{25}{-10}{10}
\lineseg{10}{45}{90}{45}
\arcLR{90}{25}{10}{10}
\lineseg{0}{5}{20}{5}
\lineseg{10}{25}{20}{25}
\lineseg{40}{15}{60}{15}
\lineseg{80}{25}{90}{25}
\lineseg{80}{5}{100}{5}
\rgbfbox{20}{0}{20}{30}{.5}{1}{.5}{\scriptsize$f$}%
\rgbfbox{60}{5}{20}{20}{1}{.5}{.5}{\tiny$\lam$}%
\end{picture}
\begin{picture}(40,45)
\put(20,20){\makebox(0,0){$=$}}
\end{picture}
\begin{picture}(100,45)
\thicklines
\arcLR{10}{25}{-10}{10}
\lineseg{10}{45}{30}{45}
\qbezier(30,45)(40,45)(45,40)\qbezier(45,40)(50,35)(60,35)
\lineseg{0}{5}{20}{5}
\lineseg{10}{25}{20}{25}
\lineseg{40}{15}{60}{15}
\lineseg{80}{25}{100}{25}
\rgbfbox{20}{0}{20}{30}{.5}{1}{.5}{\scriptsize$f$}%
\rgbfbox{60}{15}{20}{20}{.5}{1}{1}{\small$\varphi$}%
\end{picture}
\end{center}
\begin{lemma}\label{lem:beta-eta}
The following equations hold.
\begin{itemize}
\item The $\beta$-equality:
$(\lam\otimes A);(A\otimes\sigma_{A,A}^{-1});(\app\otimes A)
=A\otimes\theta_A^{-1}\colon A\rightarrow A.$
\begin{center}
\unitlength=.7pt
\begin{picture}(100,50)
\thicklines
\lineseg{0}{40}{40}{40}
\lineseg{0}{10}{20}{10}
\lineseg{40}{0}{60}{0}
\lineseg{60}{40}{100}{40}
\lineseg{80}{10}{100}{10}
\braidInv{40}{20}{5}{10}
\rgbfbox{20}{0}{20}{20}{1}{.5}{.5}{\tiny$\lam$}%
\rgbfbox{60}{0}{20}{20}{.8}{.8}{1}{\tiny$\app$}%
\end{picture}
\begin{picture}(40,50)
\put(20,20){\makebox(0,0){$=$}}
\end{picture}
\begin{picture}(60,50)
\thicklines
\lineseg{0}{40}{20}{40}
\qbezier(20,40)(27,40)(28,41)\qbezier(32,44)(37,50)(30,50)
\qbezier(30,50)(23,50)(30,42)\qbezier(30,42)(33,40)(40,40)
\lineseg{40}{40}{60}{40}
\lineseg{0}{10}{60}{10}
\end{picture}
\end{center}
\item The $\eta$-equality:
$\mathit{Tr}^A_{A,A}(\app;\lam)=\mathit{id}_A\colon A\rightarrow A.$
\begin{center}
\unitlength=.7pt
\begin{picture}(100,40)
\thicklines
\qbezier(10,20)(0,20)(0,30)\qbezier(0,30)(0,40)(10,40)
\qbezier(90,40)(100,40)(100,30)\qbezier(100,30)(100,20)(90,20)
\lineseg{0}{0}{20}{0}
\lineseg{10}{40}{90}{40}
\lineseg{10}{20}{20}{20}
\lineseg{40}{10}{60}{10}
\lineseg{80}{20}{90}{20}
\lineseg{80}{0}{100}{0}
\rgbfbox{20}{0}{20}{20}{.8}{.8}{1}{\tiny$\app$}%
\rgbfbox{60}{0}{20}{20}{1}{.5}{.5}{\tiny$\lam$}%
\end{picture}
\begin{picture}(40,40)
\put(20,20){\makebox(0,0){$=$}}
\end{picture}
\begin{picture}(40,40)
\thicklines
\lineseg{0}{20}{40}{20}
\end{picture}
\end{center}
\end{itemize}
\end{lemma}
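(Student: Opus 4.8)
The plan is to reduce both identities to the two snake equations for the dual pairing $(\eta_A,\varepsilon_A)$, to the invertibility of $\varphi$, and to the coherence laws of the balanced and ribbon structure (naturality of $\sigma$ and $\theta$, the hexagon, $\theta_{A\otimes B}=\sigma_{A,B};(\theta_B\otimes\theta_A);\sigma_{B,A}$ and $\theta_{A^*}=(\theta_A)^*$), all of which are available since $\mathbb{C}$ is a ribbon category. The $\eta$-equality is the easy half. First observe that, by the definition of $\mathsf{cur}$, the left-hand side $\mathit{Tr}^A_{A,A}(\app;\lam)$ is literally $\mathsf{cur}(\app)$ (take $f=\app\colon A\otimes A\to A$, so $X=A$); hence the claim amounts to $\mathsf{cur}(\app)=\id_A$, and by the currying formula $\mathsf{cur}(f)=(X\otimes\eta_A);(f\otimes A^*);\varphi$ recorded above it suffices to prove $(A\otimes\eta_A);(\app\otimes A^*);\varphi=\id_A$. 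For this I would expand $\app=(\varphi^{-1}\otimes A);(A\otimes\varepsilon_A)$ and rearrange by bifunctoriality of $\otimes$: the $\eta_A$ contributed by the outermost factor and the $\varepsilon_A$ coming from $\app$ then act on the two legs of $\varphi^{-1}$ exactly in the shape of the second snake equation $(A^*\otimes\eta_A);(\varepsilon_A\otimes A^*)=\id_{A^*}$; collapsing it leaves $\varphi^{-1};\varphi=\id_A$.

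For the $\beta$-equality I would take $\lam$ in its $(A\otimes\eta_A);(A\otimes A\otimes\theta_{A^*}^{-1});(A\otimes\sigma^{-1});(\varphi\otimes A)$ form and expand both $\lam$ and $\app$ inside $(\lam\otimes A);(A\otimes\sigma_{A,A}^{-1});(\app\otimes A)$. The first simplification is that the $\varphi$ closing $\lam\otimes A$ and the $\varphi^{-1}$ opening $\app\otimes A$ lie on the same wire, while the intervening $A\otimes\sigma_{A,A}^{-1}$ acts only on the other two wires; so they commute together and cancel by $\varphi;\varphi^{-1}=\id_{A\otimes A^*}$. What remains is a morphism of $A\otimes A$ built only from $\eta_A$, $\varepsilon_A$, two braid crossings (the one inside $\lam$ and the relocated $\sigma_{A,A}^{-1}$) and the single twist $\theta_{A^*}^{-1}$, with the first input wire passing straight through to the first output. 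I would then yank the $\eta_A$-$\varepsilon_A$ cup-cap with the snake equation, which routes the second input wire to the second output through those two crossings and the twist; the two crossings become a curl at the cup and a curl at the cap, and by the balance and ribbon identities these contribute mutually inverse twists, so the net decoration on the routed wire is just $\theta_{A^*}^{-1}$, carried across $\varepsilon_A$ to $\theta_A^{-1}$ by $\theta_{A^*}=(\theta_A)^*$. This gives $A\otimes\theta_A^{-1}$.

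The step I expect to be the main obstacle is exactly this last one in the $\beta$-case: the framing bookkeeping, i.e. checking that the two braid crossings together with the explicit $\theta_{A^*}^{-1}$ in $\lam$ collapse to precisely one $\theta_A^{-1}$ and leave no residual crossing. This is what the $\theta^{-1}$ in the definition of $\lam$ is there to make work, so I do not expect a genuine surprise, but the crossing-and-twist accounting must be carried out carefully; doing it in the graphical calculus of ribbon categories keeps it manageable. The $\eta$-case needs none of this and is essentially a one-line calculation once $\app$ and the currying formula are unfolded.
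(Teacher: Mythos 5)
Your proof is correct, and it is essentially the only sensible route: the paper states Lemma~\ref{lem:beta-eta} without proof, treating it as a routine string-diagram verification, and your argument (the $\eta$-case via $\mathsf{cur}(\app)=\id_A$, unfolding $\app$ and applying the second snake equation around $\varphi^{-1};\varphi$; the $\beta$-case via cancelling the adjacent $\varphi;\varphi^{-1}$ and then straightening the $\eta$--$\varepsilon$ zigzag) is exactly that verification. The one step you flag as delicate does check out: after straightening, the two inverse crossings become curls of opposite sign --- at the $\sigma_{A,A}^{-1}$ crossing both strands are traversed forwards, giving a negative curl, while at the $\sigma_{A^*,A}^{-1}$ crossing the $A^*$-leg is traversed backwards, giving a positive curl --- so they cancel and only the explicit $\theta_{A^*}^{-1}$ survives, which $\theta_{A^*}=(\theta_A)^*$ transports across $\varepsilon_A$ to the claimed $\theta_A^{-1}$.
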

The $\beta$-equality above corresponds to the $\beta$-rule of the trivalent graph representation of lambda terms 
mentioned in the introduction.
Note that the usual $\beta\eta$-equalities
$$
(\mathsf{cur}(f)\otimes A);\app=f
\hspace{8pt}(f\colon X\otimes A\rightarrow A)
$$
\begin{center}
\unitlength=.7pt
\begin{picture}(160,65)
\thicklines
\qbezier(10,25)(0,25)(0,35)\qbezier(0,35)(0,45)(10,45)
\lineseg{10}{45}{90}{45}
\qbezier(90,45)(100,45)(100,35)\qbezier(100,35)(100,25)(90,25)
\lineseg{0}{5}{20}{5}
\lineseg{10}{25}{20}{25}
\lineseg{40}{15}{60}{15}
\lineseg{80}{25}{90}{25}
\qbezier(80,5)(90,5)(100,15)\qbezier(100,15)(110,25)(120,25)
\lineseg{0}{65}{80}{65}
\qbezier(80,65)(90,65)(100,55)\qbezier(100,55)(110,45)(120,45)
\lineseg{140}{35}{160}{35}
\rgbfbox{20}{0}{20}{30}{.5}{1}{.5}{\scriptsize$f$}%
\rgbfbox{60}{5}{20}{20}{1}{.5}{.5}{\tiny$\lam$}%
\rgbfbox{120}{25}{20}{20}{.8}{.8}{1}{\tiny$\app$}%
\end{picture}
\begin{picture}(40,65)
\put(20,30){\makebox(0,0){$=$}}
\end{picture}
\begin{picture}(100,65)(0,-20)
\thicklines
\lineseg{0}{5}{20}{5}
\lineseg{0}{25}{20}{25}
\lineseg{40}{15}{60}{15}
\rgbfbox{20}{0}{20}{30}{.5}{1}{.5}{\scriptsize$f$}%
\end{picture}
\end{center}
$$
\mathsf{cur}((g\otimes A);\app)=g
\hspace{8pt}(g\colon X\rightarrow A)
$$
\begin{center}
\unitlength=.7pt
\begin{picture}(140,60)(-40,-15)
\thicklines
\qbezier(-30,20)(-40,20)(-40,30)\qbezier(-40,30)(-40,40)(-30,40)
\qbezier(90,40)(100,40)(100,30)\qbezier(100,30)(100,20)(90,20)
\lineseg{0}{0}{20}{0}
\lineseg{-30}{40}{90}{40}
\lineseg{-30}{20}{20}{20}
\lineseg{40}{10}{60}{10}
\lineseg{80}{20}{90}{20}
\lineseg{80}{0}{100}{0}
\lineseg{-40}{10}{-20}{10}
\lineseg{-40}{-10}{-20}{-10}
\rgbfbox{-20}{-15}{20}{30}{1}{1}{.5}{\scriptsize$g$}%
\rgbfbox{20}{0}{20}{20}{.8}{.8}{1}{\tiny$\app$}%
\rgbfbox{60}{0}{20}{20}{1}{.5}{.5}{\tiny$\lam$}%
\end{picture}
\begin{picture}(40,40)
\put(20,20){\makebox(0,0){$=$}}
\end{picture}
\begin{picture}(100,65)(0,-10)
\thicklines
\lineseg{0}{5}{20}{5}
\lineseg{0}{25}{20}{25}
\lineseg{40}{15}{60}{15}
\rgbfbox{20}{0}{20}{30}{1}{1}{.5}{\scriptsize$g$}%
\end{picture}
\end{center}
easily follow from Lemma \ref{lem:beta-eta}.
A useful reformulation of the $\beta$-equality of Lemma \ref{lem:beta-eta} is
\begin{proposition} \label{prop:trace}
For $f\colon A\otimes B\rightarrow A\otimes C$, 
$$
(\lam\otimes B);(A\otimes f);(\app\otimes C)=A\otimes\mathit{Tr}_L^A(f)\colon A\otimes B\rightarrow A\otimes C
$$
holds,
where the left trace ${\mathit{Tr}_L}^X_{B,C}\colon \mathbb{C}(X\otimes B,X\otimes C)\rightarrow\mathbb{C}(B,C)$
is given by ${\mathit{Tr}_L}^X_{B,C}(f)=\mathit{Tr}^X_{B,C}( \sigma_{X,B}^{-1};f;\sigma_{C,X})$.
\begin{center}
\unitlength=.7pt
\begin{picture}(140,45)
\thicklines
\lineseg{0}{10}{20}{10}
\lineseg{0}{40}{60}{40}
\lineseg{40}{0}{100}{0}
\qbezier(40,20)(45,20)(50,25)\qbezier(50,25)(55,30)(60,30)
\lineseg{80}{40}{140}{40}
\qbezier(80,30)(85,30)(90,25)\qbezier(90,25)(95,20)(100,20)
\lineseg{120}{10}{140}{10}
\rgbfbox{60}{25}{20}{20}{.5}{1}{.5}{\scriptsize$f$}%
\rgbfbox{20}{0}{20}{20}{1}{.5}{.5}{\tiny$\lam$}%
\rgbfbox{100}{0}{20}{20}{.8}{.8}{1}{\tiny$\app$}%
\end{picture}
\begin{picture}(40,45)
\put(20,20){\makebox(0,0){$=$}}
\end{picture}
\begin{picture}(100,45)
\thicklines
\lineseg{0}{0}{100}{0}
\lineseg{0}{40}{40}{40}
\lineseg{60}{40}{100}{40}
\lineseg{30}{30}{70}{30}
\qbezier(30,30)(20,30)(20,20)\qbezier(20,20)(20,10)(30,10)
\qbezier(70,30)(80,30)(80,20)\qbezier(80,20)(80,10)(70,10)
\lineseg{30}{10}{70}{10}
\rgbfbox{40}{25}{20}{20}{.5}{1}{.5}{\scriptsize$f$}%
\end{picture}
\begin{picture}(40,45)
\put(20,20){\makebox(0,0){$=$}}
\end{picture}
\begin{picture}(100,50)
\thicklines
\lineseg{0}{0}{100}{0}
\qbezier(20,50)(10,50)(10,40)\qbezier(10,40)(10,30)(20,30)
\qbezier(80,50)(90,50)(90,40)\qbezier(90,40)(90,30)(80,30)
\lineseg{20}{50}{80}{50}
\braidInv{20}{20}{5}{5}
\lineseg{0}{20}{20}{20}
\braid{60}{20}{5}{5}
\lineseg{80}{20}{100}{20}
\rgbfbox{40}{15}{20}{20}{.5}{1}{.5}{\scriptsize$f$}%
\end{picture}
\end{center}
\end{proposition}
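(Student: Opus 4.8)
The plan is to unfold the definitions of $\lam$ and $\app$ inside $(\lam\otimes B);(A\otimes f);(\app\otimes C)$ and simplify the resulting string diagram, so that it becomes, after a single cancellation, a closed $A$-loop around $f$ — which is exactly the left trace $\mathit{Tr}_L^A(f)$. In this way Proposition~\ref{prop:trace} subsumes the $\beta$-equality of Lemma~\ref{lem:beta-eta}, which is the case $B=C=A$, $f=\sigma_{A,A}^{-1}$.

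Concretely, I would first substitute $\app=(\varphi^{-1}\otimes A);(A\otimes\varepsilon_A)$ together with one of the two equal defining expressions of $\lam$. The strand carrying the output of $\varphi$ (the last factor of $\lam\otimes B$) is untouched by $A\otimes f$ and re-enters through the $\varphi^{-1}$ at the start of $\app\otimes C$; since $\varphi$ is an isomorphism, by functoriality of $\otimes$ the composite $\varphi;\varphi^{-1}$ on that strand collapses to the identity. Once this is cancelled, one of the two $A$-strands produced by $\eta_A$ — the original input — meets neither the twist, the braid, $f$, nor $\varepsilon_A$, so the left-hand side equals $\id_A$ tensored with
\[
(\eta_A\otimes B);(A\otimes\theta_{A^*}\otimes B);(\sigma_{A,A^*}\otimes B);(A^*\otimes f);(\varepsilon_A\otimes C)\colon B\longrightarrow C
\]
(with $\theta^{\pm}$ and $\sigma^{\pm}$ according to which expression for $\lam$ was used). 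It then remains to recognise this composite as $\mathit{Tr}_L^A(f)$: by definition $\mathit{Tr}_L^A(f)=\mathit{Tr}^A_{B,C}(\sigma_{A,B}^{-1};f;\sigma_{C,A})$, and expanding the outer trace by the explicit formula for the ribbon trace recalled in Section~\ref{sec:prelim} gives a diagram in which the created $A$-loop likewise carries a single twist and two structural crossings. The two diagrams differ only in where the $\eta_A/\varepsilon_A$ cups and caps sit and in whether the twist sits on the $A$- or the $A^*$-strand, and they are matched using the snake equations, the naturality of $\sigma$ and $\theta$, the Reidemeister~II/III relations, and the ribbon axiom $(\theta_A)^*=\theta_{A^*}$ — the latter being exactly what lets one drag the twist of $\lam$ across the bend so that it becomes the twist of the trace formula.

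The cancellation of $\varphi;\varphi^{-1}$ and the rewriting into the displayed composite should be routine; the delicate point will be the final identification, where one must keep careful track of the orientations and of the signs on the braid and twist coming out of $\lam$ and $\app$ and check that they line up with $\sigma_{A,A^*}$ and $\theta_A$ in the ribbon trace formula once the $\eta/\varepsilon$-bends are straightened. A way to avoid most of this bookkeeping is to invoke directly the yanking and dinaturality axioms satisfied by the canonical trace of a ribbon category: closing an $A$-loop around $f$ (with the single twist prescribed by the trace formula) computes $\mathit{Tr}^A$ of $f$ up to conjugation by braids, and the conjugation produced by the braid inside $\lam$ is precisely the one defining $\mathit{Tr}_L$.
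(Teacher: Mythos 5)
Your proposal is correct and follows essentially the route the paper (implicitly) takes: the proposition is stated there without proof as a reformulation of the $\beta$-equality of Lemma \ref{lem:beta-eta}, the intended argument being exactly your unfolding of $\lam$ and $\app$, cancellation of $\varphi;\varphi^{-1}$, and identification of the residual closed $A$-loop (with its twist and crossings) with $\mathit{Tr}_L^A(f)$ via the ribbon-category axioms. One cosmetic slip: the strand that passes through untouched is the original input of $\lam$, not one of the strands produced by $\eta_A$ (which yields one $A$- and one $A^*$-strand); your displayed composite and the conclusion that the left-hand side equals $\id_A$ tensored with it are nevertheless correct.
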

By letting $f$ be $\sigma^{-1}_{A,A}$ in Proposition \ref{lem:beta-eta}, we recover the $\beta$-equality of
Lemma \ref{lem:beta-eta}.
Proposition \ref{prop:trace} will be useful for defining the trace combinator below.
As an iterated version of this, we have: 

\begin{corollary} \label{cor:trace}
For $f\colon A^{\otimes m}\otimes B\rightarrow A^{\otimes m}\otimes C$, 
$$
(\lam_m\otimes B);(A\otimes f);(\app_m\otimes C)=A\otimes\mathit{Tr}_L^{A^{\otimes m}}(f)\colon A\otimes B\rightarrow A\otimes C
$$
where $\lam_m\colon A\rightarrow A^{\otimes (m+1)}$ is defined 
by $\lam_0=\mathit{id}_A$ and $\lam_{m+1}=\lam_m;(\lam\otimes A^{\otimes m})$,
and $\app_n\colon A^{\otimes{n+1}}\rightarrow A$ by
$\app_0=\mathit{id}_A$ and $\app_{n+1}=(\app\otimes A^{\otimes n});\app_n$.
\begin{center}
\unitlength=.7pt
\begin{picture}(140,45)
\thicklines
\lineseg{0}{10}{20}{10}
\lineseg{0}{40}{60}{40}
\lineseg{40}{0}{100}{0}
\lineseg{40}{30}{100}{30}
\lineseg{40}{20}{100}{20}
\lineseg{80}{40}{140}{40}
\lineseg{120}{10}{140}{10}
\rgbfbox{60}{15}{20}{30}{.5}{1}{.5}{\scriptsize$f$}%
\rgbfbox{15}{0}{25}{30}{1}{.5}{.5}{\tiny$\lam_2$}%
\rgbfbox{100}{0}{25}{30}{.8}{.8}{1}{\tiny$\app_2$}%
\end{picture}
\begin{picture}(40,45)
\put(20,20){\makebox(0,0){$=$}}
\end{picture}
\begin{picture}(100,45)
\thicklines
\lineseg{0}{0}{100}{0}
\lineseg{0}{40}{40}{40}
\lineseg{60}{40}{100}{40}
\lineseg{30}{30}{70}{30}
\lineseg{30}{20}{70}{20}
\qbezier(30,30)(20,30)(20,17.5)\qbezier(20,17.5)(20,5)(30,5)
\qbezier(70,30)(80,30)(80,17.5)\qbezier(80,17.5)(80,5)(70,5)
\qbezier(30,20)(25,20)(25,15)\qbezier(25,15)(25,10)(30,10)
\qbezier(70,20)(75,20)(75,15)\qbezier(75,15)(75,10)(70,10)
\lineseg{30}{10}{70}{10}
\lineseg{30}{5}{70}{5}
\rgbfbox{40}{15}{20}{30}{.5}{1}{.5}{\scriptsize$f$}%
\end{picture}
\end{center}
\end{corollary}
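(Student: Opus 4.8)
The plan is to induct on $m$, using Proposition~\ref{prop:trace} --- which is the case $m=1$ --- to carry out the inductive step. For the base case $m=0$ we have $\lam_0=\app_0=\id_A$, and $\mathit{Tr}_L^{A^{\otimes 0}}=\mathit{Tr}_L^I$ is the identity operation (braidings with $I$ are identities and $\mathit{Tr}^I$ is the identity by the vanishing axiom), so both sides collapse to $A\otimes f$.

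For the inductive step I would assume the statement for $m$ and take $f\colon A^{\otimes(m+1)}\otimes B\rightarrow A^{\otimes(m+1)}\otimes C$. First, unfolding the recursive definitions of $\lam_{m+1}$ and $\app_{m+1}$ and using bifunctoriality of $\otimes$, one checks $\lam_{m+1}\otimes B=(\lam_m\otimes B);(\lam\otimes A^{\otimes m}\otimes B)$ and $\app_{m+1}\otimes C=(\app\otimes A^{\otimes m}\otimes C);(\app_m\otimes C)$, so that the left-hand side of the Corollary factors as $(\lam_m\otimes B);\bigl[(\lam\otimes A^{\otimes m}\otimes B);(A\otimes f);(\app\otimes A^{\otimes m}\otimes C)\bigr];(\app_m\otimes C)$. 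Next, I would apply Proposition~\ref{prop:trace} to the bracketed morphism, reading $f$ as a map $A\otimes(A^{\otimes m}\otimes B)\rightarrow A\otimes(A^{\otimes m}\otimes C)$, i.e.\ taking the ``$B$''/``$C$'' of that proposition to be $A^{\otimes m}\otimes B$ and $A^{\otimes m}\otimes C$: the bracket equals $A\otimes\mathit{Tr}_L^A(f)$, where $\mathit{Tr}_L^A(f)\colon A^{\otimes m}\otimes B\rightarrow A^{\otimes m}\otimes C$ is the left trace over the leftmost $A$-factor. The composite then becomes $(\lam_m\otimes B);(A\otimes\mathit{Tr}_L^A(f));(\app_m\otimes C)$, to which the induction hypothesis applies with $g:=\mathit{Tr}_L^A(f)$, yielding $A\otimes\mathit{Tr}_L^{A^{\otimes m}}(\mathit{Tr}_L^A(f))$.

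The remaining step is to identify $\mathit{Tr}_L^{A^{\otimes m}}(\mathit{Tr}_L^A(f))$ with $\mathit{Tr}_L^{A^{\otimes(m+1)}}(f)$, i.e.\ that tracing out the leftmost copy of $A$ and then the remaining $m$ copies agrees with tracing out all $m+1$ copies at once. This is the vanishing axiom for the left trace; it follows from the standard trace axioms \cite{JSV96} together with naturality of the braiding --- for instance by noting that $\mathit{Tr}_L$ is the canonical trace of the ribbon category obtained from $\mathbb{C}$ by reversing the tensor order and inverting the braiding, hence inherits all the trace laws. I expect this bookkeeping --- keeping the braidings, the associativity coherences, and the order of the nested traces straight --- to be the one delicate point; the rest is just unfolding definitions and bifunctoriality. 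Alternatively, the whole induction can be run directly at the level of the string diagrams in the statement, deforming the $m$ feedback loops produced by $\lam_m$ and $\app_m$ into a single $A^{\otimes m}$-loop.
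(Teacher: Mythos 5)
Your proposal is correct and matches the paper's intent: the paper states this corollary without a separate proof, presenting it precisely as the ``iterated version'' of Proposition~\ref{prop:trace}, which is exactly the induction you carry out. Your handling of the one nontrivial point --- that $\mathit{Tr}_L^{A^{\otimes m}}(\mathit{Tr}_L^A(f))=\mathit{Tr}_L^{A^{\otimes(m+1)}}(f)$ via the vanishing law for the left trace (inherited from the canonical trace of the tensor-reversed ribbon category) --- is sound.
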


\paragraph{Remark on the trivial loop on a reflexive object.}

Let 
$$\bigcirc=\mathit{Tr}^A(\mathit{id}_A)\colon I\rightarrow I$$
which we may call
the {\em trivial loop} on $A$ (or the dimension of $A$). 
In general, $\bigcirc$ may not agree with $\id_I$ --- for example, the PRO in Example \ref{ex:pro} is
actually compact closed, in which $\bigcirc=0$ while $\id_I=1$.
Since we assumed $A\otimes A^*\cong A$, (by noting that $X^{**}\cong X$ and $(X\otimes Y)^*\cong Y^*\otimes X^*$ hold in ribbon categories)
we have $A\cong A^*\cong A\otimes A$. 
Explicitly, the isomorphism between $A$ and $A\otimes A$ can be
given as follows.
$$p=
\mathit{Tr}^A_{A.A\otimes A}((A\otimes\lam);(\app\otimes\lam))
\colon A\rightarrow A\otimes A$$
$$p^{-1}=
\mathit{Tr}^A_{A\otimes A,A}((\lam\otimes\app);(A\otimes\app))
\colon A\otimes A\rightarrow A$$
\begin{center}
\unitlength=.7pt
\begin{picture}(50,70)
\put(25,30){\makebox(0,0){$p=$}}
\end{picture}
\begin{picture}(120,70)
\thicklines
\qbezier(20,30)(0,30)(0,50)\qbezier(0,50)(0,70)(20,70)
\qbezier(80,70)(100,70)(100,60)\qbezier(100,60)(100,50)(80,50)
\lineseg{0}{10}{100}{10}
\lineseg{20}{70}{80}{70}
\lineseg{40}{40}{60}{40}
\lineseg{40}{20}{60}{20}
\lineseg{80}{30}{100}{30}
\rgbfbox{20}{20}{20}{20}{1}{.5}{.5}{\tiny$\lam$}%
\rgbfbox{60}{30}{20}{20}{1}{.5}{.5}{\tiny$\lam$}%
\rgbfbox{60}{0}{20}{20}{.8}{.8}{1}{\tiny$\app$}%
\end{picture}
\begin{picture}(50,70)
\put(25,30){\makebox(0,0){$p^{-1}=$}}
\end{picture}
\begin{picture}(100,70)
\thicklines
\qbezier(20,50)(0,50)(0,60)\qbezier(0,60)(0,70)(20,70)
\qbezier(80,30)(100,30)(100,50)\qbezier(100,50)(100,70)(80,70)
\lineseg{0}{30}{20}{30}
\lineseg{0}{10}{20}{10}
\lineseg{20}{70}{80}{70}
\lineseg{40}{40}{60}{40}
\lineseg{40}{20}{60}{20}
\lineseg{40}{0}{100}{0}
\rgbfbox{20}{30}{20}{20}{.8}{.8}{1}{\tiny$\app$}%
\rgbfbox{20}{0}{20}{20}{1}{.5}{.5}{\tiny$\lam$}%
\rgbfbox{60}{20}{20}{20}{.8}{.8}{1}{\tiny$\app$}%
\end{picture}
\end{center}
From this, we have 
$
\mathit{Tr}^A(\mathit{id}_A)
=\mathit{Tr}^A(p;p^{-1})
=\mathit{Tr}^{A\otimes A}(p^{-1};p)
=\mathit{Tr}^{A\otimes A}(\mathit{id}_{A\otimes A})
=\mathit{Tr}^A(\mathit{id}_A)\otimes \mathit{Tr}^A(\mathit{id}_A)
$.
Thus the trivial loop $\bigcirc$ is idempotent
($\bigcirc=\bigcirc\otimes\bigcirc=\bigcirc;\bigcirc$) and 
it does not make sense to count the number of 
occurrences of $\bigcirc$. In Section \ref{sec:comparison}, $\bigcirc$ will play an important role.

\paragraph{Duality on reflexive objects}
In the sequel, we will write $\mathbb{C}|_A$ for the monoidal full subcategory of $\mathbb{C}$ whose objects are 
tensor products of $A$. Since $A^*\cong A$, $\mathbb{C}|_A$ itself is a ribbon category. 
The dual of $A$ is $A$ itself with unit and counit given by
$$
\alpha=\mathit{Tr}^A_{I,A\otimes A}(\lam;(A\otimes\lam))\colon I\rightarrow A\otimes A
$$
$$
\beta=\mathit{Tr}^A_{A\otimes A,I}((A\otimes\app);\app)\colon A\otimes A\rightarrow I
$$
\begin{center}
\unitlength=.7pt
\begin{picture}(95,60)(45,35)
\thicklines
\qbezier(80,50)(85,50)(90,45)\qbezier(90,45)(95,40)(100,40)
\lineseg{100}{40}{140}{40}
\lineseg{80}{70}{100}{70}
\lineseg{120}{60}{140}{60}
\qbezier(60,60)(50,60)(50,75)\qbezier(50,75)(50,90)(60,90)
\lineseg{60}{90}{120}{90}
\qbezier(120,90)(130,90)(130,85)\qbezier(130,85)(130,80)(120,80)
\rgbfbox{60}{50}{20}{20}{1}{.5}{.5}{\tiny$\lam$}%
\rgbfbox{100}{60}{20}{20}{1}{.5}{.5}{\tiny$\lam$}%
\end{picture}
\unitlength=.7pt
\begin{picture}(80,60)
\end{picture}
\unitlength=.6pt
\begin{picture}(90,60)(85,35)
\thicklines
\qbezier(120,40)(125,40)(130,45)\qbezier(130,45)(135,50)(140,50)
\lineseg{80}{40}{120}{40} \lineseg{120}{70}{140}{70}
\qbezier(100,80)(90,80)(90,85)\qbezier(90,85)(90,90)(100,90)
\lineseg{100}{90}{160}{90}
\lineseg{80}{60}{100}{60}
\qbezier(160,90)(170,90)(170,75)\qbezier(170,75)(170,60)(160,60)
\rgbfbox{100}{60}{20}{20}{.8}{.8}{1}{\tiny$\app$}%
\rgbfbox{140}{50}{20}{20}{.8}{.8}{1}{\tiny$\app$}%
\end{picture}
\end{center}
Then we have $(\alpha\otimes\id_A);(\id_A\otimes\beta)=(\id_A\otimes\alpha);(\beta\otimes \id_A)=\mathit{id}_A$.
The observation below is useful for simplifying calculation on duality:

\begin{lemma}
The following equations hold.
$$
\alpha=\alpha;(\id_A\otimes\theta^{\pm1}_A);\sigma^{\pm1}_{A,A}=\alpha;(\theta^{\pm1}_A\otimes\id_A);\sigma^{\pm1}_{A,A}
$$
$$
\beta=
\sigma^{\pm1}_{A,A};(\id_A\otimes\theta^{\pm1}_A);\beta=
\sigma^{\pm1}_{A,A};(\theta^{\pm1}_A\otimes\id_A);\beta
$$
\begin{center}
\unitlength=.7pt
\begin{picture}(440,50)
\thicklines
\lineseg{20}{30}{40}{30}
\lineseg{20}{10}{40}{10}
\rgbfbox{0}{10}{20}{20}{1}{.8}{.8}{$\alpha$}%
\put(60,20){\makebox(0,0){$=$}}
\twist{100}{30}
\lineseg{100}{10}{120}{10}
\braid{120}{10}{5}{10}
\rgbfbox{80}{10}{20}{20}{1}{.8}{.8}{$\alpha$}%
\put(160,20){\makebox(0,0){$=$}}
\twistInv{400}{10}
\lineseg{400}{30}{420}{30}
\braidInv{420}{10}{5}{10}
\rgbfbox{380}{10}{20}{20}{1}{.8}{.8}{$\alpha$}%
\put(260,20){\makebox(0,0){$=$}}
\twist{300}{10}
\lineseg{300}{30}{320}{30}
\braid{320}{10}{5}{10}
\rgbfbox{280}{10}{20}{20}{1}{.8}{.8}{$\alpha$}%
\put(360,20){\makebox(0,0){$=$}}
\twistInv{200}{30}
\lineseg{200}{10}{220}{10}
\braidInv{220}{10}{5}{10}
\rgbfbox{180}{10}{20}{20}{1}{.8}{.8}{$\alpha$}%
\end{picture}
\end{center}
\begin{center}
\unitlength=.7pt
\begin{picture}(440,50)
\thicklines
\lineseg{0}{30}{20}{30}
\lineseg{0}{10}{20}{10}
\rgbfbox{20}{10}{20}{20}{.6}{.9}{1}{$\beta$}%
\put(60,20){\makebox(0,0){$=$}}
\twist{100}{30}
\lineseg{100}{10}{120}{10}
\braid{80}{10}{5}{10}
\rgbfbox{120}{10}{20}{20}{.6}{.9}{1}{$\beta$}%
\put(160,20){\makebox(0,0){$=$}}
\twistInv{400}{10}
\lineseg{400}{30}{420}{30}
\braidInv{380}{10}{5}{10}
\rgbfbox{420}{10}{20}{20}{.6}{.9}{1}{$\beta$}%
\put(260,20){\makebox(0,0){$=$}}
\twist{300}{10}
\lineseg{300}{30}{320}{30}
\braid{280}{10}{5}{10}
\rgbfbox{320}{10}{20}{20}{.6}{.9}{1}{$\beta$}%
\put(360,20){\makebox(0,0){$=$}}
\twistInv{200}{30}
\lineseg{200}{10}{220}{10}
\braidInv{180}{10}{5}{10}
\rgbfbox{220}{10}{20}{20}{.6}{.9}{1}{$\beta$}%
\end{picture}
\end{center}
In particular, in the symmetric (i.e., compact closed) case, $\alpha=\alpha;\sigma_{A,A}$
and $\beta=\sigma_{A,A};\beta$ hold.
\end{lemma}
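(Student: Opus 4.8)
The plan is to verify the identities for $\alpha$ by a direct computation in the graphical calculus of the ribbon category $\mathbb{C}|_A$, and then obtain the $\beta$-versions by duality and the compact closed case by specialisation. I would begin with three routine reductions. (a) The $\pm$-variants collapse to one case: the \emph{reversed} balanced structure $(\bar{\sigma},\theta^{-1})$ with $\bar{\sigma}_{X,Y}:=\sigma_{Y,X}^{-1}$ is again a ribbon structure on $\mathbb{C}$, with the same duality $(\_)^*$ and the same canonical trace, and $\app$ and $\lam$ — hence $\alpha$ and $\beta$ — are \emph{unchanged} by this replacement, the invariance of $\lam$ simply reflecting the equality of its two displayed expressions. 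So the ``$-$'' identities are the ``$+$'' identities for $(\bar{\sigma},\theta^{-1})$. (b) Within a fixed sign, the $(\theta_A\otimes\id_A)$-form and the $(\id_A\otimes\theta_A)$-form are exchanged by naturality of the braiding, $(\id_A\otimes\theta_A);\sigma_{A,A}=\sigma_{A,A};(\theta_A\otimes\id_A)$, together with the fact that a twist slides around the cup and the cap, $\alpha;(\theta_A\otimes\id_A)=\alpha;(\id_A\otimes\theta_A)$ and $(\theta_A\otimes\id_A);\beta=(\id_A\otimes\theta_A);\beta$, which I would derive from naturality of $\theta$ applied to $\alpha$ and $\beta$ (using $\theta_I=\id_I$). (c) In the symmetric (compact closed) case $\theta_A=\id_A$, so the statement becomes $\alpha=\alpha;\sigma_{A,A}$ and $\beta=\sigma_{A,A};\beta$, immediate from the general identities. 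It thus remains to prove $\alpha=\alpha;(\id_A\otimes\theta_A);\sigma_{A,A}$ and $\beta=\sigma_{A,A};(\id_A\otimes\theta_A);\beta$.

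For the $\alpha$-identity I would unfold $\alpha=\mathit{Tr}^A_{I,A\otimes A}(\lam;(A\otimes\lam))$ and the ($\theta$-positive) definition of $\lam$ into a string diagram built from two $\eta$-cups, two copies of $\varphi$, twists, braids and the trace loop. The essential move is a $\pi$-rotation of the ribbon band that forms the outgoing cup: exchanging its two feet forces a $2\pi$-twist onto one of them, which is exactly the asserted equation. Concretely: slide the externally applied $\theta_A$ once around the trace loop (using naturality of $\theta$ and $\sigma$ and cyclicity of the canonical trace) so that it meets the twist already hidden inside one copy of $\lam$; recombine the accumulated twists and the crossing into $\theta_{A\otimes A}$ via the balance axiom $\theta_{A\otimes A}=\sigma_{A,A};(\theta_A\otimes\theta_A);\sigma_{A,A}$; then cancel $\theta_{A\otimes A}$ using the naturality consequence $\alpha;\theta_{A\otimes A}=\theta_I;\alpha=\alpha$, together with the matching $\theta_{A\otimes A}^{-1}$ produced by the other copy of $\lam$; the two $\varphi$-boxes are left untouched and match those of the right-hand side. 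The $\beta\eta$-equalities of Lemma~\ref{lem:beta-eta} (equivalently Proposition~\ref{prop:trace}) are available to collapse any residual $\app$/$\lam$ pattern.

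For the $\beta$-identity I would take the mate of the $\alpha$-identity under the snake equations $(\alpha\otimes\id_A);(\id_A\otimes\beta)=\id_A$ and $(\id_A\otimes\alpha);(\beta\otimes\id_A)=\id_A$: an automorphism of $A\otimes A$ fixing the cup $\alpha$ corresponds, by the standard rigidity bijection, to an automorphism fixing the cap $\beta$, and the mate of $(\id_A\otimes\theta_A);\sigma_{A,A}$ is $\sigma_{A,A};(\id_A\otimes\theta_A)$ since the braiding is self-transpose and $\theta_A$ is transpose-invariant in a ribbon category. Alternatively one repeats the computation of the previous paragraph verbatim with $\app$ in place of $\lam$ and $\mathit{Tr}^A((A\otimes\app);\app)$ in place of $\mathit{Tr}^A(\lam;(A\otimes\lam))$, which is completely parallel.

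The main obstacle is the second paragraph: making the ``rotation of a cup'' move rigorous when $\alpha$ is \emph{not} the canonical coevaluation $\eta_A$ but is assembled from the reflexive-object datum $\varphi$ and a partial trace — so that one cannot just invoke a generic ribbon identity, and must carefully track how the twists concealed inside the two copies of $\lam$ recombine with the externally applied $\theta_A$ once the balance axiom is used, and confirm that all $\varphi$-bookkeeping cancels. (Indeed the identity is genuinely false for an arbitrary self-duality in a ribbon category, so the specific shape of $\lam$, carrying exactly one twist and one crossing, is what makes the $2\pi$-rotation half-cancel.) Everything else is naturality of $\sigma$ and $\theta$ together with the trace and snake axioms.
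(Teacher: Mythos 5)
Your reductions (a)--(c) are sound, and your fallback for the $\beta$-case (repeat the computation with $\app$ in place of $\lam$) matches the paper's ``other cases are similar.'' But the core of the proof --- the identity $\alpha=\alpha;(\id_A\otimes\theta_A);\sigma_{A,A}$ --- is not actually established: you describe it as ``a $\pi$-rotation of the ribbon band'' and then explicitly defer the justification to an acknowledged ``main obstacle.'' Worse, the mechanism you propose is not a licensed move as stated. The external $\theta_A$ sits on a \emph{free output leg} of $\alpha$, not on the traced strand, so it cannot be ``slid once around the trace loop'': the sliding/cyclicity axiom of the trace only transports morphisms that live on the wire being traced. To move $\theta_A$ onto the traced strand you would have to push it through a $\lam$-box by naturality, which turns $\theta_A$ into $\theta_{A\otimes A}$ on the other side and, via the balance axiom, reintroduces two braidings and a second twist --- exactly the bookkeeping you leave unresolved. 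Since (as you yourself note) the identity fails for a generic self-duality, an argument that does not close this accounting is not a proof.

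The paper's own proof shows the obstacle is illusory, because it never opens the $\lam$-boxes at all. Writing $\alpha=\mathit{Tr}^A_{I,A\otimes A}(\lam;(A\otimes\lam))$, one first reroutes the second output wire around the \emph{outside} of the entire closed (traced) diagram; this costs nothing by the sliding and vanishing axioms of the trace, precisely because the diagram is a trace and its contents can be rotated past the new loop. One then evaluates that planar loop --- by tightening, superposing and yanking --- as $(\id_A\otimes\theta_A);\sigma_{A,A}$ appended to the outputs; this is where the twist and braiding appear, as the framed cost of undoing a full turn. The only structural input is that $\alpha$ is presented as a trace with the relevant strand closed off; the internal shape of $\lam$ (its $\varphi$, twist and crossing) plays no role, contrary to your final paragraph. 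I would recommend abandoning the unfolding of $\lam$ and instead making these two trace-axiom steps explicit; that both closes the gap and shortens the argument considerably.
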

\begin{proof}
For the first equation
{
\unitlength=.5pt
\begin{picture}(140,30)(0,10)
\thicklines
\lineseg{20}{30}{40}{30}
\lineseg{20}{10}{40}{10}
\rgbfbox{0}{10}{20}{20}{1}{.8}{.8}{$\alpha$}%
\put(60,20){\makebox(0,0){$=$}}
\twist{100}{30}
\lineseg{100}{10}{120}{10}
\braid{120}{10}{5}{10}
\rgbfbox{80}{10}{20}{20}{1}{.8}{.8}{$\alpha$}%
\end{picture}
}, see below: 
\begin{center}
\unitlength=.7pt
\begin{picture}(60,60)(85,35)
\thicklines
\qbezier(80,50)(85,50)(90,45)\qbezier(90,45)(95,40)(100,40)
\lineseg{100}{40}{140}{40}
\lineseg{80}{70}{100}{70}
\lineseg{120}{60}{140}{60}
\qbezier(60,60)(50,60)(50,75)\qbezier(50,75)(50,90)(60,90)
\lineseg{60}{90}{120}{90}
\qbezier(120,90)(130,90)(130,85)\qbezier(130,85)(130,80)(120,80)
\rgbfbox{60}{50}{20}{20}{1}{.5}{.5}{\tiny$\lam$}%
\rgbfbox{100}{60}{20}{20}{1}{.5}{.5}{\tiny$\lam$}%
\end{picture}
\begin{picture}(20,60)
\put(10,30){\makebox(0,0){$=$}}
\end{picture}
\begin{picture}(105,60)(35,35)
\thicklines
\qbezier(80,50)(85,50)(90,45)\qbezier(90,45)(95,40)(100,40)
\lineseg{100}{40}{140}{40}
\lineseg{80}{70}{100}{70}
\arcLR{120}{60}{20}{20}
\lineseg{60}{100}{120}{100}
\arcLR{60}{20}{-20}{40}
\lineseg{60}{20}{140}{20}
\qbezier(60,60)(50,60)(50,75)\qbezier(50,75)(50,90)(60,90)
\lineseg{60}{90}{120}{90}
\qbezier(120,90)(130,90)(130,85)\qbezier(130,85)(130,80)(120,80)
\rgbfbox{60}{50}{20}{20}{1}{.5}{.5}{\tiny$\lam$}%
\rgbfbox{100}{60}{20}{20}{1}{.5}{.5}{\tiny$\lam$}%
\end{picture}
\begin{picture}(20,60)
\put(10,30){\makebox(0,0){$=$}}
\end{picture}
\begin{picture}(80,60)(45,35)
\thicklines
\qbezier(80,50)(85,50)(90,45)\qbezier(90,45)(95,40)(100,40)
\lineseg{100}{40}{140}{40}
\lineseg{80}{70}{100}{70}
\twist{120}{60}
\braid{140}{40}{5}{10}
\qbezier(60,60)(50,60)(50,75)\qbezier(50,75)(50,90)(60,90)
\lineseg{60}{90}{120}{90}
\qbezier(120,90)(130,90)(130,85)\qbezier(130,85)(130,80)(120,80)
\rgbfbox{60}{50}{20}{20}{1}{.5}{.5}{\tiny$\lam$}%
\rgbfbox{100}{60}{20}{20}{1}{.5}{.5}{\tiny$\lam$}%
\end{picture}
\end{center}
The first step is by sliding and vanishing, and the second by tightening, superposing and yanking of the trace \cite{JSV96}. Other cases are similar.
\end{proof}


\subsection{The balanced extensional \texorpdfstring{$\BCpmI$}-algebra on a reflexive object}

Let $\calA=\mathbb{C}(I,A)$ and define 
the application $(\_)\cdot(\_)\colon \calA^2\rightarrow\calA$
by $f\cdot g = (f\otimes g);\app$. 
Thanks to Prop. \ref{prop:refl:balanced}, we know that $\calA$ is a balanced extensional $\BCpmI$-algebra.
Spelling out its detail, we have:

\begin{theorem}\label{thm:alg_on_refl_obj}
$(\calA,\cdot)$ is a balanced extensional $\BCpmI$-algebra
with $\BB,\CC^+,\CC^-,\II,\btheta^+$ and $\btheta^-$ given in Figure \ref{fig:BCpmI-alg}.
\end{theorem}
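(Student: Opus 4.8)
The plan is to derive the theorem from Proposition \ref{prop:refl:balanced}, which already guarantees that $(\calA,\cdot)$ with $\calA=\mathbb{C}(I,A)$ is a balanced extensional $\BCpmI$-algebra as soon as we fix $a\cdot b=(a\otimes b);\app$ and take $\BB,\II,\CC^\pm,\btheta^\pm$ to be the iterated curryings displayed there. So the only work is to rewrite those abstract currying expressions in terms of the concrete ribbon data on the reflexive object $A$ --- the isomorphism $\varphi\colon A\otimes A^*\cong A$, the duality $\eta_A,\varepsilon_A$, the braid $\sigma_{A,A}^{\pm1}$, the twist $\theta_A^{\pm1}$, equivalently in terms of $\app$ and $\lam$ --- thereby obtaining the formulas collected in Figure \ref{fig:BCpmI-alg}.

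The key ingredient is the explicit description of currying on a reflexive object of a ribbon category that was recorded just before Lemma \ref{lem:beta-eta}: for $f\colon X\otimes A\rightarrow A$ one has
$$\mathsf{cur}(f)=\mathit{Tr}^A_{X,A}(f;\lam)=(X\otimes\eta_A);(f\otimes A^*);\varphi.$$
Iterating this --- or, more conveniently, using the iterated form of Corollary \ref{cor:trace} with $\lam_m$ and $\app_m$ --- each combinator unfolds in a few steps. Concretely: $\II=\mathsf{cur}(\id_A)$ collapses at once; the auxiliary $a^\bullet=\mathsf{cur}((A\otimes a);\app)$ (useful as a stepping stone even though it is not primitive in a $\BCpmI$-algebra) is a one-step unfolding; $\btheta^\pm=\mathsf{cur}(\mathsf{cur}((A\otimes\theta^{\pm1}_A);\app))$ is a two-step unfolding; and $\BB=\mathsf{cur}^3((A\otimes\app);\app)$ and $\CC^\pm=\mathsf{cur}^3((A\otimes\sigma^{\pm1}_{A,A});(\app\otimes A);\app)$ are three-step unfoldings. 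At each step I would simplify using the snake equations for $(\eta_A,\varepsilon_A)$, the identity $\varphi;\varphi^{-1}=\id$, naturality of $\sigma$ and $\theta$, and the $\beta$/$\eta$ identities of Lemma \ref{lem:beta-eta}; the resulting normal forms are exactly the diagrams of Figure \ref{fig:BCpmI-alg}.

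There is no conceptual obstacle here --- the argument is a sequence of routine string-diagram calculations and the substantive content is the figure itself. The most laborious case is $\BB$, whose triple currying of $(A\otimes\app);\app$ produces the ``two application nodes'' tree and requires the most bookkeeping of auxiliary $\eta_A/\varepsilon_A$ pairs before they cancel via the snake equations; $\CC^\pm$ is of comparable length, with one braid generator threaded through; $\II$ and $\btheta^\pm$ are short. A worthwhile sanity check to include at the end is that specialising to a compact closed $\mathbb{C}$ (symmetric $\sigma$, $\theta_A=\id_A$) makes $\btheta^\pm$ collapse to $\II$ and $\CC^+$ agree with $\CC^-$, recovering the $\BCI$-algebra case noted after Proposition \ref{prop:refl:balanced}.
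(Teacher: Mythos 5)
Your proposal matches the paper's own treatment: the theorem is obtained there exactly by invoking Proposition \ref{prop:refl:balanced} and then ``spelling out'' the iterated curryings via $\mathsf{cur}(f)=\mathit{Tr}^A(f;\lam)$ (equivalently $(X\otimes\eta_A);(f\otimes A^*);\varphi$) into the $F$-images displayed in Figure \ref{fig:BCpmI-alg}, which is precisely the unfolding you describe. Your closing sanity check (compact closed case giving $\CC^+=\CC^-$ and $\btheta^\pm=\II$) is consistent with the paper's remark following Proposition \ref{prop:refl:balanced}.
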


In Figure \ref{fig:BCpmI-alg}, $F$ denotes the strict monoidal functor from  $\mathbb{C}|_A$ to the internal PROB $\IA$ 
sending
$f\colon A^{\otimes m}\rightarrow A^{\otimes n}$ 
(emphasised with the greyed area in Figure \ref{fig:BCpmI-alg})
to 
$$
\mathit{Tr}^A_{I,A}
\left(
A\xrightarrow{\!\lam_m\!}
A^{\otimes (m+1)}\xrightarrow{\!\lam\otimes f\!}
A^{\otimes (n+2)}\xrightarrow{\!A\otimes \app_n\!}
A\otimes A
\right)
$$
of $\IA(m,n)$.
For instance, $f\colon A\otimes A\rightarrow A\otimes A$ is sent to
\begin{center}
\unitlength=.7pt
\begin{picture}(220,80)
\thicklines
\qbezier(20,50)(0,50)(0,65)\qbezier(0,65)(0,80)(20,80)
\qbezier(200,50)(220,50)(220,65)\qbezier(220,65)(220,80)(200,80)
\lineseg{20}{80}{200}{80}
\lineseg{40}{60}{100}{60} \lineseg{120}{60}{180}{60}
\qbezier(40,40)(45,40)(50,35)\qbezier(50,35)(55,30)(60,30)
\qbezier(80,20)(85,20)(90,15)\qbezier(90,15)(95,10)(100,10)
\qbezier(160,30)(165,30)(170,35)\qbezier(170,35)(175,40)(180,40)
\lineseg{80}{40}{100}{40} \lineseg{120}{40}{140}{40}
\lineseg{120}{20}{140}{20}
\lineseg{120}{0}{220}{0}
\rgbfbox{20}{40}{20}{20}{1}{.5}{.5}{\tiny$\lam$}%
\rgbfbox{60}{20}{20}{20}{1}{.5}{.5}{\tiny$\lam$}%
\rgbfbox{100}{35}{20}{30}{.5}{1}{.5}{\scriptsize$f$}%
\rgbfbox{100}{0}{20}{20}{1}{.5}{.5}{\tiny$\lam$}%
\rgbfbox{140}{20}{20}{20}{.8}{.8}{1}{\tiny$\app$}%
\rgbfbox{180}{40}{20}{20}{.8}{.8}{1}{\tiny$\app$}%
\end{picture}
\end{center}
of $\IA(2,2)$. For $\sigma_{A,A}\colon A\otimes A\rightarrow A\otimes A$,
we have $F(\sigma_{A,A})=\CC^+\in\IA(2,2)$.
Similarly, $F(\sigma_{A,A}^{-1})=\CC^-\in\IA(2,2)$, $F(\app)=\BB\in\IA(2,1)$,
$F(\mathit{id}_I)=\II=\mathit{Tr}^A(\lam)\in\IA(0,0)$
and 
$F(a)=a^\bullet=\mathit{Tr}^A((\lam\otimes a);\app)\in\IA(0,1)$
for $a\colon I\rightarrow A$.
We shall note that these 
can be seen as a reconstruction of the trivalent graph representations of combinators in \cite{Has22}
in ribbon categories.

On the other hand, $F(\lam)\in\IA(1,2)$ does not correspond to any conventional combinator used in combinatory logic. This is the {\em trace combinator} to be discussed below.

\begin{figure}
\begin{center}
\renewcommand{\arraystretch}{.5}
\begin{tabular}{|c|c|}
\hline
&\\
\unitlength=.5pt
\begin{picture}(40,60)
\put(20,30){\makebox(0,0){$a\cdot b=(a\otimes b);\app$ }}
\end{picture}
&
\unitlength=.5pt
\begin{picture}(90,60)(0,-5)
\thicklines
\qbezier(20,50)(25,50)(30,45)\qbezier(30,45)(35,40)(40,40)
\qbezier(20,10)(25,10)(30,15)\qbezier(30,15)(35,20)(40,20)
\lineseg{60}{30}{90}{30}
\rgbfbox{40}{20}{20}{20}{.8}{.8}{1}{\tiny$\app$}%
\rgbfbox{0}{0}{20}{20}{1}{1}{.5}{\scriptsize$a$}%
\rgbfbox{0}{40}{20}{20}{.5}{1}{1}{\scriptsize$b$}%
\end{picture}
\\
\hline
&\\
\unitlength=.5pt
\begin{picture}(180,80)
\put(90,40){\makebox(0,0){$\BB=F(\app)\colon 2\rightarrow1$}}
\end{picture}
&
\unitlength=.5pt
\begin{picture}(200,80)(0,-5)
\thicklines
\rgbbox{90}{30}{40}{40}{.91}{.91}{.91}{}%
\qbezier(20,50)(0,50)(0,65)\qbezier(0,65)(0,80)(20,80)
\qbezier(180,50)(200,50)(200,65)\qbezier(200,65)(200,80)(180,80)
\lineseg{20}{80}{180}{80}
\lineseg{40}{60}{100}{60} 
\qbezier(40,40)(45,40)(50,35)\qbezier(50,35)(55,30)(60,30)
\qbezier(80,20)(85,20)(90,15)\qbezier(90,15)(95,10)(100,10)
\qbezier(120,20)(125,20)(130,25)\qbezier(130,25)(135,30)(140,30)
\qbezier(160,40)(165,40)(170,45)\qbezier(170,45)(175,50)(180,50)
\lineseg{80}{40}{100}{40} \lineseg{120}{50}{140}{50}
\lineseg{120}{0}{200}{0}
\rgbfbox{20}{40}{20}{20}{1}{.5}{.5}{\tiny$\lam$}%
\rgbfbox{60}{20}{20}{20}{1}{.5}{.5}{\tiny$\lam$}%
\rgbfbox{100}{40}{20}{20}{.8}{.8}{1}{\tiny$\app$}%
\rgbfbox{100}{0}{20}{20}{1}{.5}{.5}{\tiny$\lam$}%
\rgbfbox{140}{30}{20}{20}{.8}{.8}{1}{\tiny$\app$}%
\end{picture}
\\
\hline
&\\
\unitlength=.5pt
\begin{picture}(40,40)
\put(20,20){\makebox(0,0){$\II=F(\mathit{id}_I)\colon 0\rightarrow0$}}
\end{picture}
&
\unitlength=.5pt
\begin{picture}(80,40)(10,-5)
\thicklines
\qbezier(20,20)(10,20)(10,30)\qbezier(10,30)(10,40)(20,40)
\qbezier(80,40)(90,40)(90,30)\qbezier(90,30)(90,20)(80,20)
\lineseg{20}{40}{80}{40}
\lineseg{60}{20}{80}{20}
\lineseg{60}{0}{90}{0}
\qbezier(20,20)(25,20)(30,15)\qbezier(30,15)(35,10)(40,10)
\rgbfbox{40}{0}{20}{20}{1}{.5}{.5}{\tiny$\lam$}%
\end{picture}
\\
\hline
&\\
\unitlength=.5pt
\begin{picture}(40,80)
\put(20,40){\makebox(0,0){$\CC^+=F(\sigma_{A,A})\colon 2\rightarrow2$}}
\end{picture}
&
\unitlength=.5pt
\begin{picture}(240,80)(-10,-5)
\thicklines
\rgbbox{90}{30}{40}{40}{.91}{.91}{.91}{}%
\qbezier(20,50)(0,50)(0,65)\qbezier(0,65)(0,80)(20,80)
\qbezier(200,50)(220,50)(220,65)\qbezier(220,65)(220,80)(200,80)
\lineseg{20}{80}{200}{80}
\lineseg{40}{60}{100}{60} \lineseg{120}{60}{180}{60}
\qbezier(40,40)(45,40)(50,35)\qbezier(50,35)(55,30)(60,30)
\qbezier(80,20)(85,20)(90,15)\qbezier(90,15)(95,10)(100,10)
\qbezier(160,30)(165,30)(170,35)\qbezier(170,35)(175,40)(180,40)
\lineseg{80}{40}{100}{40} \lineseg{120}{40}{140}{40}
\lineseg{120}{20}{140}{20}
\lineseg{120}{0}{220}{0}
\qbezier(100,60)(107,60)(108,54)\qbezier(112,46)(113,40)(120,40)
\qbezier(100,40)(107,40)(110,50)\qbezier(110,50)(113,60)(120,60)
\rgbfbox{20}{40}{20}{20}{1}{.5}{.5}{\tiny$\lam$}%
\rgbfbox{60}{20}{20}{20}{1}{.5}{.5}{\tiny$\lam$}%
\rgbfbox{100}{0}{20}{20}{1}{.5}{.5}{\tiny$\lam$}%
\rgbfbox{140}{20}{20}{20}{.8}{.8}{1}{\tiny$\app$}%
\rgbfbox{180}{40}{20}{20}{.8}{.8}{1}{\tiny$\app$}%
\end{picture}
\\
\hline
&\\
\unitlength=.5pt
\begin{picture}(40,80)
\put(20,40){\makebox(0,0){$\CC^-=F(\sigma^{-1}_{A,A})\colon 2\rightarrow2$}}
\end{picture}
&
\unitlength=.5pt
\begin{picture}(220,80)(0,-5)
\thicklines
\rgbbox{90}{30}{40}{40}{.91}{.91}{.91}{}%
\qbezier(20,50)(0,50)(0,65)\qbezier(0,65)(0,80)(20,80)
\qbezier(200,50)(220,50)(220,65)\qbezier(220,65)(220,80)(200,80)
\lineseg{20}{80}{200}{80}
\lineseg{40}{60}{100}{60} \lineseg{120}{60}{180}{60}
\qbezier(40,40)(45,40)(50,35)\qbezier(50,35)(55,30)(60,30)
\qbezier(80,20)(85,20)(90,15)\qbezier(90,15)(95,10)(100,10)
\qbezier(160,30)(165,30)(170,35)\qbezier(170,35)(175,40)(180,40)
\lineseg{80}{40}{100}{40} \lineseg{120}{40}{140}{40}
\lineseg{120}{20}{140}{20}
\lineseg{120}{0}{220}{0}
\qbezier(100,60)(107,60)(110,50)\qbezier(110,50)(113,40)(120,40)
\qbezier(100,40)(107,40)(108,46)\qbezier(112,54)(113,60)(120,60)
\rgbfbox{20}{40}{20}{20}{1}{.5}{.5}{\tiny$\lam$}%
\rgbfbox{60}{20}{20}{20}{1}{.5}{.5}{\tiny$\lam$}%
\rgbfbox{100}{0}{20}{20}{1}{.5}{.5}{\tiny$\lam$}%
\rgbfbox{140}{20}{20}{20}{.8}{.8}{1}{\tiny$\app$}%
\rgbfbox{180}{40}{20}{20}{.8}{.8}{1}{\tiny$\app$}%
\end{picture}
\\
\hline

&\\
\unitlength=.5pt
\begin{picture}(40,80)
\put(20,40){\makebox(0,0){$a^\bullet=F(a)\colon 0\rightarrow1$ }}
\end{picture}
&
\unitlength=.5pt
\begin{picture}(140,80)(60,-5)
\thicklines
\rgbbox{90}{30}{40}{40}{.91}{.91}{.91}{}%
\qbezier(60,50)(40,50)(40,65)\qbezier(40,65)(40,80)(60,80)
\qbezier(180,50)(200,50)(200,65)\qbezier(200,65)(200,80)(180,80)
\lineseg{60}{80}{180}{80}
\qbezier(60,50)(70,50)(80,30)\qbezier(80,30)(90,10)(100,10)
\qbezier(120,20)(125,20)(130,25)\qbezier(130,25)(135,30)(140,30)
\qbezier(160,40)(165,40)(170,45)\qbezier(170,45)(175,50)(180,50)
\lineseg{120}{50}{140}{50}
\lineseg{120}{0}{200}{0}
\rgbfbox{100}{40}{20}{20}{1}{1}{.5}{\scriptsize$a$}%
\rgbfbox{100}{0}{20}{20}{1}{.5}{.5}{\tiny$\lam$}%
\rgbfbox{140}{30}{20}{20}{.8}{.8}{1}{\tiny$\app$}%
\end{picture}
\\
\hline

&\\
\unitlength=.5pt
\begin{picture}(40,80)
\put(20,40){\makebox(0,0){$\btheta^+=F(\theta)\colon 0\rightarrow1$ }}
\end{picture}
&
\unitlength=.5pt
\begin{picture}(140,80)(50,-5)
\thicklines
\rgbbox{90}{30}{40}{40}{.91}{.91}{.91}{}%
\qbezier(40,50)(20,50)(20,65)\qbezier(20,65)(20,80)(40,80)
\qbezier(180,50)(200,50)(200,65)\qbezier(200,65)(200,80)(180,80)
\lineseg{40}{80}{180}{80}
\qbezier(40,50)(45,50)(50,45)\qbezier(50,45)(55,40)(60,40)
\qbezier(60,50)(70,50)(80,30)\qbezier(80,30)(90,10)(100,10)
\qbezier(120,20)(125,20)(130,25)\qbezier(130,25)(135,30)(140,30)
\qbezier(160,40)(165,40)(170,45)\qbezier(170,45)(175,50)(180,50)
\lineseg{120}{50}{140}{50}
\lineseg{120}{0}{200}{0}
\lineseg{80}{50}{100}{50}
\twistG{100}{50}
\rgbfbox{60}{30}{20}{20}{1}{.5}{.5}{\tiny$\lam$}%
\rgbfbox{100}{0}{20}{20}{1}{.5}{.5}{\tiny$\lam$}%
\rgbfbox{140}{30}{20}{20}{.8}{.8}{1}{\tiny$\app$}%
\end{picture}
\\
\hline

&\\
\unitlength=.5pt
\begin{picture}(40,80)
\put(20,40){\makebox(0,0){$\btheta^-=F(\theta^{-1})\colon 0\rightarrow1$ }}
\end{picture}
&
\unitlength=.5pt
\begin{picture}(140,80)(50,-5)
\thicklines
\rgbbox{90}{30}{40}{40}{.91}{.91}{.91}{}%
\qbezier(40,50)(20,50)(20,65)\qbezier(20,65)(20,80)(40,80)
\qbezier(180,50)(200,50)(200,65)\qbezier(200,65)(200,80)(180,80)
\lineseg{40}{80}{180}{80}
\qbezier(40,50)(45,50)(50,45)\qbezier(50,45)(55,40)(60,40)
\qbezier(60,50)(70,50)(80,30)\qbezier(80,30)(90,10)(100,10)
\qbezier(120,20)(125,20)(130,25)\qbezier(130,25)(135,30)(140,30)
\qbezier(160,40)(165,40)(170,45)\qbezier(170,45)(175,50)(180,50)
\lineseg{120}{50}{140}{50}
\lineseg{120}{0}{200}{0}
\lineseg{80}{50}{100}{50}
\twistInvG{100}{50}
\rgbfbox{60}{30}{20}{20}{1}{.5}{.5}{\tiny$\lam$}%
\rgbfbox{100}{0}{20}{20}{1}{.5}{.5}{\tiny$\lam$}%
\rgbfbox{140}{30}{20}{20}{.8}{.8}{1}{\tiny$\app$}%
\end{picture}
\\
\hline\hline

&\\
\unitlength=.5pt
\begin{picture}(40,80)
\put(20,40){\makebox(0,0){$\Tr=F(\lam)\colon 1\rightarrow2$ }}
\end{picture}
&
\unitlength=.5pt
\begin{picture}(200,80)(20,-5)
\thicklines
\rgbbox{90}{30}{40}{40}{.91}{.91}{.91}{}%
\qbezier(40,50)(20,50)(20,65)\qbezier(20,65)(20,80)(40,80)
\qbezier(200,50)(220,50)(220,65)\qbezier(220,65)(220,80)(200,80)
\lineseg{40}{80}{200}{80}
 \lineseg{120}{60}{180}{60}
\qbezier(40,50)(45,50)(50,40)\qbezier(50,40)(55,30)(60,30)
\qbezier(80,20)(85,20)(90,15)\qbezier(90,15)(95,10)(100,10)
\qbezier(160,30)(165,30)(170,35)\qbezier(170,35)(175,40)(180,40)
\qbezier(80,40)(85,40)(90,45)\qbezier(90,45)(95,50)(100,50)
\lineseg{120}{40}{140}{40}
\lineseg{120}{20}{140}{20}
\lineseg{120}{0}{220}{0}
\rgbfbox{60}{20}{20}{20}{1}{.5}{.5}{\tiny$\lam$}%
\rgbfbox{100}{40}{20}{20}{1}{.5}{.5}{\tiny$\lam$}%
\rgbfbox{100}{0}{20}{20}{1}{.5}{.5}{\tiny$\lam$}%
\rgbfbox{140}{20}{20}{20}{.8}{.8}{1}{\tiny$\app$}%
\rgbfbox{180}{40}{20}{20}{.8}{.8}{1}{\tiny$\app$}%
\end{picture}
\\
\hline
&\\
\unitlength=.5pt
\begin{picture}(40,80)
\put(25,50){\makebox(0,0){$\balpha=F(\alpha)\colon 0\rightarrow2$ }}
\end{picture}
&
\unitlength=.5pt
\begin{picture}(200,100)(20,-5)
\thicklines
\rgbbox{45}{35}{90}{60}{.91}{.91}{.91}{}%
\qbezier(40,10)(20,10)(20,55)\qbezier(20,55)(20,100)(40,100)
\qbezier(200,50)(220,50)(220,75)\qbezier(220,75)(220,100)(200,100)
\lineseg{40}{10}{100}{10}
\lineseg{40}{100}{200}{100}
 \lineseg{120}{60}{180}{60}
\qbezier(160,30)(165,30)(170,35)\qbezier(170,35)(175,40)(180,40)
\qbezier(80,50)(85,50)(90,45)\qbezier(90,45)(95,40)(100,40)
\lineseg{100}{40}{140}{40}
\lineseg{120}{20}{140}{20}
\lineseg{120}{0}{220}{0}
\lineseg{80}{70}{100}{70}
\qbezier(60,60)(50,60)(50,75)\qbezier(50,75)(50,90)(60,90)
\lineseg{60}{90}{120}{90}
\qbezier(120,90)(130,90)(130,85)\qbezier(130,85)(130,80)(120,80)
\rgbfbox{60}{50}{20}{20}{1}{.5}{.5}{\tiny$\lam$}%
\rgbfbox{100}{60}{20}{20}{1}{.5}{.5}{\tiny$\lam$}%
\rgbfbox{100}{0}{20}{20}{1}{.5}{.5}{\tiny$\lam$}%
\rgbfbox{140}{20}{20}{20}{.8}{.8}{1}{\tiny$\app$}%
\rgbfbox{180}{40}{20}{20}{.8}{.8}{1}{\tiny$\app$}%
\end{picture}
\\
\hline
&\\
\unitlength=.5pt
\begin{picture}(40,80)
\put(25,50){\makebox(0,0){${\boldsymbol\beta}=F(\beta)\colon 2\rightarrow 0$}}
\end{picture}
&
\unitlength=.5pt
\begin{picture}(200,100)(0,-5)
\thicklines
\rgbbox{85}{35}{90}{60}{.91}{.91}{.91}{}%
\qbezier(20,50)(0,50)(0,75)\qbezier(0,75)(0,100)(20,100)
\qbezier(180,20)(200,20)(200,60)\qbezier(200,60)(200,100)(180,100)
\lineseg{20}{100}{180}{100}
\lineseg{120}{20}{180}{20}
\lineseg{40}{60}{100}{60} 
\qbezier(40,40)(45,40)(50,35)\qbezier(50,35)(55,30)(60,30)
\qbezier(80,20)(85,20)(90,15)\qbezier(90,15)(95,10)(100,10)
\qbezier(120,40)(125,40)(130,45)\qbezier(130,45)(135,50)(140,50)
\lineseg{80}{40}{120}{40} \lineseg{120}{70}{140}{70}
\lineseg{120}{0}{200}{0}
\qbezier(100,80)(90,80)(90,85)\qbezier(90,85)(90,90)(100,90)
\lineseg{100}{90}{160}{90}
\qbezier(160,90)(170,90)(170,75)\qbezier(170,75)(170,60)(160,60)
\rgbfbox{20}{40}{20}{20}{1}{.5}{.5}{\tiny$\lam$}%
\rgbfbox{60}{20}{20}{20}{1}{.5}{.5}{\tiny$\lam$}%
\rgbfbox{100}{60}{20}{20}{.8}{.8}{1}{\tiny$\app$}%
\rgbfbox{100}{0}{20}{20}{1}{.5}{.5}{\tiny$\lam$}%
\rgbfbox{140}{50}{20}{20}{.8}{.8}{1}{\tiny$\app$}%
\end{picture}
\\
\hline
\end{tabular}
\end{center}
\Description[The ribbon combinatory algebra $\calA=\mathbb{C}(I,A)$]{The ribbon combinatory algebra $\calA=\mathbb{C}(I,A)$}
\caption{The ribbon combinatory algebra $\calA=\mathbb{C}(I,A)$}
\label{fig:BCpmI-alg}
\end{figure}

\subsection{Combinators for duality and trace}
\label{subsec:trace-combinator}

Figure \ref{fig:BCpmI-alg} contains additional ingredients: $\balpha=F(\alpha)$, $\bbeta=F(\beta)$ and $\Tr=F(\lam)$.
Since $\alpha\colon I\rightarrow A\otimes A$ and $\beta\colon A\otimes A\rightarrow I$ form a duality in $\mathbb{C}|_A$ and
$F\colon \mathbb{C}|_A\rightarrow\IA$ strictly preserves the balanced monoidal structure, 
$\balpha=F(\alpha)$ and $\bbeta=F(\beta)$ form a duality in $\IA$. So it is not surprising to see:
\begin{proposition}
The internal PROB $\IA$ is a ribbon category with duality $\balpha\colon 0\rightarrow2$ and $\bbeta\colon 2\rightarrow0$.  
\end{proposition}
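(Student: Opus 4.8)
The plan is to transport the ribbon structure of $\mathbb{C}|_A$ along the strict monoidal functor $F\colon\mathbb{C}|_A\rightarrow\IA$ of Figure~\ref{fig:BCpmI-alg}. Recall that, since $A^*\cong A$, the category $\mathbb{C}|_A$ is itself a ribbon category, with $A$ left self-dual via $\alpha\colon I\rightarrow A\otimes A$ and $\beta\colon A\otimes A\rightarrow I$, and with the ribbon condition $(\theta_A)^*=\theta_{A^*}$. We already know that $\IA$ is a balanced PROB (braid $\bsigma$, twist $\btheta$), that $F$ strictly preserves the balanced monoidal structure, and that $F(A)=1$, $F(\alpha)=\balpha$, $F(\beta)=\bbeta$. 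So it remains to produce a left autonomous structure on $\IA$ satisfying the ribbon condition, and the natural choice is the one induced by $F$.

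First I would check that $\balpha$ and $\bbeta$ exhibit $1$ as a left dual of $1$. The triangle equations $(\alpha\otimes \id_A);(\id_A\otimes\beta)=\id_A$ and $(\id_A\otimes\alpha);(\beta\otimes \id_A)=\id_A$ hold in $\mathbb{C}|_A$; applying $F$ (strict monoidal, $A\mapsto 1$, $\alpha\mapsto\balpha$, $\beta\mapsto\bbeta$) turns them into $(\balpha+1);(1+\bbeta)=\id_1$ and $(1+\balpha);(\bbeta+1)=\id_1$ in $\IA$, i.e., the snake equations for $(\balpha,\bbeta)$. Next I would extend this to every object: since all objects of $\IA$ are tensor powers of $1$ and $1^*=1$, one sets $m^*=m$ and assembles the unit and counit of $m$ from $\balpha$, $\bbeta$ and the braid $\bsigma$ in the standard way for braided monoidal categories, obtaining a contravariant monoidal functor $(\_)^*$ and hence a left autonomous structure. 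Because $F$ preserves $+$, the braid, and the pairing on the generator, the left dual $F(X)^*$ computed in $\IA$ coincides with $F(X^*)$, so $F$ preserves these chosen duals as well.

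It remains to verify the ribbon condition $(\theta_m)^*=\theta_{m^*}$. A strict monoidal functor that preserves a dual pairing preserves the induced dual of a morphism, so for the generator $(\theta_1)^*=F(\theta_A)^*=F((\theta_A)^*)=F(\theta_{A^*})=\theta_{F(A^*)}=\theta_{1^*}$, using the ribbon condition of $\mathbb{C}|_A$ and $1^*=1$. For a general object $m$ the identity $(\theta_m)^*=\theta_{m^*}$ then follows either by transporting $(\theta_{A^{\otimes m}})^*=\theta_{(A^{\otimes m})^*}$ along $F$, or directly from the balanced-monoidal identities relating $\btheta$, $\bsigma$ and the pairing. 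Thus $\IA$ is a balanced monoidal category which is left autonomous and satisfies the ribbon condition, i.e., a ribbon category, with duality $\balpha\colon 0\rightarrow 2$ and $\bbeta\colon 2\rightarrow 0$.

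The main obstacle I expect is the bookkeeping in the second step: setting up the left duals of all objects coherently (the triangle identities and the contravariant functoriality of $(\_)^*$ for every $m$, not merely for $1$) and confirming compatibility both with $F$ and with the already-fixed PROB structure. This is not deep — it is the standard argument that a braided monoidal category whose generating object is self-dual is autonomous — but it is where the actual work sits; everything about the braid, the twist, and the equations satisfied by $\balpha,\bbeta$ themselves is handed to us for free by the functoriality of $F$ together with the fact that $\mathbb{C}|_A$ is already a ribbon category.
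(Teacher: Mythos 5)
Your proposal is correct and follows essentially the same route as the paper: the paper justifies this proposition precisely by observing that $\alpha$ and $\beta$ form a duality in $\mathbb{C}|_A$ and that the strict balanced monoidal functor $F$ transports this duality (and hence the ribbon structure) to $\IA$. You have merely spelled out the standard details (snake equations, duals of tensor powers of the self-dual generator, and the ribbon condition $(\theta)^*=\theta$) that the paper leaves implicit.
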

Being a ribbon category, $\IA$ also has a unique trace \cite{JSV96,Has09}.
For $f\colon 1+m\rightarrow 1+n$, its left trace ${\mathit{Tr}_L}^1_{m,n}(f)\colon m\rightarrow n$ is $\balpha\circ(\BB\,f)\circ\bbeta$.
Thanks to combinatory completeness, this is equal to $(\lambda^* f.\balpha\circ(\BB\,f)\circ\bbeta)\,f=
(\bbeta^\bullet\circ\BB\circ(\BB\,\balpha)\circ\BB)\,f$.
Indeed,
$$
\begin{array}{rcl}
(\bbeta^\bullet\circ\BB\circ(\BB\,\balpha)\circ\BB)\,f
&=&
\bbeta^\bullet\,(\BB\,(\BB\,\balpha\,(\BB\,f)))\\
&=&
\BB\,(\BB\,\balpha\,(\BB\,f))\,\bbeta\\
&=&
\balpha\circ(\BB\,f)\circ\bbeta.    
\end{array}
$$
Fortunately, this combinator $\bbeta^\bullet\circ\BB\circ(\BB\,\balpha)\circ\BB$ can be simplified as follows.
\begin{lemma}
$\bbeta^\bullet\circ\BB\circ(\BB\,\balpha)\circ\BB= F(\lam)$ holds.
\end{lemma}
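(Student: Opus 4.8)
The plan is to verify the identity inside the internal PROB $\IA$, which by the preceding propositions is a ribbon category whose generating object $1$ is reflexive, with application $\mathsf{app}=\BB\colon 2\to 1$, currying $\mathsf{cur}$, and chosen duality $\balpha\colon 0\to 2$, $\bbeta\colon 2\to 0$. I will use throughout that $F\colon\bbCA\to\IA$ is a strict balanced-monoidal functor that carries the chosen duality of $\bbCA$ to that of $\IA$ (so it preserves the canonical ribbon trace \cite{JSV96,Has09}) and sends $\app$ to $\mathsf{app}=\BB$; combined with the fact that $\mathsf{cur}$ is uniquely determined by $\mathsf{app}$, this makes $F$ preserve the whole reflexive structure.

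First I would record what the two sides compute. By the displayed calculation just before the statement, $\bbeta^\bullet\circ\BB\circ(\BB\,\balpha)\circ\BB=\lambda^\ast f.\bigl(\balpha\circ(\BB\,f)\circ\bbeta\bigr)$, and since $\balpha,\bbeta$ are the duality of $\IA$, for $f\colon 1+m\to 1+n$ the element $\bigl(\bbeta^\bullet\circ\BB\circ(\BB\,\balpha)\circ\BB\bigr)\,f=\balpha\circ(\BB\,f)\circ\bbeta$ is precisely the left trace ${\mathit{Tr}_L}^1_{m,n}(f)$ of $\IA$. On the other hand, pushing $F$ through Proposition~\ref{prop:trace} (stated in $\bbCA$) shows that $\Tr=F(\lam)$ implements the same left trace through the $\lam$-$\mathsf{app}$ sandwich: for $g$ of arity $1+m\to 1+n$ (e.g.\ $\CC^\pm$, $\II$, $\BB$, or any $g$ in the image of $F$) one gets $\Tr\circ(\BB\,g)\circ\BB=\BB\,\bigl({\mathit{Tr}_L}^1_{m,n}(g)\bigr)$ in $\IA$, hence $\Tr\circ(\BB\,g)\circ\BB=\BB\,\bigl((\bbeta^\bullet\circ\BB\circ(\BB\,\balpha)\circ\BB)\,g\bigr)$.

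The remaining — and main — step is to pass from ``both sides implement ${\mathit{Tr}_L}^1$'' to the bare equation $\Tr=\bbeta^\bullet\circ\BB\circ(\BB\,\balpha)\circ\BB$; because the algebra is not assumed extensional in the naive ``$a\,x=b\,x\Rightarrow a=b$'' sense, the argument $g$ cannot simply be cancelled. I would settle it by a direct string-diagram computation in $\IA$: substitute $\balpha=F(\alpha)$, $\bbeta=F(\beta)$, expand $\alpha=\mathit{Tr}^A_{I,A\otimes A}(\lam;(A\otimes\lam))$ and $\beta=\mathit{Tr}^A_{A\otimes A,I}((A\otimes\app);\app)$, and push the strict balanced functor through (so that $a\circ b$ becomes composition, $\BB\,(\_)$ becomes $1+(\_)$, and $(\_)^\bullet$ becomes the bending of $\beta$ around $\balpha$); this turns $\bbeta^\bullet\circ\BB\circ(\BB\,\balpha)\circ\BB$ into a diagram built only from $\lam$, $\app$, the cups and caps $\balpha,\bbeta$ of $\IA$ and one trace loop, whereupon the snake equations for $(\balpha,\bbeta)$ together with the trace axioms (sliding, tightening, superposing, yanking \cite{JSV96}) collapse it to exactly the diagram displayed for $F(\lam)$ in Figure~\ref{fig:BCpmI-alg}. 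The genuine difficulty is purely bookkeeping — tracking which wires the left and right tensors act on, and the single trivial loop that appears and vanishes under yanking — but conceptually the identity is forced: Proposition~\ref{prop:trace} already exhibits $\lam$ as the morphism implementing the left trace, and $\bbeta^\bullet\circ\BB\circ(\BB\,\balpha)\circ\BB$ is by construction the combinator that does the same.
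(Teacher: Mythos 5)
Your overall strategy is the paper's: reduce the identity to a diagrammatic computation with $\lam$, $\app$ and the duality $(\alpha,\beta)$, using the $\beta$-rule and Proposition~\ref{prop:trace}. The first half of your argument (that both sides implement the left trace on arguments of arity $1+m\rightarrow 1+n$) is correct but, as you yourself note, insufficient on its own, and the paper's proof does not use it at all. Where you diverge is in how you propose to execute the decisive computation: you plan to expand $\alpha$ and $\beta$ into their defining expressions $\mathit{Tr}^A(\lam;(A\otimes\lam))$ and $\mathit{Tr}^A((A\otimes\app);\app)$ and grind through the trace axioms, which is why you anticipate heavy bookkeeping. The paper avoids this entirely: it first rewrites the left-hand side as $F\bigl((F(\beta)\otimes A);\app;(A\otimes\alpha);(\app\otimes A)\bigr)$ by unfolding only the combinatory operations ($\circ$, $\BB\,(\_)$, $(\_)^\bullet$), then uses $F(\beta)=\mathsf{cur}(\lam;(\lam\otimes A);(A\otimes\beta))$ together with the $\beta$-equality of Lemma~\ref{lem:beta-eta} to strip the currying, and finally applies Proposition~\ref{prop:trace} to the resulting $\lam\cdots\app$ sandwich to obtain $\lam;(A\otimes\mathit{Tr}_L(\beta;\alpha))=\lam$, where the last step needs only the abstract duality property of $(\alpha,\beta)$, not their internal definitions. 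So the step you defer as ``purely bookkeeping'' is in fact a four-line calculation once $\alpha$ and $\beta$ are kept abstract; your route would reach the same endpoint, but at considerably greater length.
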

\begin{proof}
The left-hand side is, by unfolding the definition of each combinator, equal to 
$
F((F(\beta)\otimes A);\app;(A\otimes\alpha);(\app\otimes A)).
$
Then
$$
\arraycolsep=3pt
\begin{array}{cll}
&
(F(\beta)\otimes A);\app;(A\otimes\alpha);(\app\otimes A)\\
=&
(\mathsf{cur}(\lam;(\lam\otimes A);(A\otimes\beta))\otimes A);\app;(A\otimes\alpha);(\app\otimes A)\\
=&
\lam;(\lam\otimes A);(A\otimes\beta);(A\otimes\alpha);(\app\otimes A)\\
=&
\lam;(A\otimes{\mathit{Tr}_L}^1_{2,2}(\beta;\alpha))\\
=&
\lam
\end{array}
$$
by Lemma \ref{lem:beta-eta} and Proposition \ref{prop:trace}.
\end{proof}
Thus we have:
\begin{proposition} \label{thm:trace_combinator}
Let $\Tr\in\IA(1,2)$ be $F(\lam)$.
Then, for $a\in\IA(1+m,1+n)$, $\Tr\cdot a={\mathit{Tr}_L}^1_{m,n}(a)\in\IA(m,n)$ holds.
\end{proposition}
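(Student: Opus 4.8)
The plan is to prove $\Tr\cdot a = {\mathit{Tr}_L}^1_{m,n}(a)$ by assembling pieces already in hand, so the argument should be a short chain of equalities rather than any new diagrammatic computation. The three ingredients are: the preceding Lemma, which rewrites $\Tr = F(\lam)$ as the explicit $\BIdot$-term $\bbeta^\bullet\circ\BB\circ(\BB\,\balpha)\circ\BB$; the computation displayed just before that Lemma, which evaluates such a term on an argument; and the earlier Proposition stating that $\IA$ is a ribbon category with duality $\balpha\colon 0\rightarrow 2$ and $\bbeta\colon 2\rightarrow 0$, whose unique trace is therefore the canonical one built from that duality.

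Concretely, I would proceed in four steps. First, $\Tr\cdot a = (\bbeta^\bullet\circ\BB\circ(\BB\,\balpha)\circ\BB)\cdot a$ by the preceding Lemma. Second, I would evaluate the right-hand side in the ambient $\BIdot$-algebra $\calA = \mathbb{C}(I,A)$ using only the equations $\BB\,x\,y\,z = x\,(y\,z)$ and $x^\bullet\,y = y\,x$, exactly as in the display preceding the Lemma but with $a$ in place of $f$, obtaining $\balpha\circ(\BB\,a)\circ\bbeta$. Third, since $\IA$ is a ribbon category whose duality is witnessed by $\balpha$ and $\bbeta$, its left trace on $\IA(1+m,1+n)$ is given by ${\mathit{Tr}_L}^1_{m,n}(f) = \balpha\circ(\BB\,f)\circ\bbeta$, so taking $f = a$ gives $\balpha\circ(\BB\,a)\circ\bbeta = {\mathit{Tr}_L}^1_{m,n}(a)$. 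Fourth, this element lies in $\IA(m,n)$ precisely because ${\mathit{Tr}_L}^1_{m,n}$ is a well-defined operation $\IA(1+m,1+n)\rightarrow\IA(m,n)$ in the ribbon category $\IA$; chaining the four equalities then yields the statement.

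The honest remark is that the substantive work has already been done: the preceding Lemma (which itself rests on Lemma \ref{lem:beta-eta} and Proposition \ref{prop:trace}) and the earlier Proposition exhibiting the ribbon duality on $\IA$. What remains here is bookkeeping, and the only point that needs care is the following: ``$\cdot$'' denotes application in the $\BIdot$-algebra $\calA\cong\IA(0,1)$, not composition in the internal PRO, so the middle step must genuinely be a computation in $\calA$ (as in the pre-Lemma display), after which one still must observe that the result carries arity $m\rightarrow n$ rather than merely being an element of $\calA$. I do not expect any real obstacle beyond lining up this algebraic reading with the categorical one cleanly.
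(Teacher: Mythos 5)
Your proposal is correct and matches the paper's own argument: the paper establishes exactly the chain ${\mathit{Tr}_L}^1_{m,n}(f)=\balpha\circ(\BB\,f)\circ\bbeta=(\bbeta^\bullet\circ\BB\circ(\BB\,\balpha)\circ\BB)\,f$ in the display preceding the Lemma and then invokes the Lemma $\bbeta^\bullet\circ\BB\circ(\BB\,\balpha)\circ\BB=F(\lam)$ to conclude, which is your argument read in the opposite direction. Your closing remark about arity is handled by the compositionality of arities (Lemma \ref{lem:compositionality}), since $\balpha\colon 0\rightarrow 2$, $\BB\,a\colon 2+m\rightarrow 2+n$ and $\bbeta\colon 2\rightarrow 0$ compose to an element of arity $m\rightarrow n$.
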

So the left trace allows a particularly simple description: just the application of $\Tr$.
On the other hand, we also have the right trace on $\IA$, which however is harder to describe because of the
asymmetry of the left-tensor and right-tensor in internal PROBs.

Somewhat surprisingly, we can recover the duality $\balpha$ and $\bbeta$ from the trace combinator $\Tr$ as
$\alpha$ and $\beta$ are defined by $\app$, $\lam$ and {\em trace} on $\mathbb{C}$.
We will discuss this for general balanced extensional $\BCpmI$-algebras in the next section.

\section{Ribbon Combinatory Algebras}\label{sec:ribbonCA}

\begin{definition}
A {\em traced} extensional $\BCpmI$-algebra is a
balanced extensional $\BCpmI$-algebra with an element $\Tr$ satisfying the following axioms.
\setlength{\FrameSep}{4pt}
\begin{oframed}
$$\!\!\!
\begin{array}{rcll}
(\CC^\star\,\BB\,\Tr)\circ\BB &=& (\BB\,\Tr)\circ\BB^2 \!\!\!\!\!& \mathrm{(Left~ Tightening)}\\
(\BB\,\Tr)\circ(\CC^\star\,\BB\,\BB)\circ\BB &=& \BB\circ\Tr & \mathrm{(Right~ Tightening)}\!\!\\
\Tr\,\CC^\pm &=& \btheta^\pm
& \mathrm{(Yanking)}
\\
\!\!\Tr^2\circ(\CC^\star\,\BB\,\CC^-)\circ(\BB\,\CC^+) &=& \Tr^2 & \mathrm{(Exchange)}\\
\end{array}
$$ 
\end{oframed}
\end{definition}
(Left Tightening),  
saying that $\Tr$ is of arity $1\rightarrow 2$, implies
$$
a\circ(\Tr\,f)=((\CC^\star\,\BB\,\Tr)\circ\BB)\,a\,f=((\BB\,\Tr)\circ\BB^2)\,a\,f=\Tr\,((\BB\,a)\circ f).
$$
Similarly, (Right Tightening) says
$$
\Tr\,(f\circ(\BB\,b))=((\BB\,\Tr)\circ(\CC^\star\,\BB\,\BB)\circ\BB)\,f\,b = (\BB\circ\Tr)\,f\,b=(\Tr\,f)\circ b.
$$
Thus they imply 
$\Tr\,((\BB\,a)\circ f\circ (\BB\,b))=a\circ (\Tr\,f)\circ b$.
\begin{center}
\unitlength=.6pt
\begin{picture}(140,65)
\put(15,15){\framebox(110,50){}}
\thicklines
\lineseg{0}{50}{140}{50}
\lineseg{20}{30}{120}{30}
\lineseg{20}{0}{120}{0}
\qbezier(20,30)(0,30)(0,15)\qbezier(0,15)(0,0)(20,0)
\qbezier(120,30)(140,30)(140,15)\qbezier(140,15)(140,0)(120,0)
\rgbfbox{20}{40}{20}{20}{1}{.9}{.9}{\scriptsize$a$}%
\rgbfbox{60}{20}{20}{40}{.5}{1}{.5}{\scriptsize$f$}%
\rgbfbox{100}{40}{20}{20}{.9}{.9}{1}{\scriptsize$b$}%
\end{picture}
\begin{picture}(40,60)
\put(20,30){\makebox(0,0){$=$}}
\end{picture}
\begin{picture}(140,60)
\thicklines
\lineseg{0}{50}{140}{50}
\lineseg{60}{0}{80}{0}
\qbezier(60,30)(40,30)(40,15)\qbezier(40,15)(40,0)(60,0)
\qbezier(80,30)(100,30)(100,15)\qbezier(100,15)(100,0)(80,0)
\rgbfbox{20}{40}{20}{20}{1}{.9}{.9}{\scriptsize$a$}%
\rgbfbox{60}{20}{20}{40}{.5}{1}{.5}{\scriptsize$f$}%
\rgbfbox{100}{40}{20}{20}{.9}{.9}{1}{\scriptsize$b$}%
\end{picture}
\end{center}
(Yanking) would not need any explanation. In the symmetric case, it is simply $\Tr\,\CC=\BB\,\II$.
Finally, (Exchange) says
$$\Tr\,(\Tr\,(\CC^+\circ f\circ \CC^-))=(\Tr^2\circ(\CC^\star\,\BB\,\CC^-)\circ(\BB\,\CC^+))\,f
=\Tr\,(\Tr\,f)$$
\begin{center}
\unitlength=.6pt
\begin{picture}(140,90)
\put(30,15){\framebox(80,70){}}
\put(15,5){\framebox(110,85){}}
\thicklines
\lineseg{0}{70}{140}{70}
\lineseg{50}{50}{90}{50}
\lineseg{50}{30}{90}{30}
\braid{30}{30}{5}{10}
\braidInv{90}{30}{5}{10}
\qbezier(30,30)(20,30)(20,20)\qbezier(20,20)(20,10)(30,10)
\lineseg{30}{10}{110}{10}
\qbezier(110,10)(120,10)(120,20)\qbezier(120,20)(120,30)(110,30)
\lineseg{20}{50}{30}{50}
\lineseg{110}{50}{120}{50}
\qbezier(20,50)(0,50)(0,25)\qbezier(0,25)(0,0)(20,0)
\lineseg{20}{0}{120}{0}
\qbezier(120,0)(140,0)(140,25)\qbezier(140,25)(140,50)(120,50)
\rgbfbox{60}{20}{20}{60}{.5}{1}{.5}{\scriptsize$f$}%
\end{picture}
\begin{picture}(40,90)
\put(20,40){\makebox(0,0){$=$}}
\end{picture}
\begin{picture}(80,90)
\put(15,5){\framebox(50,80){}}
\thicklines
\lineseg{0}{70}{80}{70}
\lineseg{50}{50}{60}{50}
\qbezier(30,30)(20,30)(20,20)\qbezier(20,20)(20,10)(30,10)
\lineseg{30}{10}{50}{10}
\qbezier(50,10)(60,10)(60,20)\qbezier(60,20)(60,30)(50,30)
\lineseg{20}{50}{60}{50}
\qbezier(20,50)(0,50)(0,25)\qbezier(0,25)(0,0)(20,0)
\lineseg{20}{0}{60}{0}
\qbezier(60,0)(80,0)(80,25)\qbezier(80,25)(80,50)(60,50)
\rgbfbox{30}{20}{20}{60}{.5}{1}{.5}{\scriptsize$f$}%
\end{picture}
\end{center}
which is motivated by the axiomatisation of trace in \cite{Has09}. We have 
\begin{proposition}
For a traced balanced extensional $\BCpmI$-algebra $\calA$, its internal PROB $\IA$
is a traced balanced monoidal category.
\end{proposition}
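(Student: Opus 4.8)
The plan is to exhibit on $\IA$ a (left) trace and to check that it satisfies the coherence axioms of a traced monoidal category in the form of \cite{JSV96,Has09}, the corresponding right trace being obtained from it by conjugation with the braid $\bsigma$ (cf.\ Proposition~\ref{prop:trace}). Since $\IA$ is a strict monoidal category generated by the single object $1$ and is already balanced by the constructions of Section~\ref{sec:intPRO}, it suffices --- just as the braid and twist of a balanced PROB are pinned down by their components at $1$ and $2$ --- to put a trace over the generating object $1$ and extend it to the powers $k$ by iteration. Concretely I would set ${\mathit{Tr}_L}^{1}_{m,n}(f):=\Tr\cdot f$ for $f\colon 1+m\to 1+n$, ${\mathit{Tr}_L}^{0}=\id$, and ${\mathit{Tr}_L}^{k+1}_{m,n}(f):={\mathit{Tr}_L}^{k}_{m,n}\bigl({\mathit{Tr}_L}^{1}_{k+m,\,k+n}(f)\bigr)$, so that the Vanishing axioms hold by construction.

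First I would settle well-definedness. The axiom (Left~Tightening) is exactly the statement that $\Tr$ is of arity $1\rightarrow 2$, so $\Tr\cdot f$ is a legitimate element; what remains is to check that $f\colon 1+m\to 1+n$ forces $\Tr\cdot f\colon m\to n$. This is an arity calculation: one expands the defining equation of arity $m\to n$ for $\Tr\cdot f$ via Lemma~\ref{lem:compositionality}, strips off occurrences of $\BB$ using (Left~Tightening) and (Right~Tightening) in the derived form $\Tr\,((\BB\,a)\circ g\circ(\BB\,b))=a\circ(\Tr\,g)\circ b$, and closes the computation with the arity equation of $f$ itself. I expect this bookkeeping, together with its higher-power analogue, to be the fiddliest routine part of the argument.

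Then I would verify the trace axioms for ${\mathit{Tr}_L}^{1}$ and propagate them to the ${\mathit{Tr}_L}^{k}$. \emph{Tightening} (naturality in the two object variables) is precisely the identity $\Tr\,((\BB\,a)\circ g\circ(\BB\,b))=a\circ(\Tr\,g)\circ b$ already deduced from (Left~Tightening) and (Right~Tightening). \emph{Superposing} separates into adding untraced wires \emph{after} the present ones --- free in an internal PROB, since $h+l=h$ there, so ${\mathit{Tr}_L}^{1}(h)+l={\mathit{Tr}_L}^{1}(h+l)$ with no further work --- and adding untraced wires \emph{between} the traced wire and the rest, which reduces to tightening by conjugating with $\bsigma$ and using its naturality (already available in $\IA$). \emph{Yanking}, in the balanced reading ${\mathit{Tr}_L}^{1}_{1,1}(\bsigma^{\pm1}_{1,1})=\btheta^{\pm}$, is exactly (Yanking). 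The crux is \emph{Vanishing~II} together with \emph{dinaturality in the traced object}: for morphisms between powers of $1$, and modulo tightening and superposing, both reduce to the single instance in which the morphism slid around the loop is the braid $\bsigma_{1,1}$, and that instance is supplied by (Exchange). This is the trace analogue of reducing naturality of a braid to the Coxeter relations, and is the step most similar to --- but heavier than --- the corresponding one for PROBs: making the iterated trace genuinely a trace over $1^{\otimes k}$, insensitive to how the $k$ traced wires are threaded, is where I anticipate the real work.

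An alternative, and possibly cleaner, route I would also pursue is to recover the duality directly. From $\Tr$ (together with $\BB$ and $\bsigma$) one can write down candidates $\balpha\colon 0\to 2$ and $\bbeta\colon 2\to 0$ mirroring the formulas $\alpha=\mathit{Tr}^{A}(\lam;(A\otimes\lam))$ and $\beta=\mathit{Tr}^{A}((A\otimes\app);\app)$ of Section~\ref{sec:ribbon}, and verify the snake equations using (Left~Tightening), (Right~Tightening), (Yanking) and (Exchange). This exhibits $\IA$ as a left autonomous category; since its braid and twist are already in place and every object is self-dual, $\IA$ is then a ribbon category, hence traced via the canonical trace on ribbon categories recalled in Section~\ref{sec:prelim}, and one finally checks this trace coincides with $f\mapsto\Tr\cdot f$, exactly as in the passage leading to Proposition~\ref{thm:trace_combinator}. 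Either way, (Exchange) is the essential input beyond the balanced structure already established.
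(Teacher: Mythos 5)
Your proposal is correct and follows essentially the route the paper intends: the paper gives no separate proof of this proposition but relies on the preceding observations that (Left/Right Tightening), (Yanking) and (Exchange) are precisely the internal forms of the reduced axiomatisation of trace for balanced monoidal categories from \cite{Has09}, so that the left trace $f\mapsto\Tr\,f$ (extended to tensor powers by iteration and transported to a right trace by conjugation with $\bsigma$) satisfies the Joyal--Street--Verity axioms. Your arity bookkeeping for $\Tr\cdot f$ and the reduction of sliding/dinaturality to the braid instance supplied by (Exchange) are exactly the details the paper leaves implicit, and your alternative route via $\balpha,\bbeta$ is what the paper carries out in the propositions immediately following.
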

In fact, we will see that $\IA$ is not only traced but also ribbon. 
First, we shall define the notion of duality on combinatory algebras:

\begin{definition}
A {\em duality} in an extensional $\BIdot$-algebra is a pair of elements $\balpha$ and $\bbeta$ satisfying 
the following axioms.
\setlength{\FrameSep}{4pt}
\begin{oframed}   
$$
\begin{array}{rcll}
\balpha^\bullet\circ\BB &=& (\BB\,\balpha)\circ\BB^2 & (\mathrm{Arity}_\alpha)\\
\bbeta^\bullet\circ\BB^3 &=& \BB\,\bbeta & (\mathrm{Arity}_\beta)\\
\balpha\circ(\BB\,\bbeta) &=& \BB\,\II & (\mathrm{Duality}_1)\\
(\BB\,\balpha)\circ\bbeta &=& \BB\,\II & (\mathrm{Duality}_2)\\
\end{array}
$$
\end{oframed}
A {\em ribbon extensional $\BCpmI$-algebra} is a balanced extensional
$\BCpmI$-algebra with a duality satisfying the following axiom.
\setlength{\FrameSep}{4pt}
\begin{oframed}  
$$
\begin{array}{rcll}
\balpha\circ\btheta^\pm &=& \balpha\circ(\BB\,\btheta^\pm) & (\mathrm{Duality}_\theta)
\end{array}
$$
\end{oframed}
\end{definition}
The axiom $(\mathrm{Arity}_\alpha)$ says $\balpha$ is of arity $0\rightarrow2$, while
$(\mathrm{Arity}_\beta)$ says $\bbeta$ is of arity $2\rightarrow0$. 
$(\mathrm{Duality}_1)$ and $(\mathrm{Duality}_2)$ 
state they form a duality.
Finally, $(\mathrm{Duality}_\theta)$ ensures $\theta^*=\theta$ required for ribbon categories.

\begin{proposition}
For a ribbon extensional $\BCpmI$-algebra $\calA$, its internal PROB $\IA$
is a ribbon category.
\end{proposition}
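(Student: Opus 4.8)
The plan is to check, one at a time, the three ingredients in the definition of a ribbon category for the internal PROB $\IA$: that it is balanced monoidal, that it is left autonomous, and that it is tortile, i.e.\ $(\theta_X)^{*}=\theta_{X^{*}}$ for every object.

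The first is already available: a ribbon extensional $\BCpmI$-algebra is in particular a balanced extensional $\BCpmI$-algebra, so I can invoke the earlier result that $\IA$ is a PROB with braid $\bsigma$ carrying a twist $\btheta$, hence a balanced monoidal category. For left autonomy I would first observe that the axioms $(\mathrm{Arity}_\alpha)$ and $(\mathrm{Arity}_\beta)$ say precisely that $\balpha\in\IA(0,2)$ and $\bbeta\in\IA(2,0)$, so $\balpha$ and $\bbeta$ are candidates for a unit $\eta\colon 0\rightarrow 2$ and a counit $\varepsilon\colon 2\rightarrow 0$ making $1$ its own left dual $1^{*}=1$. Spelling out the PRO operations ($1+(\_)=\BB\,(\_)$, $(\_)+1=(\_)$, $\id_1=\BB\,\II$), the two snake equations for this candidate dual turn into $\balpha\circ(\BB\,\bbeta)=\BB\,\II$ and $(\BB\,\balpha)\circ\bbeta=\BB\,\II$, which are exactly $(\mathrm{Duality}_1)$ and $(\mathrm{Duality}_2)$. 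Hence $1$ has a chosen left dual, and then every object $n=1^{\otimes n}$ acquires the canonical left dual $n^{*}=n$ by nesting copies of $\balpha$ and $\bbeta$, its snake equations following from those for $1$ in the standard way; thus $\IA$ is left autonomous.

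For tortility it is enough to prove $(\btheta^{+})^{*}=\btheta^{+}$ on the generator and then propagate. I would start from the transpose formula: for $f\in\IA(1,1)$, its left transpose with respect to $\balpha,\bbeta$ is $(\BB\,\balpha)\circ(\BB\,f)\circ\bbeta$. Taking $f=\btheta^{+}$ and rewriting with Lemma~\ref{lem:assoc} ($\BB\,(a\circ b)=(\BB\,a)\circ(\BB\,b)$ plus the associativity and unit laws for $\circ$), the axiom $(\mathrm{Duality}_\theta)$ (which reads $\balpha\circ\btheta^{+}=\balpha\circ(\BB\,\btheta^{+})$ — the twist may be slid from one leg of $\balpha$ to the other), the interchange law, and finally $(\mathrm{Duality}_2)$ together with $\BB\,\II=\II$, one obtains
$$
(\btheta^{+})^{*}=(\BB\,\balpha)\circ(\BB\,\btheta^{+})\circ\bbeta=(\BB\,\balpha)\circ(\BB^{2}\,\btheta^{+})\circ\bbeta=(\BB\,\balpha)\circ\bbeta\circ\btheta^{+}=\II\circ\btheta^{+}=\btheta^{+};
$$
dually $(\btheta^{-})^{*}=\btheta^{-}$, since $\btheta^{-}=(\btheta^{+})^{-1}$ by $(\theta_2)$ and transposition preserves inverses. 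This is just the combinatory shadow of the elementary diagrammatic fact that, in a balanced left autonomous category, $\eta_X;(\theta_X\otimes X^{*})=\eta_X;(X\otimes\theta_{X^{*}})$ forces $(\theta_X)^{*}=\theta_{X^{*}}$. To pass to a general object $n$, I would use the balanced formula $\btheta_{m+1}=\btheta^{+}\circ\bsigma_{1,m}\circ\btheta_m\circ\bsigma_{m,1}$ together with the standard identity expressing the transpose of a braiding as the braiding of the (here self-dual) duals, getting $(\btheta_n)^{*}=\btheta_n=\theta_{n^{*}}$ by an easy induction. Putting these together, $\IA$ is balanced monoidal, left autonomous and tortile, hence a ribbon category.

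The hard part will be the computation of $(\btheta^{+})^{*}$: one has to track the left-tensor/right-tensor asymmetry of $\IA$ carefully, so that the transpose of $\btheta^{+}$ really is $(\BB\,\balpha)\circ(\BB\,\btheta^{+})\circ\bbeta$ and so that $(\mathrm{Duality}_\theta)$ is applied to the correct leg of $\balpha$. Everything else — recognising the two snake equations in the left-autonomy step, the appeals to the interchange law, and the induction that lifts tortility from the generator $1$ to all objects $n$ — is routine once the dictionary between the PRO operations and the string-diagram operations is fixed.
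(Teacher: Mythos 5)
Your proof is correct and follows essentially the same route the paper takes: the paper offers no explicit proof beyond the remark that $(\mathrm{Arity}_{\alpha,\beta})$ give the arities, $(\mathrm{Duality}_{1,2})$ are the snake equations for a self-duality of $1$, and $(\mathrm{Duality}_\theta)$ ensures $\theta^*=\theta$, which is precisely the verification you carry out (and your computation of $(\btheta^+)^*$ via $(\mathrm{Duality}_\theta)$, the interchange law and $(\mathrm{Duality}_2)$ checks out, including the left-/right-tensor bookkeeping). The only thing you add beyond the paper is the explicit propagation of duality and tortility from the generator $1$ to all objects $n$, which is standard and correctly sketched.
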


As we already observed for reflexive objects in ribbon categories, we can derive duality from the trace. 
The key insight is that the trace combinator $\Tr$ can be used not only for defining trace but also
as the morphism $\lam\colon A\rightarrow A\otimes A$ in the previous section,
and the lambda-node \makebox(7,5)[c]{$\lamnode$} in the trivalent graph representation in the introduction.
\begin{proposition}
a traced balanced extensional $\BCpmI$-algebra has a duality given by 
$$
\begin{array}{rclcl}
\balpha &=&\Tr\,(\Tr\circ(\BB\,\Tr)\circ(\BB\,\CC^+)\circ\CC^+)&\colon &0\rightarrow2\\
\bbeta &=&\Tr\,(\CC^+\circ(\BB\,\CC^+)\circ(\BB\,\BB)\circ\BB)&\colon &2\rightarrow0
\end{array}
$$
\begin{center}
\unitlength=.6pt
\begin{picture}(145,60)(35,20)
\thicklines
\qbezier(70,60)(75,60)(80,65)\qbezier(80,65)(85,70)(90,70)
\lineseg{70}{40}{140}{40}
\braid{120}{60}{5}{10}
\braid{140}{40}{5}{10}
\lineseg{110}{80}{120}{80}
\lineseg{110}{60}{120}{60}
\lineseg{140}{80}{180}{80}
\lineseg{160}{60}{180}{60}
\qbezier(50,50)(35,50)(35,35)\qbezier(35,35)(35,20)(60,20)
\lineseg{60}{20}{160}{20}
\qbezier(160,20)(170,20)(170,30)\qbezier(170,30)(170,40)(160,40)
\arcLR{70}{40}{-20}{10}
\put(50,50){\lamnode}
\arcLR{110}{60}{-20}{10}
\put(90,70){\lamnode}
\end{picture}
\unitlength=.6pt
\begin{picture}(40,60)
\end{picture}
\unitlength=.6pt
\begin{picture}(145,60)(30,20)
\thicklines
\lineseg{30}{80}{70}{80}
\lineseg{30}{60}{50}{60}
\lineseg{70}{40}{140}{40} 
\braid{50}{40}{5}{10}
\braid{70}{60}{5}{10}
\lineseg{90}{80}{100}{80}
\lineseg{90}{60}{100}{60}
\qbezier(120,70)(125,70)(130,65)\qbezier(130,65)(135,60)(140,60)
\qbezier(50,40)(40,40)(40,30)\qbezier(40,30)(40,20)(50,20)
\lineseg{50}{20}{160}{20}
\qbezier(160,20)(175,20)(175,35)\qbezier(175,35)(175,50)(160,50)
\arcLR{100}{60}{20}{10}
\put(120,70){\appnode}
\arcLR{140}{40}{20}{10}
\put(160,50){\appnode}
\end{picture}
\end{center}
\end{proposition}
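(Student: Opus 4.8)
The plan is to transport into the internal PROB $\IA$ the construction of the duality $(\alpha,\beta)$ of Section \ref{sec:ribbon}. By the preceding proposition $\IA$ is a traced balanced monoidal category; it is reflexive with application $\BB\colon 2\to 1$, and the element $\Tr\colon 1\to 2$ plays the role of the morphism $\lam\colon A\to A\otimes A$ of Section \ref{sec:ribbon}. The two stated expressions are precisely the Section \ref{sec:ribbon} definitions $\alpha=\mathit{Tr}^1_{0,2}\big(\Tr\circ(\BB\,\Tr)\big)$ and $\beta=\mathit{Tr}^1_{2,0}\big((\BB\,\BB)\circ\BB\big)$, with the right traces rewritten as left traces: bringing the traced strand to the front by a braid turns the right trace $\mathit{Tr}^1$ into the left trace ${\mathit{Tr}_L}^1$, which accounts for the braid factors $\bsigma_{2,1}=(\BB\,\CC^+)\circ\CC^+$ and $\bsigma_{1,2}=\CC^+\circ(\BB\,\CC^+)$, and combinatory completeness then presents ${\mathit{Tr}_L}^1(g)$ as $\Tr\cdot g$. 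So the displayed formulas are $\balpha=\Tr\cdot\big(\Tr\circ(\BB\,\Tr)\circ\bsigma_{2,1}\big)$ and $\bbeta=\Tr\cdot\big(\bsigma_{1,2}\circ(\BB\,\BB)\circ\BB\big)$.

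First I would dispatch the arity axioms. Since $\Tr\colon 1\to 2$ (Left Tightening), $\BB\colon 2\to 1$, $\CC^\pm\colon 2\to 2$, and composition preserves arity (Lemma \ref{lem:compositionality}), the two expressions inside the outer $\Tr$ have arities $1\to 3$ and $3\to 1$; as the left trace ${\mathit{Tr}_L}^1_{m,n}$ — the operation $a\mapsto\Tr\cdot a$ — sends arity $1+m\to 1+n$ to arity $m\to n$, we conclude $\balpha\colon 0\to 2$ and $\bbeta\colon 2\to 0$, which is exactly what $(\mathrm{Arity}_\alpha)$ and $(\mathrm{Arity}_\beta)$ assert.

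Next, for $(\mathrm{Duality}_1)$ and $(\mathrm{Duality}_2)$ — the two snake identities — I would record the $\IA$-analogues of Lemma \ref{lem:beta-eta} for the pair $(\BB,\Tr)$, namely the $\beta$-equality $\Tr\circ(\BB\,\CC^-)\circ\BB=\BB\,\btheta^-$ and the $\eta$-equality $\mathit{Tr}^1(\BB\circ\Tr)=\II$, together with the sliding/tightening form of the trace that lets ${\mathit{Tr}_L}^1$ absorb pre- and post-composition. These follow from $1$ being reflexive (so $\BB$ is a genuine application and the $\beta\eta$-rules hold), from (Yanking), and from the (Left Tightening) and (Right Tightening) axioms; no ambient ribbon category is needed. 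With these identities in hand, the two snake equations are obtained by the same manipulation as in Section \ref{sec:ribbon}: unfold $\balpha$ and $\bbeta$ into their trace-forms, merge the nested traces and slide boxes using the tightening axioms, reorganise the traced strands using (Exchange), cancel a $\Tr$–$\BB$ pair by the $\beta$-equality, and finish by yanking.

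The main obstacle is precisely this last computation. Its categorical counterpart — the identity $(\alpha\otimes\mathit{id}_A);(\mathit{id}_A\otimes\beta)=\mathit{id}_A=(\mathit{id}_A\otimes\alpha);(\beta\otimes\mathit{id}_A)$ in Section \ref{sec:ribbon} — is only asserted there, and turning the picture-level argument into a purely equational derivation in a traced balanced extensional $\BCpmI$-algebra takes care: one has to track which strand is being traced through each conjugation by $\CC^\pm$, check that every braid introduced to pass between right and left traces is cancelled by its inverse (so that no net twist is left over beyond the intended $\btheta^\pm$), and make sure the asymmetry between left-tensor and right-tensor in $\IA$ causes no trouble. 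Everything else — the arity bookkeeping and the identification of the closed forms with the trace expressions — is routine.
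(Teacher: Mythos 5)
Your overall strategy is exactly the one the paper intends: the sentence immediately preceding this proposition announces the key insight ("the trace combinator $\Tr$ can be used \ldots as the morphism $\lam$"), and the paper then states the proposition \emph{without any proof at all}, so there is no more detailed argument to compare yours against. Your decoding of the two displayed combinators is correct: $(\BB\,\CC^+)\circ\CC^+=\bsigma_{2,1}$ and $\CC^+\circ(\BB\,\CC^+)=\bsigma_{1,2}$ by the paper's recursive definition of $\bsigma$, so the formulas are indeed the internalisations of $\alpha=\mathit{Tr}^A(\lam;(A\otimes\lam))$ and $\beta=\mathit{Tr}^A((A\otimes\app);\app)$ from Section~\ref{sec:ribbon}, with the right trace re-expressed through the left trace $a\mapsto\Tr\cdot a$. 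The arity check via Lemma~\ref{lem:compositionality} is complete and correct (and note that only $(\mathrm{Arity}_\alpha)$, $(\mathrm{Arity}_\beta)$, $(\mathrm{Duality}_1)$, $(\mathrm{Duality}_2)$ are at stake here; $(\mathrm{Duality}_\theta)$ belongs to the ribbon axiom, not to "a duality").

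The one caveat is that the only mathematically substantive step --- deriving the two snake identities $\balpha\circ(\BB\,\bbeta)=\BB\,\II$ and $(\BB\,\balpha)\circ\bbeta=\BB\,\II$ purely from (Left Tightening), (Right Tightening), (Yanking) and (Exchange) --- is left as a sketch in your proposal, and you candidly say so. Be aware that the auxiliary identities you propose to lean on, such as the internal $\beta$-equality $\Tr\circ(\BB\,\CC^-)\circ\BB=\BB\,\btheta^-$ and the internal $\eta$-equality, are themselves not axioms of a traced balanced extensional $\BCpmI$-algebra and would first have to be derived from the four trace axioms together with reflexivity; the categorical Lemma~\ref{lem:beta-eta} is proved in the paper only for reflexive objects in ribbon categories, where duality is already available, so you cannot simply cite it. Since the paper also omits this computation, your proposal is no less complete than the source, but as written it is a sound plan rather than a finished proof.
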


On the other hand, a ribbon extensional $\BCpmI$-algebra is traced with $\Tr=(\CC^+\,\BB\,\bbeta)\circ(\BB\,\balpha)\circ\BB$.
Thus we have:
\begin{theorem}
A balanced extensional $\BCpmI$-algebra is traced if and only if it is ribbon.
\end{theorem}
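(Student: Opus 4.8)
\emph{Proof proposal.} The plan is to prove the two implications separately; in each direction the extra structure is given by a formula that has already been displayed, and what remains is to verify the relevant axioms one by one, using the internal PROB $\IA$ --- known to be traced, resp.\ ribbon, by the two propositions just above --- as the bridge between combinator equations and string-diagram identities. The reason the equivalence holds at all (general traced monoidal categories being far from ribbon) is that the reflexive object $1$ of $\IA$ lets the trace combinator $\Tr$ double as the lambda-abstraction morphism $\lam\colon A\rightarrow A\otimes A$ of Section~\ref{sec:ribbon}, and it is precisely $\lam$ (equivalently, the lambda-node of the trivalent-graph reading) that is needed to manufacture the duality $\alpha,\beta$.

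\emph{Ribbon $\Rightarrow$ traced.} Given a ribbon extensional $\BCpmI$-algebra with duality $\balpha,\bbeta$, set $\Tr=(\CC^+\,\BB\,\bbeta)\circ(\BB\,\balpha)\circ\BB$. The axiom (Left Tightening) is exactly the statement that $\Tr$ has arity $1\rightarrow2$: since $(\mathrm{Arity}_\alpha)$ and $(\mathrm{Arity}_\beta)$ give $\balpha\colon0\rightarrow2$ and $\bbeta\colon2\rightarrow0$, and $\CC^+\,\BB\,\bbeta=\bbeta^\bullet\circ\BB$, this is a direct arity computation using Lemma~\ref{lem:compositionality} (composition and weakening of arities). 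For the remaining three I would argue as follows: $\IA$ being a ribbon category with duality $\balpha,\bbeta$ on its reflexive object $1$, it carries a \emph{unique} trace, and by combinatory completeness $\Tr\cdot a=(\lambda^*f.\,\balpha\circ(\BB\,f)\circ\bbeta)\cdot a=\balpha\circ(\BB\,a)\circ\bbeta$, which is the standard partial-trace-via-duality expression and hence computes that canonical trace (cf.\ Proposition~\ref{thm:trace_combinator}). Then (Right Tightening) is its sliding/naturality law, (Yanking) is the ribbon identity ${\mathit{Tr}_L}(\sigma)=\theta$ (derivable here from the snake equations and the twist axioms $(\theta_1)$--$(\theta_4)$), and (Exchange) is the exchange/superposing law for nested traces; each becomes the stated boxed equation once transported along $\IA$. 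A reader wanting a self-contained calculation can instead rewrite the three identities directly with $(\mathrm{Duality}_1)$, $(\mathrm{Duality}_2)$, $(\mathrm{Duality}_\theta)$ and Lemma~\ref{lem:assoc}, the pictures beside the axioms keeping the bookkeeping routine.

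\emph{Traced $\Rightarrow$ ribbon.} Given a traced balanced extensional $\BCpmI$-algebra, take $\balpha$ and $\bbeta$ as in the preceding proposition, which already secures $(\mathrm{Arity}_\alpha)$, $(\mathrm{Arity}_\beta)$, $(\mathrm{Duality}_1)$ and $(\mathrm{Duality}_2)$; only the ribbon axiom $(\mathrm{Duality}_\theta)$, namely $\balpha\circ\btheta^\pm=\balpha\circ(\BB\,\btheta^\pm)$, remains. Unfolding $\balpha$ and sliding $\btheta^\pm$ (of arity $1\rightarrow1$) around the closed loop created by the two nested traces --- using (Yanking), the tightening axioms and the twist axioms $(\theta_1)$, $(\theta_4)$ --- reduces this to the categorical fact that the self-duality induced on a traced balanced monoidal category satisfies $\theta_{X^*}=(\theta_X)^*$, which is exactly what makes $\mathbf{Int}\mathbb{C}$ ribbon rather than merely balanced-with-duality. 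Combining the two directions yields the theorem.

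\emph{Expected main obstacle.} The delicate point is in the first half: ensuring that $\Tr$ genuinely \emph{represents} the canonical trace of $\IA$. Because of the asymmetry between left- and right-tensor in internal PROBs (so that it is the \emph{left} trace that $\Tr$ captures) and because the arities of $\balpha$ and $\bbeta$ must be threaded carefully through the composition defining $\Tr$, matching the diagrammatic tightening/yanking/exchange laws with the four boxed equations requires care. Once that identification and the loop-sliding computation for $(\mathrm{Duality}_\theta)$ are in place, both implications follow.
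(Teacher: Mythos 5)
Your proposal matches the paper's own (very terse) argument: both directions use exactly the formulas the paper displays --- $\Tr=(\CC^+\,\BB\,\bbeta)\circ(\BB\,\balpha)\circ\BB$ for ribbon~$\Rightarrow$~traced, and the preceding proposition's $\balpha,\bbeta$ for traced~$\Rightarrow$~ribbon --- with the verification routed through the internal PROB. Your explicit flagging that $(\mathrm{Duality}_\theta)$ still needs to be checked separately is a point the paper glosses over, but the approach is essentially identical.
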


In the sequel, by a {\em ribbon combinatory algebra} we mean a traced balanced extensional $\BCpmI$-algebra, or equivalently a
ribbon extensional $\BCpmI$-algebra. 

\newcommand{\imp}{\!\multimap\!}

\begin{example}\label{ex:trees}
We describe a ribbon combinatory algebra of infinite binary trees labelled by the elements of a group, which was originally 
given as a model of the braided lambda calculus \cite{Has20} in the ribbon category of crossed $G$-sets \cite{Has12}. Fix a group $G=(G,\cdot,e,(\_)^{-1})$ and let
$\mathcal{D}=\{f\colon \{\mathsf{c},\mathsf{d}\}^*\rightarrow G~|~f(w)=f(\mathsf{c}w)\cdot f(\mathsf{d}w)^{-1}\}$.
Define $\multimap\colon \mathcal{D}^2\rightarrow \mathcal{D}$ by 
$(f_1\imp f_2)(\varepsilon)=f_2(\varepsilon)\cdot f_1(\varepsilon)^{-1}$, 
$(f_1\imp f_2)(\mathsf{d}w)=f_1(w)$ and
$(f_1\imp f_2)(\mathsf{c}w)=f_2(w)$. 
There is a $G$-action $\bullet\colon G\times\mathcal{D}\rightarrow\mathcal{D}$ given by
$(g\bullet f)(w)=g\cdot f(w)\cdot g^{-1}$,
and a map $|\_|\colon \mathcal{D}\rightarrow G$ defined by $|f|=f(\varepsilon)$.
We say a subset $a$ of $\mathcal{D}$ is equivariant if, for any $f\in a$ and $g\in G$, $|f|=e$ and $g\bullet f\in a$ hold.
Let $\mathcal{A}$ be the set of equivariant subsets.
$\mathcal{A}$ is a ribbon combinatory algebra with
$$\arraycolsep=3pt
\begin{array}{lcl}
a\,b &=& \{y~|~\exists x~(x\imp y)\in a\,\&\, x\in b\}\\
\BB &=& \{(y\imp z)\imp((x\imp y)\imp(x\imp z))~|~x,y,z\in\mathcal{D}\}\\
\CC^+ &=& \{((|x|\bullet y)\imp (x\imp z))\imp(x\imp(y\imp z))~|~x,y,z\in\mathcal{D}\}\\
\CC^- &=& \{(y\imp ((|y|^{-1}\!\bullet x)\imp z))\imp(x\imp(y\imp z))~|~x,y,z\in\mathcal{D}\}\\
\II &=& \{x\imp x~|~x\in\mathcal{D}\}\\
\btheta^+ &=& \{((|x|\bullet x)\imp y)\imp(x\imp y)~|~x,y\in\mathcal{D}\}\\
\btheta^- &=& \{((|x|^{-1}\!\bullet x)\imp y)\imp(x\imp y)~|~x,y\in\mathcal{D}\}\\
\Tr &=& \{((x\imp y)\imp(x\imp z))\imp(y\imp z)~|~x,y,z\in\mathcal{D}\}\\
\end{array}
$$
When $G$ is abelian, $\mathcal{A}$ is symmetric because $g\bullet f=f$ for any $g\in G$.
\end{example}

We conclude this section by showing that a ribbon combinatory algebra is symmetric (i.e., $\CC^+=\CC^-$ and $\btheta^+=\btheta^-=\II$)
iff $\btheta^+=\II$. Thus having a non-trivial twist is essential.
\begin{proposition}
A ribbon combinatory algebra is symmetric whenever $\btheta^+=\II$ holds.
\end{proposition}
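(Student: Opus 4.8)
The plan is to pass to the internal PROB $\IA$, which for a ribbon combinatory algebra is a ribbon category with $1$ a reflexive, self-dual object, and to argue that triviality of the twist on the generating object $1$ forces the entire braiding of $\IA$ to be symmetric.

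First I would record that $\btheta^+ = \II$ already forces $\btheta^- = \II$: by $(\theta_2)$ we have $\btheta^+\circ\btheta^- = \BB\,\II$, which equals $\II$ by $(\lambda)$, so $\btheta^- = \II\circ\btheta^- = \II$ using Lemma \ref{lem:assoc}. Hence it remains only to prove $\CC^+ = \CC^-$. For this I would first trivialise the twist of $\IA$ on all objects. Since $\IA$ is ribbon it is left autonomous, hence closed with $A\multimap B = B\otimes A^*$; the built-in duality $\balpha\colon 0\to 2$, $\bbeta\colon 2\to 0$ exhibits $1$ as self-dual, so $1^* = 1$, and strict reflexivity of $1$ gives $1\cong 1\multimap 1 = 1\otimes 1 = 2$ (equivalently, this is the isomorphism $p$ of the Remark on the trivial loop applied to the reflexive object $1$ of the ribbon category $\IA$). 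By induction $m\cong 1$ for all $m\ge 1$, so, since the twist $\btheta_\bullet$ of $\IA$ is natural and $\btheta_1 = \btheta^+ = \II = \id_1$, conjugating by an isomorphism $m\cong 1$ yields $\btheta_m = \id_m$ for every $m\ge 1$; together with $\btheta_0 = \II$ this gives $\btheta_m = \id_m$ for all $m$.

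Then I would invoke the balance equation $\theta_{A\otimes B}=\sigma_{A,B};(\theta_B\otimes\theta_A);\sigma_{B,A}$ of $\IA$ with $A = B = 1$: as $\btheta_1 = \II$ the factor $\btheta_1\otimes\btheta_1$ is $\id_2$, whence $\id_2 = \btheta_2 = \CC^+\circ\CC^+$, so $\CC^+$ is its own two-sided inverse in $\IA$; but $(\cox_1)$ gives $\CC^+\circ\CC^- = \BB\,(\BB\,\II) = \II = \id_2$ (and similarly $\CC^-\circ\CC^+ = \id_2$), so $\CC^-$ is also an inverse of $\CC^+$, and by uniqueness of inverses $\CC^+ = \CC^-$. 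Combined with $\btheta^+ = \btheta^- = \II$ this says the algebra is symmetric. The only step needing care — and the one place the ribbon rather than merely balanced structure is essential — is the trivialisation of the twist on all objects: it is the self-duality of the reflexive object $1$ that forces $2\cong 1$ and hence makes every nonzero object isomorphic to $1$, which is exactly what lets naturality carry $\btheta^+ = \II$ from the generator to the whole category; for a merely balanced $\BCpmI$-algebra the implication would fail.
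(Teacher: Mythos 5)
Your proposal is correct and follows essentially the same route as the paper: both arguments hinge on an isomorphism $2\cong 1$ (forced by reflexivity together with the self-duality of $1$), use naturality of the twist along it to upgrade $\btheta_1=\II$ to $\btheta_2=\id_2$, identify $\btheta_2$ with $\CC^+\circ\CC^+$ via the balance equation, and conclude $\CC^+=\CC^-$ from $(\cox_1)$ and uniqueness of inverses. The only difference is presentational: the paper exhibits the isomorphism concretely as $\mathbf{p}=\Tr\circ(\BB\,\bbeta)$ with inverse $(\BB\,\balpha)\circ\BB$ and applies naturality only at $m=2$, whereas you obtain it abstractly and trivialise $\btheta_m$ for all $m$, which is more than is needed.
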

\begin{proof} $\mathbf{p}=\Tr\circ(\BB\,\bbeta)\colon 2\rightarrow 1$ has an inverse $\mathbf{p}^{-1}=(\BB\,\balpha)\circ\BB\colon 1\rightarrow 2$.
The naturality of the twist implies $\mathbf{p}\circ\btheta_2=\btheta_1\circ\mathbf{p}$. On the other hand, $\btheta^+=\II$
implies $\btheta_2=\CC^+\circ\CC^+$ and $\btheta_1=\II$. Hence $\CC^+\circ\CC^+=\II$
by pre-composing $\mathbf{p}^{-1}$, and we have $\CC^+=\CC^-$.
\end{proof}

\section{Ribbon combinatory algebras vs ribbon categories}\label{sec:comparison}

From the results of the previous sections, we have:

\begin{theorem}
Every ribbon combinatory algebra is isomorphic to the one arising from a reflexive object in the ribbon category obtained as the internal PROB. 
\end{theorem}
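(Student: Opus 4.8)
The plan is to assemble the isomorphism from the pieces already constructed in the previous sections, then verify it respects the ribbon structure by appealing to the internal representation theorem of the planar/braided/balanced cases. First I would recall that for a ribbon combinatory algebra $\calA$, Section~\ref{sec:ribbonCA} shows its internal PROB $\IA$ is a ribbon category, and Section~\ref{subsec:trace-combinator} exhibits a reflexive object in it, namely $1$, with $\app=\BB\colon 2\rightarrow 1$ and $\lam=\Tr\colon 1\rightarrow 2$ in the sense of Section~\ref{sec:ribbon}. So the candidate ``ribbon category with a reflexive object'' is $(\IA,1)$, and the candidate isomorphism of combinatory algebras is the map $a\mapsto a^\bullet\colon\calA\stackrel{\cong}{\rightarrow}\IA(0,1)=\calA^\bullet$ already established (with inverse $b\mapsto b\,\II$) in the propositions for extensional $\BIdot$-/$\BCpmI$-algebras, together with its compatibility with $\BB$, $\CC^\pm$, $\II$, $\btheta^\pm$.

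Next I would observe that the combinatory algebra arising from the reflexive object $1$ in $\IA$ is, by Theorem~\ref{thm:alg_on_refl_obj} applied to $\mathbb{C}=\IA$ and $A=1$, exactly the balanced extensional $\BCpmI$-algebra $\IA(I,1)=\IA(0,1)=\calA^\bullet$, with application $a^\bullet\cdot b^\bullet=(a^\bullet+b^\bullet);\app$ and the derived $\BB,\CC^\pm,\II,\btheta^\pm$ being the $F(-)$ images of $\app,\sigma^{\pm1},\id,\theta^{\pm1}$ of $\IA$. Since the reflexive object $1$ of $\IA$ already \emph{has} $\app=\BB$, $\lam=\Tr$ on the nose, the functor $F\colon\IA|_1\rightarrow\mathcal{I}_{\calA^\bullet}$ of Figure~\ref{fig:BCpmI-alg} is (up to the canonical identification $\IA|_1=\IA$) essentially the identity; in particular $F(\BB)=\BB_{\calA^\bullet}$, $F(\CC^\pm)=\CC^\pm_{\calA^\bullet}$, $F(\II)=\II_{\calA^\bullet}$, $F(\btheta^\pm)=\btheta^\pm_{\calA^\bullet}$, and crucially $F(\Tr)=F(\lam)=\Tr_{\calA^\bullet}$ and $F(\balpha)=\balpha_{\calA^\bullet}$, $F(\bbeta)=\bbeta_{\calA^\bullet}$ by the lemma computing $F(\lam)=\bbeta^\bullet\circ\BB\circ(\BB\,\balpha)\circ\BB$ and the equivalence of the traced and ribbon presentations. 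Hence the isomorphism $a\mapsto a^\bullet$ of the underlying $\BCpmI$-algebras automatically carries $\Tr$ to $\Tr$ and the duality $\balpha,\bbeta$ to the duality, so it is an isomorphism of ribbon combinatory algebras. Finally I would invoke the earlier Corollaries (Cor.~\ref{cor:Scott:planar} and its braided/balanced analogues) together with the Proposition that $\IA$ is ribbon to conclude that $(\IA,1)$ is genuinely a ribbon category with a reflexive object, completing the statement.

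The main obstacle I expect is bookkeeping: showing that the ``round trip'' $\calA\mapsto\IA\mapsto\IA(0,1)$ returns not merely an isomorphic $\BCpmI$-algebra but one in which \emph{all} the ribbon data ($\Tr$, and hence $\balpha,\bbeta$, or equivalently $\balpha,\bbeta,\btheta^\pm$) correspond under the isomorphism. Concretely this means checking that $F(\Tr_{\IA})=\Tr_{\calA^\bullet}$, i.e.\ that the image under $F$ of the trace combinator of $\IA$ (defined categorically from $\balpha,\bbeta$ in $\IA$, or from $\lam$ via Proposition~\ref{thm:trace_combinator}) coincides with the $\Tr$ obtained from the general ribbon-to-reflexive construction; this follows from Proposition~\ref{thm:trace_combinator} and the lemma identifying $F(\lam)$ with $\bbeta^\bullet\circ\BB\circ(\BB\,\balpha)\circ\BB$, but one must be careful that the two presentations of ``ribbon combinatory algebra'' (traced versus duality) match up under the isomorphism, which is exactly the content of the theorem asserting they are equivalent. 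Everything else is a routine transfer of the already-proved planar/braided/balanced representation results along the map $a\mapsto a^\bullet$.
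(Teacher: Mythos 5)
Your proposal matches the paper's own treatment: the theorem is stated there with no separate proof beyond the remark that it follows ``from the results of the previous sections,'' and the assembly you describe --- $\IA$ is a ribbon category in which $1$ is reflexive, $\calA\cong\IA(0,1)=\calA^\bullet$ via $a\mapsto a^\bullet$ from the planar/braided/balanced propositions, and the trace/duality data correspond because $F(\lam)=\Tr=\bbeta^\bullet\circ\BB\circ(\BB\,\balpha)\circ\BB$ --- is exactly that assembly. Your explicit attention to the round trip of the ribbon data ($\Tr$, $\balpha$, $\bbeta$, $\btheta^\pm$) is a reasonable elaboration of what the paper leaves implicit.
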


However, not every reflexive ribbon PROB arises as an internal PROB.
In general, given a ribbon category $\mathbb{C}$ with a reflexive object $A$, the internal PROB $\IA$ of the ribbon combinatory algebra
$\calA=\mathbb{C}(I,A)$ is not isomorphic to the ribbon PROB $\mathbb{C}|_A$, as the simple counterexample of Example \ref{ex:pro} shows.
Nevertheless, the presence of the structure-preserving functor $F\colon \mathbb{C}|_A\rightarrow\IA$ suggests 
they are closely related; we shall look at this situation in more detail.

\subsection{Almost full}

We show that $F$ is {\em almost} full: the map $\mathbb{C}|_A(A^{\otimes m},A^{\otimes n})\rightarrow\IA(m,n)$ on morphisms
is surjective
except in the case of $m=n=0$.

First, the following partial result with codomain $1$ is easily derivable;
it is obtained essentially as a corollary to the equivalence between closed operads and extensional combinatory algebras
(which was stated as an adjunction \cite{Has22} but later turned out to be an equivalence \cite{Has23}).
\begin{lemma}
For any $a\colon m\rightarrow 1$ in $\IA$, there exists  $f\colon A^{\otimes m}\rightarrow A$
in $\bbCA$ such that $a=Ff$ holds.
\end{lemma}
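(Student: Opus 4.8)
The plan is to recall what role $\calA^\bullet$ plays: by Proposition on the internal PRO, $\IA(0,1)=\calA^\bullet$ carries an extensional $\BIdot$-algebra structure isomorphic to $\calA=\mathbb{C}(I,A)$, and under this isomorphism $F\colon \mathbb{C}|_A\rightarrow\IA$ restricts on $\mathbb{C}|_A(I,A)=\calA$ to the map $a\mapsto a^\bullet=F(a)\in\IA(0,1)$, which is a bijection. So the lemma already holds for $m=0$. The strategy for general $m$ is to reduce the statement about arity $m\rightarrow 1$ to the case $0\rightarrow 1$ by repeatedly ``uncurrying'', exploiting that $1$ is a reflexive object in $\IA$ and, crucially, that $A$ is a reflexive object in $\mathbb{C}|_A$ via $\varphi\colon A\otimes A^*\stackrel{\cong}{\rightarrow}A$ (equivalently $A\cong A\otimes A$, since $A^*\cong A$).

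First I would set up the induction on $m$. For $m=0$ the claim is the bijectivity noted above. For the inductive step, suppose $a\colon m+1\rightarrow 1$ in $\IA$. Since $1$ is reflexive in $\IA$ with application $\BB\colon 2\rightarrow 1$, we have $\mathsf{cur}(a)\colon m\rightarrow 1$ with $(\mathsf{cur}(a)+1);\BB=a$, i.e.\ $(\BB^{?}\,\dots)$ — more concretely $a=(\mathsf{cur}(a)+1);\mathsf{app}$. By the induction hypothesis there is $g\colon A^{\otimes m}\rightarrow A$ in $\mathbb{C}|_A$ with $F(g)=\mathsf{cur}(a)$. Now I would build the desired $f\colon A^{\otimes(m+1)}\rightarrow A$ out of $g$ and $\app\colon A\otimes A\rightarrow A$, namely $f=(g\otimes A);\app$. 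Because $F$ is a strict monoidal functor with $F(\app)=\BB$ and $F$ preserves $+$, it follows that $F(f)=(F(g)+1);\BB=(\mathsf{cur}(a)+1);\mathsf{app}=a$, completing the induction. (One should double check the bookkeeping: $F$ takes $f\colon A^{\otimes(m+1)}\rightarrow A$ to an element of $\IA(m+1,1)$, and monoidal functoriality gives $F((g\otimes A);\app)=F(g\otimes A);F(\app)$ with $F(g\otimes A)=F(g)+\II=F(g)+1$ since $F(\id_A)=\id_1=\II$.)

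Alternatively, and perhaps more in the spirit of the cited equivalence, one can phrase the whole argument without induction: the extensional $\BIdot$-algebra $\calA^\bullet=\IA(0,1)\cong\calA=\mathbb{C}|_A(I,A)$ determines its internal PRO, and the internal PRO of an extensional $\BIdot$-algebra is precisely recovered from the algebra; combining with the fact (stated in \cite{Has22,Has23}) that extensional $\BIdot$-algebras are equivalent to closed operads, the hom-set $\IA(m,1)$ is identified with the degree-$m$ component of the operad underlying $\calA$, which is exactly $\mathbb{C}|_A(A^{\otimes m},A)$ via the closed structure $A\multimap A\cong A$. Under this identification $F$ on morphisms into $1$ is the identity, hence in particular surjective.

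The main obstacle I anticipate is not conceptual but a matter of carefully matching up three descriptions of the same data — the operadic composition in $\calA$, the internal-PRO composition $a\circ b=\BB\,a\,b$, and the categorical composition in $\mathbb{C}|_A$ — and checking that $F$ intertwines them on the nose (strict monoidality and $F(\app)=\BB$, $F(\id_I)=\II$, $F(a)=a^\bullet$ were all recorded in Figure \ref{fig:BCpmI-alg} and the surrounding text, so this is essentially verification). The only genuinely delicate point is the base case: that $a\mapsto a^\bullet$ is a \emph{bijection} $\calA\to\IA(0,1)$, which is exactly the content of the isomorphism $\calA\cong\calA^\bullet$ from the internal-representation theorem, so it may simply be cited. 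Everything else is a short induction or a direct appeal to the closed-operad/combinatory-algebra equivalence.
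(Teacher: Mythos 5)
Your argument is correct and is essentially the paper's proof in inductive clothing: the paper applies Lemma \ref{lem:compositionality}(6) directly to write $a=(a\,\II)^\bullet\circ\BB^m$ and takes $f=((a\,\II)\otimes A^{\otimes m});\app_m$, whereas your induction on $m$ via $\mathsf{cur}$ in $\IA$ unrolls to exactly this normal form (since $\mathsf{cur}$ in $\IA$ is given by that same lemma) and the same preimage $(g\otimes A);\app$. Both versions rest on the same two facts — reflexivity of $1$ in $\IA$ and strict monoidality of $F$ with $F(\app)=\BB$, $F(b)=b^\bullet$ — so the proposal is fine as is.
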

\begin{proof}
By Lemma \ref{lem:compositionality} (6),
any $a\colon m\rightarrow 1$ is equal to 
$b^\bullet\circ\BB^m$ where $b=a\,\II\in\calA$. Let $f=(b\otimes A^{\otimes m});\app_m$. As $F(\app_m)=\BB^m$, we have 
$Ff=Fb\circ F(\app_m)= b^\bullet\circ \BB^m=a$.
\end{proof}

\mbox{}\\
This result extends to the cases with non-zero domain or codomain:
\begin{lemma}
For any $a\colon m\rightarrow 1+n$ in $\IA$, there exists  $f\colon A^{\otimes m}\rightarrow A^{\otimes (1+n)}$
in $\mathbb{C}|_A$ such that $a=Ff$ holds.
\end{lemma}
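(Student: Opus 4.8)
The plan is to reduce the general case $a\colon m\rightarrow 1+n$ to the already-established case with codomain $1$ (the preceding Lemma) by ``packing'' the $1+n$ outputs into a single output using the trace/duality structure, and then ``unpacking'' the corresponding morphism in $\mathbb{C}|_A$. Concretely, since $1$ is a reflexive object in $\IA$, we have the iterated application $\app_n = \BB_n \colon 1+n \rightarrow 1$ (writing $\app_n$ for the combinator sending $1+n$ wires to $1$, built from $\BB$), and also its currying $\mathsf{cur}^{n}$ going back. So given $a\colon m\rightarrow 1+n$, form $a' = a\circ\app_n\colon m\rightarrow 1$. By the previous Lemma there is $g\colon A^{\otimes m}\rightarrow A$ in $\mathbb{C}|_A$ with $Fg = a'$.

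Next I would recover $a$ from $a'$ inside $\IA$ and mirror that recovery in $\mathbb{C}|_A$. In a reflexive PRO, applying $\mathsf{cur}$ repeatedly to $a'\colon m\rightarrow 1$ does not directly give $a$ back unless $a$ itself already has codomain $1$; instead the right move is to use the isomorphism $A\cong A\otimes A$ noted in the ``trivial loop'' discussion, or more cleanly, to observe that in $\IA$ every $a\colon m\rightarrow 1+n$ satisfies $a = (a\circ\app_n)^{\flat}$ for a suitable operation $(\_)^{\flat}$ built from $\mathsf{cur}^{-1}$ and the duality $\balpha,\bbeta$ — that is, $a$ can be written using $a'$, $\balpha$, $\bbeta$, $\CC^\pm$ and $\BB$ alone. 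The key point is that all of these generators of $\IA$ are in the image of $F$: we have $F(\app)=\BB$, $F(\sigma_{A,A})=\CC^+$, $F(\sigma_{A,A}^{-1})=\CC^-$, $F(\alpha)=\balpha$, $F(\beta)=\bbeta$, and $F(\lam)=\Tr$, and $F$ is a strict monoidal functor preserving the braided/balanced/ribbon structure. Hence if $a = \Phi(a',\balpha,\bbeta,\CC^\pm,\BB,\Tr,\II)$ for some composite built by $;$ and $+$ in $\IA$, then taking the ``same'' composite $f = \Phi(g,\alpha,\beta,\sigma^{\pm1}_{A,A},\app,\lam,\id_A)$ in $\mathbb{C}|_A$ gives $Ff = a$ by functoriality.

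The cleanest way to produce the composite $\Phi$: use the standard encoding that expresses a morphism $m\rightarrow 1+n$ in a reflexive PROB (equivalently a PROB with duality) in terms of a morphism $m\rightarrow 1$. Since $\IA$ is autonomous via $\balpha,\bbeta$, the hom-set $\IA(m,1+n)$ is in bijection with $\IA(m+n,1)$ by bending the $n$ output wires around using $\bbeta$ (and back via $\balpha$); explicitly $a = (\BB^{?}\,\cdots)\circ(\text{stuff with }\balpha)\circ\cdots$, and conversely from any $h\colon m+n\rightarrow 1$ one recovers $h^{\sharp}\colon m\rightarrow 1+n$. Taking $h = a'\circ$ (the bending of $a$) lands $h$ in codomain $1$, apply the previous Lemma to get $g\colon A^{\otimes(m+n)}\rightarrow A$, then bend back in $\mathbb{C}|_A$ using $\alpha$ to obtain $f\colon A^{\otimes m}\rightarrow A^{\otimes(1+n)}$; functoriality of $F$ and the duality equations give $Ff=a$. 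I would spell this out with the string diagram for the bending, since it makes the bookkeeping transparent.

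The main obstacle I anticipate is purely notational/combinatorial: writing down the bending isomorphism $\IA(m,1+n)\cong\IA(m+n,1)$ and its mirror in $\mathbb{C}|_A$ in a way that visibly commutes with $F$, i.e.\ making sure that the composite built from $\balpha,\bbeta,\CC^\pm,\BB$ used to define the bijection on the algebra side is literally $F$ applied to the composite built from $\alpha,\beta,\sigma^{\pm1},\app$ on the category side. This is routine given that $F$ is a strict structure-preserving functor and all the relevant generators lie in its image, but it requires care with the asymmetry between left-tensor (apply $\BB$) and right-tensor (do nothing) in the internal PROB; the bends on output wires interact with this asymmetry, so I would double-check the arity computations ($\balpha\colon 0\rightarrow 2$, $\bbeta\colon 2\rightarrow 0$, and the composites remaining well-typed) at each step. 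No genuinely new idea beyond the previous Lemma and the functoriality of $F$ is needed.
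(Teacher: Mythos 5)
Your final plan---bend the $n$ extra output wires of $a$ around with the iterated counit $\bbeta_n$ to get into $\IA(m+n,1)$, import the result into $\mathbb{C}|_A$ via the codomain-$1$ lemma, bend back with $\alpha_n$, and check $Ff=a$ using strict monoidality of $F$, the arity/interchange law, and the snake equation---is exactly the paper's proof; the paper merely inlines the two steps into a single composite built from the element $a\,\bbeta_n\in\calA$ together with $\alpha_n$ and $\app_{m+n}$. Your opening idea of forming $a'=a\circ\app_n$ would not have worked since composition with $\app_n$ is not injective on $\IA(m,1+n)$, but you correctly discarded it in favour of the duality bending.
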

\begin{proof}
Let $f$ be
$$
A^{\otimes m}\xrightarrow{(a\,\bbeta_n)\otimes A^{\otimes m}\otimes\alpha_n}
A^{\otimes (1+m+2n)}\xrightarrow{\app_{m+n}\otimes A^{\otimes n}}
A^{\otimes (1+ n)}
$$
where $\alpha_n\colon I\rightarrow A^{\otimes(2n)}$ is defined by
$\alpha_1=\alpha$ and $\alpha_{n+1}=\alpha;(A\otimes\alpha_n\otimes A)$.
Similarly define $\beta_n\colon A^{\otimes(2n)}\rightarrow I$, and 
let $\balpha_n=F\alpha_n\colon 0\rightarrow 2n$ and $\bbeta_n=F\beta_n\colon 2n\rightarrow 0$.
Then $F((a\bbeta_n)\otimes A^{\otimes m}\otimes\alpha_n)$ is equal to
$$
F(A^{\otimes m}\otimes\alpha_n)\circ F((a\bbeta_n)\otimes A^{\otimes(m+2n)})
=
(\BB^m\,\balpha_n)\circ(a\,\bbeta_n)^\bullet
$$
while $F(\app_{m+n}\otimes A^{\otimes n})=\BB^{m+n}$,
and we have
$$
\begin{array}{rcll}
Ff
&=&
(\BB^m\,\balpha_n)\circ(a\,\bbeta_n)^\bullet\circ\BB^{m+n}
\\
&=&
(\BB^m\,\balpha_n)\circ\bbeta_n^\bullet\circ a^\bullet\circ\BB^{1+m+n}
\\
&=&
(\BB^m\,\balpha_n)\circ\bbeta_n^\bullet\circ (\BB\,a)\circ\BB^{1+2n}
& (a\colon m\rightarrow n+1)
\\
&=&
(\BB^m\,\balpha_n)\circ a\circ\bbeta_n^\bullet\circ\BB^{1+2n}
& (\bbeta_n^\bullet\colon 0\rightarrow1)
\\
&=&
a\circ(\BB^{n+1}\,\balpha_n)\circ\bbeta_n^\bullet\circ\BB^{1+2n}
& (a\colon m\rightarrow n+1)
\\
&=&
a\circ(\BB^{n+1}\,\balpha_n)\circ (\BB\,\bbeta_n)
& (\bbeta_n\colon 2n\rightarrow 0)
\\
&=&
a\circ\BB((\BB^n\,\balpha_n)\circ\bbeta_n)
& \!\!\!\!\!((\BB u)\circ(\BB v)=\BB\,(u\circ v))
\\
&=&
a\circ(\BB\,\II) 
& ((\BB^n\,\balpha_n)\circ\bbeta_n=\II)
\\
&=&
a.
\end{array}
$$
\end{proof}

\begin{lemma}
For 
$a\colon m+1\rightarrow 0$ in $\IA$, there exists  $f\colon A^{\otimes (m+1)}\rightarrow I$
in $\bbCA$ such that $a=Ff$ holds.
\end{lemma}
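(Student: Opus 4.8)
The plan is to deduce this from the codomain-$1$ case already proved, by \emph{bending one of the $m+1$ input wires of $a$ into an output wire} via the duality $\balpha\colon 0\rightarrow 2$, $\bbeta\colon 2\rightarrow 0$ that $\IA$ carries as a ribbon category.

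First I would set $\hat a=(\BB^m\,\balpha)\circ a\colon m\rightarrow 1$, which is well typed by Lemma~\ref{lem:compositionality}(3),(4),(5) once $\BB^m\,\balpha$ is read as $\id_m+\balpha\colon m\rightarrow m+2$ and $a$ at arity $m+2\rightarrow 1$. By the case of codomain $1$, I pick $g\colon A^{\otimes m}\rightarrow A$ in $\bbCA$ with $Fg=\hat a$, and then put $f=(g\otimes A);\beta\colon A^{\otimes(m+1)}\rightarrow I$, where $\beta\colon A\otimes A\rightarrow I$ is the counit of the duality on $A$ in $\mathbb{C}$. Since $\bbCA$ is a \emph{full} monoidal subcategory of $\mathbb{C}$ containing every tensor power of $A$, both $g$ and $\beta$, and hence $f$, lie in $\bbCA$.

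It then remains to verify $Ff=a$. Because $F$ is strict monoidal with $F\beta=\bbeta$ and $F(\id_A)=\II$, one gets $Ff=(\hat a+\id_1)\circ\bbeta$. Expanding $\hat a$ by functoriality of $\otimes$ gives $\hat a+\id_1=(\id_m+\balpha+\id_1)\circ(a+\id_2)$, and two applications of the PRO interchange law rewrite $(a+\id_2)\circ\bbeta$ as $a+\bbeta=(\id_{m+1}+\bbeta)\circ a$. Hence $Ff=\bigl((\id_m+\balpha+\id_1)\circ(\id_{m+1}+\bbeta)\bigr)\circ a$, and the bracketed morphism equals $\id_m+\bigl(\balpha\circ(\BB\,\bbeta)\bigr)=\id_m+\BB\,\II=\id_{m+1}$ by the snake axiom $(\mathrm{Duality}_1)$ together with $\BB\,\II=\id_1$; therefore $Ff=a$.

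The only genuinely non-formal ingredient is the snake axiom $(\mathrm{Duality}_1)$; everything else is arity bookkeeping with Lemma~\ref{lem:compositionality} and the interchange law. The one place that will need care, and the only friction I anticipate, is tracking wire orders so that the pair created by $\balpha$ is precisely what $\bbeta$ consumes \emph{together with} the fresh $(m+1)$st input wire, which is exactly what makes the zig-zag straighten. Alternatively one could bend all $m+1$ wires at once and invoke the $a\colon m\rightarrow 1+n$ lemma instead, but bending a single wire keeps the computation shortest; in any case the statement is essentially the formal dual of the preceding lemma, so no serious obstacle is expected.
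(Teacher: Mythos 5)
Your proposal is correct and follows essentially the same route as the paper: the paper also forms $(\BB^m\,\balpha)\circ a\colon m\rightarrow 1$, realises it as $F$ of a morphism into $A$ via the codomain-$1$ case (explicitly, $b=((\BB^m\,\balpha)\circ a)\,\II$ and $g=(b\otimes A^{\otimes m});\app_m$), post-composes with $\beta$, and collapses the zig-zag using naturality/interchange together with $(\mathrm{Duality}_1)$, i.e.\ $\balpha\circ(\BB\,\bbeta)=\BB\,\II$. Your $f=(g\otimes A);\beta$ is literally the paper's $f$, so the two arguments differ only in notation.
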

\begin{proof}
Let 
$b=((\BB^m\,\balpha)\circ a)\,\II= (\BB^m\,\balpha)\,(a\,\II)$ and
$f=(b\otimes A^{\otimes{m+1}});\app_m;\beta$.
Then
$Ff=b^\bullet\circ\BB^m\circ\bbeta=(\BB^m\,\balpha)\circ a\circ\bbeta=
(\BB^m\balpha)\circ(\BB^{m+1}\bbeta)\circ a =
(\BB^m\,(\balpha\circ\BB\,\bbeta))\circ a = a.
$
\end{proof}
So just the case of $0\rightarrow0$ remains.
This seemingly harmless case turns out to be surprisingly difficult:

\begin{lemma}
For any $a\colon 0\rightarrow 0$ in $\IA$, 
there exists
$f\colon I\rightarrow I$ in $\bbCA$ such that 
$\bigcirc; a=Ff$ holds, where $\bigcirc=\mathit{Tr}^A(\mathit{id}_A)\colon I\rightarrow I$.
\end{lemma}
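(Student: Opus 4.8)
The plan is to pull $a$ back along $F$ to an endomorphism $g$ of $A$ in $\bbCA$, take the categorical trace of $g$ inside $\bbCA$ to obtain a morphism $I\to I$, and then check that $F$ carries it to $\bigcirc;a$.

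First I would observe that, since $a\colon 0\to 0$, Lemma~\ref{lem:compositionality}(5) also gives $a\colon 1\to 1$; then the previous lemma, in its instance $m\to 1+n$ with $m=1$ and $n=0$, produces a morphism $g\colon A\to A$ in $\bbCA$ with $Fg=a$. Next I would set $f=\mathit{Tr}^A_{I,I}(g)\colon I\to I$, which is a legitimate morphism of $\bbCA$ because $g$ has exactly the shape $I\otimes A\to I\otimes A$ (and $\bbCA$ is a ribbon, hence traced, category). Since $\bbCA$ and $\IA$ are both ribbon categories and $F$ strictly preserves braids, twists and duality, it preserves the unique ribbon trace~\cite{JSV96,Has09}; hence $Ff=\mathit{Tr}^1_{0,0}(Fg)=\mathit{Tr}^1_{0,0}(a)$.

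What then remains is to identify $\mathit{Tr}^1_{0,0}(a)$ with $\bigcirc;a$. The point is that the copy of $a$ appearing here as a morphism $1\to 1$ of $\IA$ is obtained from the original element $a\in\IA(0,0)$ by right-tensoring, which is trivial in an internal PRO ($a+1=a$); so, as a morphism of the monoidal category $\IA$, this $a$ is $a\otimes\id_1$ with $a\colon 0\to 0$. By naturality (left tightening) of the trace, $\mathit{Tr}^1_{0,0}(a\otimes\id_1)=a;\mathit{Tr}^1_{0,0}(\id_1)$; and since $\id_1=F(\id_A)$ and $F$ preserves the trace, $\mathit{Tr}^1_{0,0}(\id_1)=F(\mathit{Tr}^A(\id_A))=\bigcirc$, the trivial loop now regarded in $\IA$. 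As endomorphisms of the tensor unit commute, $a;\bigcirc=\bigcirc;a$, so $Ff=\bigcirc;a$, which is the desired conclusion.

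I expect the main obstacle — indeed the only non-mechanical point — to be this last identification: one has to recognise that the $a$ living in $\IA(1,1)$ is precisely the right-tensor $a\otimes\id_1$ of the element $a\in\IA(0,0)$ with the unit object (a feature of internal PROs that comes from extensionality), since this is exactly what forces the factor $\bigcirc$ to appear and, conversely, is the reason $F$ cannot be expected to be surjective onto $\IA(0,0)$ on the nose. Everything else reduces to the trace axioms together with the fact that $F$ is a structure-preserving (ribbon) functor.
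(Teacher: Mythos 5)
Your strategy is sound in outline and genuinely different from the paper's. The paper argues directly: it sets $f=\mathit{Tr}^A((a\otimes A);\app)$, translates the arity hypothesis $a^\bullet\circ\BB=\BB\,a$ into the identity $(A\otimes a\otimes A);(A\otimes\app);\app=(a\otimes\app);\app$ in $\bbCA$, pre-composes $\lam$ and applies Proposition \ref{prop:trace} to obtain $A\otimes f=((a\otimes A);\app)\otimes\bigcirc$, and finally curries both sides. You instead regard $a$ as a morphism $1\rightarrow 1$, lift it along $F$ to some $g\colon A\rightarrow A$ using the preceding surjectivity lemma, and push the trace through $F$; unfolding that lemma's construction in the case $m=1$, $n=0$ essentially recovers the same witness $f$, so the difference lies entirely in the verification. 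What your route buys is conceptual economy — everything reduces to naturality of the trace — at the price of invoking the fact that $F$ preserves the canonical ribbon trace, which the paper never states; it does hold, but you should say why (the canonical trace is built from $\sigma$, $\theta$, $\alpha$, $\beta$, all of which $F$ strictly preserves).

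The step that does not go through as written is the last one. What you have actually established is $Ff=F(\bigcirc)\circ a$, where $\circ$ is the composition of the internal PRO $\IA$, i.e.\ $\BB\,(F\bigcirc)\,a$, an element of $\IA(0,0)\subseteq\mathbb{C}(I,A)$. The statement asks for $\bigcirc;a$, the composite $I\xrightarrow{\bigcirc}I\xrightarrow{a}A$ in $\mathbb{C}$. These are two different compositions in two different categories, and ``endomorphisms of the unit commute'' does not bridge them: $a$ need not be of the form $F(h)$ for any $h\colon I\rightarrow I$, which is precisely why the lemma only achieves $\bigcirc;a$ rather than $a$. The identification is true but requires a short computation. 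Naturality of the trace in its first argument gives $F(h)=\mathit{Tr}^A_{I,A}(\lam\otimes h)=h;\mathit{Tr}^A_{I,A}(\lam)=h;\II$ for $h\colon I\rightarrow I$; bifunctoriality of $\otimes$ gives $(h;u)\cdot v=h;(u\cdot v)$ for $h\colon I\rightarrow I$; hence
$F(h)\circ u=\BB\,(h;\II)\,u=h;(\BB\,\II\,u)=h;(\II\circ u)=h;u$
by Lemma \ref{lem:assoc}. Taking $h=\bigcirc$ and $u=a$ closes the argument; with this supplement your proof is complete.
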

\begin{proof}
Let $f=\mathit{Tr}^A((a\otimes A);\app)\colon I\rightarrow I$.
Since $a\colon 0\rightarrow 0$, $a^\bullet\circ\BB=\BB\,a$ holds. This implies
$$(A\otimes a\otimes A);(A\otimes\app);\app=(a\otimes\app);\app\colon A\otimes A\rightarrow A$$
\begin{center}
\unitlength=.7pt
\begin{picture}(140,55)
\thicklines
\lineseg{0}{50}{60}{50}
\lineseg{40}{30}{60}{30}
\lineseg{0}{10}{100}{10}
\qbezier(80,40)(90,40)(90,35)\qbezier(90,35)(90,30)(100,30)
\lineseg{120}{20}{140}{20}
\rgbfbox{20}{20}{20}{20}{1}{1}{.5}{\scriptsize$a$}%
\rgbfbox{60}{30}{20}{20}{.8}{.8}{1}{\tiny$\app$}%
\rgbfbox{100}{10}{20}{20}{.8}{.8}{1}{\tiny$\app$}%
\end{picture}
\begin{picture}(40,55)
\put(20,30){\makebox(0,0){$=$}}
\end{picture}
\begin{picture}(100,55)(40,0)
\thicklines
\lineseg{40}{50}{60}{50}
\lineseg{40}{30}{60}{30}
\lineseg{80}{10}{100}{10}
\qbezier(80,40)(90,40)(90,35)\qbezier(90,35)(90,30)(100,30)
\lineseg{120}{20}{140}{20}
\rgbfbox{60}{0}{20}{20}{1}{1}{.5}{\scriptsize$a$}%
\rgbfbox{60}{30}{20}{20}{.8}{.8}{1}{\tiny$\app$}%
\rgbfbox{100}{10}{20}{20}{.8}{.8}{1}{\tiny$\app$}%
\end{picture}
\end{center}
in $\bbCA$. By pre-composing $\lam\colon A\rightarrow A\otimes A$ to both sides of the equation and by applying Proposition \ref{prop:trace}, we have
$$A\otimes(\mathit{Tr}^A((a\otimes A);\app)=((a\otimes A);\app)\otimes(\mathit{Tr}^A(\id_A))\colon A\rightarrow A.$$
Hence $A\otimes f=((a\otimes A);\app)\otimes\bigcirc\colon A\rightarrow A$ holds.
By currying this, we have 
$\mathsf{cur}(A\otimes f)=\mathsf{cur}(((a\otimes A);\app)\otimes\bigcirc)$.
The left-hand side is $Ff$, while the right-hand side is $\bigcirc;a$.
\end{proof}

Therefore the idempotent element $\bigcirc$ is the obstacle in showing fullness,
and we only have the following partial result:

\begin{proposition}\label{prop:fullness}
Suppose $\bigcirc=
\mathit{id}_I$.
Then $F$ is full.
\end{proposition}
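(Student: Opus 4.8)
The plan is to use the batch of ``almost full'' lemmas established just above, together with the hypothesis $\bigcirc = \mathit{id}_I$, to upgrade the surjectivity statements to genuine surjectivity on every hom-set, and then recall why surjectivity of $F$ on morphisms is exactly fullness. Concretely, I want to show that for every $m,n$ the map $F\colon \mathbb{C}|_A(A^{\otimes m},A^{\otimes n})\to\IA(m,n)$ is surjective, given that $\bigcirc=\mathit{id}_I$.

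First I would dispose of the cases already handled. If $n = 1+n'$, then any $a\colon m\to 1+n'$ in $\IA$ lies in the image of $F$ by the lemma proved for codomain $1+n$. Dually, if $m = m'+1$ and $n=0$, any $a\colon m'+1\to 0$ is in the image by the lemma with codomain $0$ and non-zero domain. The only remaining hom-set is $\IA(0,0)$: given $a\colon 0\to 0$, the last lemma supplies $f\colon I\to I$ in $\mathbb{C}|_A$ with $\bigcirc;a = Ff$. Here is where the hypothesis enters: since $\bigcirc=\mathit{id}_I$, we get $\bigcirc;a = \mathit{id}_I;a = a$ (composition is taken in $\IA$, where $\mathit{id}_0=\II$ acts as a two-sided unit by Lemma~\ref{lem:assoc}), so $a = Ff$. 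Hence $F$ is surjective on all hom-sets $\IA(m,n)$.

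It remains to note that surjectivity on hom-sets is fullness: $F\colon \mathbb{C}|_A\to\IA$ is a strict monoidal functor which is the identity on objects (both categories have objects the natural numbers, with $A^{\otimes m}$ corresponding to $m$), so ``$F$ is full'' says precisely that $F$ is surjective on each $\mathbb{C}|_A(A^{\otimes m},A^{\otimes n})\to\IA(m,n)$, which is what the previous paragraph establishes.

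The one subtlety I would be careful about — and the only place a reader might worry — is the identification $\bigcirc;a = a$ in $\IA$: one must check that $\bigcirc\colon I\to I$ as produced by $\mathit{Tr}^A(\mathit{id}_A)$ in $\mathbb{C}$, when fed through $F$ (or rather, since $\bigcirc\colon I\to I$ lives in $\mathbb{C}|_A$ on object $0$), is sent to an element of $\IA(0,0)$ that genuinely acts as $\II$ once we assume $\bigcirc=\mathit{id}_I$; but $F(\mathit{id}_I)=\II$ by Theorem~\ref{thm:alg_on_refl_obj} (as recorded in the discussion of Figure~\ref{fig:BCpmI-alg}), so under the hypothesis $\bigcirc=\mathit{id}_I$ we indeed have $F(\bigcirc)=\II$ and the composite collapses. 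This is essentially bookkeeping rather than a genuine obstacle; the real content was already carried by the four lemmas, the hardest of which was the $0\to0$ case, and the present proposition is just the clean corollary one gets by hypothesising away the idempotent $\bigcirc$.
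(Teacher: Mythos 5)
Your proof is correct and matches the paper's (implicit) argument exactly: the paper presents Proposition~\ref{prop:fullness} as an immediate corollary of the four preceding lemmas, with the hypothesis $\bigcirc=\mathit{id}_I$ serving only to collapse $\bigcirc;a$ to $a$ in the $0\rightarrow 0$ case, which is precisely what you spell out. Your closing remark about where the composite $\bigcirc;a$ lives is a reasonable clarification (the paper intends precomposition with the scalar $\bigcirc$ in $\mathbb{C}$, viewing $a\in\IA(0,0)\subseteq\mathbb{C}(I,A)$), and under the stated hypothesis the two readings agree, so nothing is lost.
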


\subsection{Almost faithful}

For $a\colon m\rightarrow n$, let $\overline{a}\colon A^{\otimes(m+1)}\rightarrow  A^{\otimes(n+1)}$
be
$$
(\lam_n\otimes\theta_{A^{\otimes m}}^2);
(a\otimes\theta^{-1}_A\otimes A^{\otimes(n+ m)});
(\app\otimes \sigma^{-1}_{A^{\otimes m},A^{\otimes n}});
(\mathit{Tr}^A((A\otimes(\sigma^{-1}_{A,A^{\otimes m}};\sigma^{-1}_{A^{\otimes m},A}));(\app_m\otimes A);\sigma_{A,A})\otimes A^{\otimes n})
$$
\begin{center}
\unitlength=.7pt
\begin{picture}(280,107)(-40,0)
\thicklines
\put(-35,105){\makebox(0,0){\tiny$m$}}
\lineseg{-40}{100}{-20}{100}
\lineseg{-40}{50}{-15}{50}
\lineseg{20}{100}{70}{100}
\put(20,65){\makebox(0,0){\tiny$n$}}
\lineseg{15}{60}{70}{60}
\lineseg{15}{40}{30}{40}
\put(-10,115){\makebox(0,0){\tiny$m$}}
\twist{-20}{100}
\put(10,115){\makebox(0,0){\tiny$m$}}
\twist{0}{100}
\twistInv{30}{40}
\lineseg{90}{100}{240}{100}
\qbezier(90,60)(100,60)(105,55)\qbezier(105,55)(110,50)(120,50)
\lineseg{50}{40}{70}{40}
\lineseg{50}{20}{70}{20}
\lineseg{220}{40}{240}{40}
\braidInv{70}{60}{5}{20}
\put(100,105){\makebox(0,0){\tiny$n$}}
\put(235,105){\makebox(0,0){\tiny$n$}}
\put(100,65){\makebox(0,0){\tiny$m$}}
\lineseg{90}{30}{170}{30}
\braidInv{120}{50}{5}{10}
\braidInv{140}{50}{5}{10}
\lineseg{160}{70}{200}{70}
\put(165,55){\makebox(0,0){\tiny$m$}}
\lineseg{160}{50}{170}{50}
\braid{200}{40}{5}{15}
\qbezier(120,70)(112,70)(112,77.5)\qbezier(112,77.5)(112,85)(120,85)
\lineseg{120}{85}{220}{85}
\qbezier(220,85)(228,85)(228,77.5)\qbezier(228,77.5)(228,70)(220,70)
\rgbfbox{-15}{40}{30}{20}{1}{.5}{.5}{\tiny$\lam_n$}%
\rgbfbox{30}{10}{20}{20}{.8}{1}{.8}{\scriptsize$a$}%
\rgbfbox{70}{20}{20}{20}{.8}{.8}{1}{\tiny$\app$}%
\rgbfbox{170}{30}{30}{20}{.8}{.8}{1}{\tiny$\app_m$}%
\end{picture}
\end{center}
We show that $\overline{(\_)}$ serves as an "almost" inverse to $F$:
\begin{proposition} \label{prop:overlineF}
For any $f\colon A^{\otimes m}\rightarrow A^{\otimes n}$, $\overline{F(f)}=A\otimes f$ holds.
\end{proposition}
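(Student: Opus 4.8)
The plan is to unfold the definition of $F$ inside $\overline{F(f)}$ and then collapse the resulting morphism of $\bbCA$ down to $A\otimes f$ by repeated use of the $\beta$-equality of Lemma~\ref{lem:beta-eta}, the trace identities of Proposition~\ref{prop:trace} and Corollary~\ref{cor:trace}, and the standard axioms of traced balanced monoidal categories — dinaturality/sliding, the two vanishing axioms, superposing, and yanking \cite{JSV96,Has09} — together with the naturality of the braid $\sigma$ and the balance equation $\theta_{X\otimes Y}=\sigma_{X,Y};(\theta_Y\otimes\theta_X);\sigma_{Y,X}$. Throughout I would work in string diagrams, where the trace manipulations and (below) the twist cancellations are far more transparent than in composite notation.

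First I would substitute $F(f)=\mathit{Tr}^A\big(\lam_m;(\lam\otimes f);(A\otimes\app_n)\big)\colon I\to A$ into the formula defining $\overline{(\_)}$. The idea is that the gadget surrounding $a$ in $\overline{a}$ — the outer $\app$, the braid crossings of $\overline{a}$, the copy of $\lam_n$, and the auxiliary block $\mathit{Tr}^A\big((A\otimes(\sigma^{-1}_{A,A^{\otimes m}};\sigma^{-1}_{A^{\otimes m},A}));(\app_m\otimes A);\sigma_{A,A}\big)$ — is tailored precisely to undo the internal structure of $F(f)$. Concretely: the outermost $\lam$ coming from $F(f)$ meets the outer $\app$ of $\overline{a}$ through a crossing, which by the $\beta$-equality of Lemma~\ref{lem:beta-eta} collapses to a single $\theta_A^{-1}$, cancelling exactly the copy of $\theta_A^{-1}$ that appears in the middle factor of $\overline{a}$; dually, the $\lam_n$ in the first factor of $\overline{a}$ together with the $\app_n$ inside $F(f)$ collapse, by Corollary~\ref{cor:trace}, into a single copy of $f$ plus a residual left trace over $A^{\otimes m}$ of an identity-with-crossings, and that residual trace is annihilated by the auxiliary $\mathit{Tr}^A(\dots)$ block of $\overline{a}$ via superposing and the $\eta$-equality/yanking. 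After these reductions the inner $\mathit{Tr}^A$ coming from $F$ and the auxiliary traces are merged (vanishing~II) and disposed of, leaving $f$ on the $A^{\otimes m}\to A^{\otimes n}$ part with an untouched $A$ wire on the side, i.e.\ $A\otimes f$.

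The \textbf{main obstacle} is the bookkeeping of twists. Both $\lam$ (whose definition already contains $\theta^{-1}$) and the $\beta$-equality emit a $\theta^{\pm1}$ whenever a $\lam$ is cancelled against an $\app$ through a crossing, and the formula for $\overline{a}$ is deliberately decorated with $\theta_{A^{\otimes m}}^2$ and $\theta_A^{-1}$ to absorb exactly these; showing that, once all the $\lam$/$\app$ collapses have been carried out, the accumulated twists slide through the remaining braids $\sigma^{-1}_{A,A^{\otimes m}}$, $\sigma^{-1}_{A^{\otimes m},A}$, $\sigma^{-1}_{A^{\otimes m},A^{\otimes n}}$, $\sigma_{A,A}$ and cancel to the identity is the delicate part, and is where the balance equation, the hexagon, and the ribbon identity $(\theta_X)^*=\theta_{X^*}$ are used. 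An induction on $m$, using that $\mathit{Tr}^{A^{\otimes m}}$ factors as iterated $\mathit{Tr}^A$ and that $\lam_m,\app_m$ are built up one $A$ at a time, can be used to organise this twist cancellation; the base cases $m=0$ and $n=0$ can be checked directly from Lemma~\ref{lem:beta-eta} and Proposition~\ref{prop:trace}.
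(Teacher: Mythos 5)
Your proposal is correct and follows essentially the same route as the paper: unfold $F(f)=\mathit{Tr}^A(\lam_m;(\lam\otimes f);(A\otimes\app_n))$ inside $\overline{(\_)}$, collapse the $\lam$/$\app$ pairs against the surrounding gadget via the $\beta$-equality of Lemma~\ref{lem:beta-eta} in the form of Proposition~\ref{prop:trace} and Corollary~\ref{cor:trace} (the paper organises this by first computing $(Ff\otimes\theta_A^{-1});\app$ as a trace expression), and then dispose of the residual braids and twists by the coherence of traced balanced categories. The twist bookkeeping you flag as the delicate part is exactly what the paper compresses into ``some simplification,'' and your suggested induction on $m$ is a reasonable way to carry it out.
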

\begin{proof}
From Proposition \ref{prop:trace}, we can derive
$$
(Ff\otimes\theta_A^{-1});\app=
\mathit{Tr}^A((A\otimes\lam_m);(\sigma^{-1}_{A,A});(A\otimes\app_n))
\colon A\rightarrow A
$$
\begin{center}
\unitlength=.7pt
\begin{picture}(100,40)(10,10)
\thicklines
\lineseg{10}{40}{30}{40}
\qbezier(30,40)(37,40)(38,41)\qbezier(42,44)(47,50)(40,50)
\qbezier(40,50)(33,50)(40,42)\qbezier(40,42)(43,40)(50,40)
\lineseg{50}{40}{70}{40}
\lineseg{50}{20}{70}{20}
\lineseg{90}{30}{110}{30}
\rgbfbox{30}{10}{20}{20}{.8}{1}{.8}{\scriptsize$Ff$}%
\rgbfbox{70}{20}{20}{20}{.8}{.8}{1}{\tiny$\app$}%
\end{picture}
\begin{picture}(20,40)
\put(10,20){\makebox(0,0){$=$}}
\end{picture}
\begin{picture}(180,60)(0,0)
\thicklines
\qbezier(30,30)(15,30)(15,45)\qbezier(15,45)(15,60)(30,60)
\lineseg{30}{60}{150}{60}
\qbezier(150,60)(165,60)(165,45)\qbezier(165,45)(165,30)(150,30)
\lineseg{0}{0}{80}{0}
\lineseg{100}{0}{180}{0}
\put(70,45){\makebox(0,0){\tiny$m$}}
\put(110,45){\makebox(0,0){\tiny$n$}}
\lineseg{60}{40}{120}{40}
\lineseg{60}{20}{80}{20}
\lineseg{100}{20}{120}{20}
\braidInv{80}{0}{5}{10}
\rgbfbox{30}{20}{30}{20}{1}{.5}{.5}{\tiny$\lam_m$}%
\rgbfbox{80}{30}{20}{20}{.5}{1}{.5}{\scriptsize$f$}%
\rgbfbox{120}{20}{30}{20}{.8}{.8}{1}{\tiny$\app_n$}%
\end{picture}
\end{center}
Using Corollary \ref{cor:trace} we have
\begin{center}
\unitlength=.7pt
\begin{picture}(260,60)(-180,0)
\thicklines
\lineseg{-180}{50}{-150}{50}
\put(-110,65){\makebox(0,0){\tiny$n$}}
\lineseg{-120}{60}{80}{60}
\lineseg{15}{40}{30}{40}
\qbezier(-120,40)(-110,40)(-95,20)\qbezier(-95,20)(-80,0)(-70,0)
\qbezier(-70,30)(-85,30)(-85,42)\qbezier(-85,42)(-85,54)(-70,54)
\lineseg{-70}{54}{50}{54}
\qbezier(50,54)(65,54)(65,42)\qbezier(65,42)(65,30)(50,30)
\lineseg{-70}{0}{-20}{0}
\lineseg{0}{0}{80}{0}
\put(-30,45){\makebox(0,0){\tiny$m$}}
\put(10,45){\makebox(0,0){\tiny$n$}}
\lineseg{-40}{40}{40}{40}
\lineseg{-40}{20}{-20}{20}
\lineseg{0}{20}{20}{20}
\qbezier(-20,0)(-15,0)(-10,10)\qbezier(-10,10)(-5,20)(0,20)
\put(-10,10){\makebox(0,0){\color[rgb]{1,1,1}$\bullet$}}
\qbezier(-20,20)(-15,20)(-10,10)\qbezier(-10,10)(-5,0)(0,0)
\rgbfbox{-70}{20}{30}{20}{1}{.5}{.5}{\tiny$\lam_m$}%
\rgbfbox{-20}{27}{20}{20}{.5}{1}{.5}{\scriptsize$f$}%
\rgbfbox{20}{20}{30}{20}{.8}{.8}{1}{\tiny$\app_n$}%
\rgbfbox{-150}{40}{30}{20}{1}{.5}{.5}{\tiny$\lam_n$}%
\end{picture}
\begin{picture}(20,55)
\put(10,30){\makebox(0,0){$=$}}
\end{picture}
\begin{picture}(180,55)
\thicklines
\qbezier(20,30)(5,30)(5,45)\qbezier(5,45)(5,60)(20,60)
\lineseg{20}{60}{90}{60}
\qbezier(90,60)(100,60)(100,50)\qbezier(100,50)(100,40)(90,40)
\lineseg{0}{0}{50}{0}
\qbezier(50,0)(55,0)(60,10)\qbezier(60,10)(65,20)(70,20)
\put(60,10){\makebox(0,0){\color[rgb]{1,1,1}$\bullet$}}
\qbezier(50,20)(55,20)(60,10)\qbezier(60,10)(65,0)(70,0)
\lineseg{50}{40}{70}{40}
\qbezier(70,40)(75,40)(80,30)\qbezier(80,30)(85,20)(90,20)
\put(80,30){\makebox(0,0){\color[rgb]{1,1,1}$\bullet$}}
\qbezier(70,20)(75,20)(80,30)\qbezier(80,30)(85,40)(90,40)
\lineseg{70}{0}{180}{0}
\lineseg{90}{20}{110}{20}
\lineseg{130}{20}{150}{20}
\lineseg{170}{20}{180}{20}
\put(160,35){\makebox(0,0){\tiny$n$}}
\qbezier(150,20)(157,20)(158,21)\qbezier(162,24)(167,30)(160,30)
\qbezier(160,30)(153,30)(160,22)\qbezier(160,22)(163,20)(170,20)
\rgbfbox{20}{20}{30}{20}{1}{.5}{.5}{\tiny$\lam_m$}%
\rgbfbox{110}{10}{20}{20}{.5}{1}{.5}{\scriptsize$f$}%
\end{picture}
\end{center}
Hence $\overline{F(f)}$ is equal to
\begin{center}
\unitlength=.7pt
\begin{picture}(300,80)
\thicklines
\put(5,85){\makebox(0,0){\tiny$m$}}
\put(295,85){\makebox(0,0){\tiny$n$}}
\qbezier(20,40)(10,40)(10,50)\qbezier(10,50)(10,60)(20,60)
\lineseg{20}{60}{110}{60}
\qbezier(110,60)(120,60)(120,50)\qbezier(120,50)(120,40)(110,40)
\qbezier(190,40)(180,40)(180,50)\qbezier(180,50)(180,60)(190,60)
\lineseg{190}{60}{280}{60}
\qbezier(280,60)(290,60)(290,50)\qbezier(290,50)(290,40)(280,40)
\twist{40}{80}
\twist{60}{80}
\twistInv{250}{80}
\braidInv{20}{10}{5}{15}
\braid{70}{20}{5}{10}
\braid{90}{20}{5}{10}
\braidInv{130}{20}{10}{30}
\braidInv{190}{20}{5}{10}
\braidInv{210}{20}{5}{10}
\braid{260}{10}{5}{15}
\lineseg{0}{80}{40}{80}
\lineseg{80}{80}{130}{80}
\lineseg{0}{10}{20}{10}
\lineseg{40}{40}{70}{40}
\lineseg{70}{0}{230}{0}
\lineseg{110}{20}{130}{20}
\lineseg{170}{80}{230}{80}
\lineseg{270}{80}{300}{80}
\lineseg{170}{20}{190}{20}
\lineseg{230}{40}{260}{40}
\lineseg{280}{10}{300}{10}
\rgbfbox{40}{0}{30}{20}{1}{.5}{.5}{\tiny$\lam_m$}%
\rgbfbox{230}{0}{30}{20}{.8}{.8}{1}{\tiny$\app_m$}%
\rgbfbox{230}{70}{20}{20}{.5}{1}{.5}{\scriptsize$f$}%
\end{picture}
\end{center}
By applying Corollary \ref{cor:trace} again and by some simplification,
this turns out to be equal to $A\otimes f$.
\end{proof}

\begin{corollary}
The map $\mathbb{C}|_A(A^{\otimes m},A^{\otimes n})\rightarrow\IA(m,n)$ on morphisms is bijective whenever $m\not=0$ or $n\not=0$. 
\end{corollary}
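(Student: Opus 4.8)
The plan is to split the statement into surjectivity and injectivity of $F$ on the hom-set $\bbCA(A^{\otimes m},A^{\otimes n})$, and to treat the range $m+n\geq 1$ uniformly. Surjectivity is already in hand: when $n\geq 1$ the lemma that produces, for every $a\colon m\to 1+n'$ in $\IA$, a morphism $f\colon A^{\otimes m}\to A^{\otimes(1+n')}$ in $\bbCA$ with $Ff=a$ applies verbatim; when $n=0$ and $m\geq 1$ the companion lemma for arrows $a\colon m'+1\to 0$ does the job. So all the content is in injectivity.

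For injectivity, suppose $Ff=Fg$ with $f,g\colon A^{\otimes m}\to A^{\otimes n}$ and $m+n\geq 1$. Proposition~\ref{prop:overlineF} immediately gives $A\otimes f=\overline{Ff}=\overline{Fg}=A\otimes g$, so it is enough to show that $A\otimes(\_)$ is injective on $\bbCA(A^{\otimes m},A^{\otimes n})$ whenever $m+n\geq 1$. This is where I expect the real difficulty to lie: one cannot simply cancel the leading $A$. Every obvious attempt --- capping it off against a second copy via $\balpha,\bbeta$, or merging it into an adjacent strand with $\lam$ and $\app$ (Proposition~\ref{prop:trace}, Corollary~\ref{cor:trace}) --- produces the idempotent trivial loop $\bigcirc$ as an unwanted scalar, and this is exactly the phenomenon that makes $F$ fail to be injective when $m=n=0$ (Example~\ref{ex:pro}). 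So the argument has to genuinely use $m+n\geq 1$.

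The way I would get around this is to transport the entire hom-set rather than cancel anything. Since $\bbCA$ is a ribbon category in which $A$ is reflexive and (being $\cong A^{*}$) self-dual, one has $A\cong A\otimes A$, hence a chosen isomorphism $A^{\otimes k}\cong A$ for every $k\geq 1$. Composing: first use this isomorphism to replace the codomain $A^{\otimes n}$ (for $n\geq 1$) by $A$; then peel the domain off one factor at a time by currying, ending at $\bbCA(I,A)=\calA$; and when the codomain is $I$ (so $m\geq 1$) first transpose $\bbCA(A^{\otimes m},I)\cong\bbCA(I,A^{\otimes m})$ along the duality. Each of these steps commutes with $F$: $F$ preserves identities and isomorphisms, $F(\app)=\BB=\mathsf{app}_{\IA}$ together with uniqueness of currying forces $F(\mathsf{cur}(h))=\mathsf{cur}_{\IA}(Fh)$, and $F(\alpha)=\balpha$, $F(\beta)=\bbeta$ form a duality in $\IA$ so duality transposes are respected. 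Running the mirror-image reductions inside $\IA$ identifies $\IA(m,n)\cong\calA^{\bullet}$ for $m+n\geq 1$, and under both identifications $F$ becomes the map $a\mapsto a^{\bullet}\colon\calA\to\calA^{\bullet}$, which is a bijection by the proposition establishing $\IA(0,1)=\calA^{\bullet}\cong\calA$. Hence $F$ is bijective on $\bbCA(A^{\otimes m},A^{\otimes n})$ whenever $m\neq 0$ or $n\neq 0$; in particular $A\otimes(\_)$ is injective on these hom-sets, which is all that remained. (This transport argument in fact reproves the surjectivity quoted above, so one could alternatively present the whole corollary in this single step.)
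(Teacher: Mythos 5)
Your proposal is correct, and for the injectivity half it takes a genuinely different route from the paper. The paper, after obtaining $A\otimes f=\overline{F(f)}=\overline{F(g)}=A\otimes g$ from Proposition~\ref{prop:overlineF} just as you do, \emph{does} cancel the leading $A$ directly: for $m\neq 0$ it writes
$f=(\theta_A^{-1}\otimes A^{\otimes(m-1)});\mathit{Tr}_L\bigl((\sigma_{A,A}\otimes A^{\otimes(m-1)});(A\otimes f)\bigr)$,
i.e.\ it braids the strand to be traced past one genuine strand of the domain before closing it, so that yanking turns the closed loop into a twist $\theta_A$ on that strand rather than the scalar $\bigcirc$. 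This is precisely the one ``obvious attempt'' that does not produce the trivial loop, contrary to your expectation that all such cancellations fail; it is short, local, and visibly uses $m\neq 0$ (with a mirror argument for $n\neq 0$). Your alternative --- transporting the whole hom-set down to $\calA\rightarrow\calA^{\bullet}$ through the isomorphism $A\cong A\otimes A$, repeated currying, and the duality transpose, each of which commutes with $F$ because $F$ is strict monoidal, sends $\app$ to $\BB$, and sends $(\alpha,\beta)$ to the duality $(\balpha,\bbeta)$ --- is also sound: every vertical map in the resulting square is a bijection and the bottom edge is $a\mapsto a^{\bullet}$, which the paper has already shown bijective. What your version buys is bijectivity (hence also surjectivity) in a single structural step and a transparent explanation of why $(0,0)$ is the exceptional case (no reduction is available there); what it costs is the bookkeeping of verifying that each of the three kinds of reduction commutes with $F$, whereas the paper's cancellation is a one-line ribbon-category computation. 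Both arguments are valid; if you write yours up, do spell out the commutation of the currying square, $F(\mathsf{cur}(h))=\mathsf{cur}_{\IA}(F(h))$ via uniqueness of currying in $\IA$, since that is the step a reader is most likely to question.
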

\begin{proof}
Consider $f,g\colon A^{\otimes m}\rightarrow A^{\otimes n}$ with $m\not=0$ and assume $F(f)=F(g)$. Then  
$A\otimes f=\overline{F(f)}=\overline{F(g)}=A\otimes g$, and
$f=(\theta_A^{-1}\otimes A^{m-1});\mathit{Tr}_L((\sigma_{A,A}\otimes A^{m-1});(A\otimes f))=
(\theta_A^{-1}\otimes A^{m-1});\mathit{Tr}_L((\sigma_{A,A}\otimes A^{m-1});(A\otimes g))=g$.
The case of $n\not=0$ is similar.    
\end{proof}
So again the case of $0\rightarrow0$ is the obstacle. Proposition \ref{prop:overlineF} implies
\begin{proposition} \label{prop:faithfulness}
If the functor $A\otimes(\_)\colon \mathbb{C}|_A\rightarrow\mathbb{C}|_A$ is faithful,  $F$ is faithful.
\end{proposition}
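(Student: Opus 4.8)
The plan is to read off faithfulness of $F$ as an immediate consequence of Proposition~\ref{prop:overlineF}. That proposition equips us with an operation $\overline{(\_)}$ sending an arrow $h\colon m\to n$ of $\IA$ to an arrow $\overline{h}\colon A^{\otimes(m+1)}\to A^{\otimes(n+1)}$ of $\mathbb{C}|_A$, built uniformly from $F$ applied to the generators together with $\BB$, $\CC^\pm$, $\btheta^\pm$, $\app$, $\lam$ and the ambient trace; crucially it is a genuine function on hom-sets, so it preserves equality of arrows, and it satisfies $\overline{F(f)}=A\otimes f$ for every $f\colon A^{\otimes m}\to A^{\otimes n}$ in $\mathbb{C}|_A$.

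First I would fix $m,n$ and take $f,g\colon A^{\otimes m}\to A^{\otimes n}$ in $\mathbb{C}|_A$ with $F(f)=F(g)$ in $\IA(m,n)$. Applying $\overline{(\_)}$ to both sides and invoking Proposition~\ref{prop:overlineF} twice yields
$$ A\otimes f \;=\; \overline{F(f)} \;=\; \overline{F(g)} \;=\; A\otimes g . $$
By the hypothesis that $A\otimes(\_)\colon\mathbb{C}|_A\to\mathbb{C}|_A$ is faithful, it follows that $f=g$. Since $m$ and $n$ were arbitrary, $F$ is faithful.

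I do not expect any real obstacle at this stage: all the substance sits in Proposition~\ref{prop:overlineF}, whose proof is the delicate string-diagram computation (in particular the two applications of Corollary~\ref{cor:trace} and the final simplification). The only point worth checking is that $\overline{(\_)}$ genuinely depends only on the arrow of $\IA$ and not on some chosen expression for it, but this is clear from the definition since every constituent is canonical. For contrast, the corollary just above handles the cases $m\neq0$ or $n\neq0$ unconditionally, by cancelling the extra $A\otimes(\_)$ via the left trace; the present proposition is what is needed to also cover $m=n=0$, at the price of the faithfulness assumption on $A\otimes(\_)$.
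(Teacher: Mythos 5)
Your proposal is correct and is exactly the argument the paper intends: the paper states Proposition~\ref{prop:faithfulness} as an immediate consequence of Proposition~\ref{prop:overlineF}, and your chain $A\otimes f=\overline{F(f)}=\overline{F(g)}=A\otimes g$ followed by faithfulness of $A\otimes(\_)$ is precisely that implication spelled out. Your side remark on well-definedness of $\overline{(\_)}$ is also right for the right reason, since it is defined directly on elements of $\IA(m,n)$ rather than on syntactic expressions.
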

When $\bigcirc=\mathit{Tr}^A(\mathit{id}_A)=\mathit{id}_I$,
$A\otimes(-)$ is faithful, as $A\otimes f=A\otimes g$ implies
$
f=\bigcirc\otimes f=\mathit{Tr}_L(A\otimes f)=\mathit{Tr}_L(A\otimes g)=\bigcirc\otimes g=g.
$
Hence we have, by Proposition \ref{prop:fullness}
and \ref{prop:faithfulness},

\begin{theorem} \label{thm:equivalence}
Suppose $\bigcirc=\mathit{id}_I$.
Then $F$ is full and faithful, and gives an equivalence between 
$\mathbb{C}|_A$ and $\IA$.
\end{theorem}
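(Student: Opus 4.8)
The plan is to assemble the theorem from the two preceding propositions together with the remark sandwiched between them; no new ideas are needed. There are three things to check: that $F$ is faithful, that $F$ is full, and that $F$ is essentially surjective on objects. A functor with all three properties is an equivalence of categories, so the conclusion follows.

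First I would dispatch faithfulness. The hypothesis $\bigcirc=\mathit{id}_I$ makes the functor $A\otimes(\_)\colon\bbCA\to\bbCA$ faithful: if $A\otimes f=A\otimes g$, then since $\bbCA$ is ribbon, hence traced, and $\bigcirc\otimes f=\mathit{Tr}_L(A\otimes f)$, one gets $f=\bigcirc\otimes f=\mathit{Tr}_L(A\otimes f)=\mathit{Tr}_L(A\otimes g)=\bigcirc\otimes g=g$. Via Proposition~\ref{prop:overlineF} (which gives $\overline{F(f)}=A\otimes f$, so that $F(f)=F(g)$ forces $A\otimes f=A\otimes g$), this is exactly the hypothesis of Proposition~\ref{prop:faithfulness}, which then yields faithfulness of $F$. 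Fullness is then immediate from Proposition~\ref{prop:fullness} under the assumption $\bigcirc=\mathit{id}_I$; recall this is where the real work of the section lies, the ``almost full'' lemmas handling every $(m,n)$ with $m\neq 0$ or $n\neq 0$, and the leftover $0\to0$ case being precisely the one that needs $\bigcirc$ trivial.

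Finally, for essential surjectivity I would note that $\IA$ is a PRO, so every object is a natural number $n=\underbrace{1+\cdots+1}_{n}$, and since $F$ is a strict monoidal functor with $F(A)=1$ we have $F(A^{\otimes n})=n$ on the nose; hence $F$ is (strictly) surjective on objects. A fully faithful functor that is surjective on objects is an equivalence, which finishes the argument.

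I do not anticipate any genuine obstacle: the theorem is a bookkeeping corollary of the ``almost full'' and ``almost faithful'' results established earlier in this section, and the only point worth stating with care is that both the residual $0\to0$ case of fullness and the faithfulness of $A\otimes(\_)$ collapse exactly when $\bigcirc=\mathit{id}_I$, which is why the hypothesis is imposed.
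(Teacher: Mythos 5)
Your proposal is correct and follows essentially the same route as the paper: establish that $\bigcirc=\mathit{id}_I$ forces $A\otimes(\_)$ to be faithful via the chain $f=\bigcirc\otimes f=\mathit{Tr}_L(A\otimes f)=\mathit{Tr}_L(A\otimes g)=\bigcirc\otimes g=g$, then invoke Propositions \ref{prop:fullness} and \ref{prop:faithfulness}. The only addition is your explicit remark on surjectivity on objects, which the paper leaves implicit.
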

We do not know if $\bigcirc=\mathit{id}_I$ is a necessary condition for the equivalence.

\section{Related Work}\label{sec:related}

\subsection{Curry's Combinatory Logic}
In 1931, Curry \cite{Cur31}
introduced the notion of {\em order} and {\em degree} for {\em regular combinators}, which are combinators corresponding to closed lambda terms of the form
$\lambda fx_1\dots x_m.f\,M_1\,\dots\,M_n$
with no lambda abstraction nor $f$ in $M_i$'s.
In his terminology, such a combinator is {\em of order $m$ and degree $n$};
it is of arity $m\rightarrow n$ in our context.
In \cite{Cur31}, Curry has shown that a regular combinator $U$ is of order $m$ and degree $n$ if and only if it satisfies
$$\CC\,(\BB^{m+1})\,U=(\BB\,U)\circ\BB^n$$
which agrees with the definition of the arity $m\rightarrow n$
(Definition \ref{def:arity}) 
$$U^\bullet\circ\BB^{m+1}=(\BB\,U)\circ\BB^n$$
and also the characterisation 
$$(\CC\,\BB\,a)\circ\BB^m=(\BB\,a)\circ\BB^n$$
in Section \ref{subsec:braidedCA}, 
because $\CC\,a\,b=b^\bullet\circ a$ holds in extensional $\BCI$-algebras.
Curry also has shown items 1 to 5 of Lemma \ref{lem:compositionality} for
regular combinators, thus he was aware of the compositional nature of the 
arity (order and degree). Moreover, he has given an axiomatisation of combinatory logic \cite{Cur30a},
reproduced in Figure \ref{fig:curry} (except the standard axioms $\BB\,a\,b\,c=a\,(b\,c)$, $\CC\,a\,b\,c=a\,c\,b$, $\WW\,a\,b=a\,b\,b$ and $\KK\,a\,b=a$;
$\II$ was $\WW\,\KK$ in \cite{Cur30a}),
which is very close to Hasegawa's axioms 
of extensional $\mathbf{BCIWK}$-algebras in \cite{Has22}.
The commutative axioms state the arity  $\BB\colon 2\rightarrow1$, $\CC\colon 2\rightarrow2$, $\WW\colon 1\rightarrow2$, $\KK\colon 1\rightarrow0$ and
$\II\colon 0\rightarrow0$. Axioms $\mathit(CC)_{1,2}$ are the Coxter relations of $\CC$, and $\mathit(BC)$,$\mathit(CW)$ and $\mathit{(CK)}$
are the naturality of $\CC$ with respect to $\BB$, $\WW$ and $\KK$. 
$\mathit(WC)$, $\mathit(WW)$ and $\mathit(WK)$ say $(1,\WW,\KK)$ forms a commutative co-monoid, and 
$\BB$ is a co-monoid homomorphism by $\mathit(BW)$ and $\mathit(BK)$.
\begin{figure}
$$
\begin{array}{rcll}
\multicolumn{4}{l}{Commutative~Axioms~(Kommutative~Axiome)}\\
\CC\,(\BB^3)\,\BB &=& (\BB\,\BB)\circ\BB & (\mathit{Ax.~B.})\\
\CC\,(\BB^3)\,\CC &=& (\BB\,\CC)\circ\BB^2 & (\mathit{Ax.~C.})\\
\CC\,(\BB^2)\,\WW &=& (\BB\,\WW)\circ\BB^2 & (\mathit{Ax.~W.})\\
\CC\,(\BB^2)\,\KK &=& (\BB\,\KK)\circ\II & (\mathit{Ax.~K.})\\
\CC\,\BB\,\II &=& (\BB\,\II)\circ\II & (\mathit{Ax.~I.})\\
\multicolumn{4}{l}{Transmutative~Axioms~(Transmutative~Axiome)}\\
\BB\circ\CC &=& (\BB\,\CC)\circ\CC\circ(\BB\,\BB) & (\mathit{Ax.~(BC).})\\
\BB\circ\WW &=& 
\multicolumn{2}{l}{(\BB\,\WW)\circ\WW\circ(\BB\,\CC)\circ(\BB^2\,\BB)\circ\BB}
\\
&&& (\mathit{Ax.~(BW).})\\
\BB\circ\KK &=& \KK\circ\KK & (\mathit{Ax.~(BK).})\\
\CC\circ\CC &=& \BB^2\,\II & (\mathit{Ax.~(CC)_1.})\\
(\BB\,\CC)\circ\CC\circ(\BB\,\CC) &=& \CC\circ(\BB\,\CC)\circ\CC & (\mathit{Ax.~(CC)_2.})\\
\CC\circ\WW &=& (\BB\,\WW)\circ\CC\circ(\BB\,\CC) & (\mathit{Ax.~(CW).})\\
\CC\circ\KK &=& \BB\,\KK & (\mathit{Ax.~(CK).})\\
\WW\circ\CC &=& \WW & (\mathit{Ax.~(WC).})\\
\WW\circ\WW &=& \WW\circ(\BB\,\WW) & (\mathit{Ax.~(WW).})\\
\WW\circ\KK &=& \BB\,\II & (\mathit{Ax.~(WK).})\\
\multicolumn{4}{l}{Other~Axiom~(Andere~Axiome)}\\
\BB\,\II &=& \II & (\mathit{Ax.~I_2.})
\end{array}
$$
\Description[Curry's axioms of combinatory logic]{Curry's axioms of combinatory logic}
\caption{Curry's axioms of combinatory logic}
\label{fig:curry}
\end{figure}
For instance, $\mathit(BW)$ can be depicted as follows (with $\WW=$~
{\unitlength=.4pt
\begin{picture}(30,20)(0,5)
\thicklines
\lineseg{0}{10}{10}{10}
\arcLR{30}{0}{-20}{10} 
\put(10,10){\makebox(0,0){$\bullet$}}
\end{picture}}~).
\begin{center}
\unitlength=.7pt
\begin{picture}(60,73)(0,2)
\thicklines
\arcLR{0}{20}{20}{10}    
\arcLR{60}{20}{-20}{10}
\lineseg{20}{30}{40}{30}
\put(20,30){\makebox(0,0){$\appnode$}}
\put(40,30){\makebox(0,0){$\bullet$}}
\put(30,70){\makebox(0,0){\small$\BB\circ\WW$}}
\end{picture}
\begin{picture}(50,60)(0,2)
\put(25,30){\makebox(0,0){$=$}}
\end{picture}
\begin{picture}(140,60)(0,2)
\thicklines
\lineseg{0}{50}{20}{50}
\arcLR{40}{40}{-20}{10}
\lineseg{0}{10}{40}{10}
\lineseg{40}{60}{80}{60}
\lineseg{40}{40}{60}{40}
\symm{60}{20}{5}{10}
\arcLR{60}{0}{-20}{10}
\lineseg{60}{0}{100}{0}
\lineseg{80}{20}{100}{20}
\arcLR{80}{40}{20}{10}
\lineseg{100}{50}{140}{50}
\arcLR{100}{0}{20}{10}
\lineseg{120}{10}{140}{10}
\put(100,50){\makebox(0,0){$\appnode$}}
\put(120,10){\makebox(0,0){$\appnode$}}
\put(20,50){\makebox(0,0){$\bullet$}}
\put(40,10){\makebox(0,0){$\bullet$}}
\put(70,70){\makebox(0,0){\small$(\BB\,\WW)\circ\WW\circ(\BB\,\CC)\circ(\BB^2\,\BB)\circ\BB$}}
\end{picture}
\end{center}
In a sense, Curry identified the PROP of regular combinators ---
long before monoidal categories were invented. 
It seems  
that, throughout his early work on combinatory logic \cite{Cur30a,Cur30b,Cur30c,Cur31}, Curry had certain geometric intuition, 
which we believe was not far from the string diagrams of internal PROs.
However, non-regular combinators such as $a^\bullet=\CC\,\II\,a=\lambda f.f\,a\colon 0\rightarrow1$ were not included in Curry's 
study, and this is the crucial difference from Hasegawa's approach. 

Unfortunately, Curry never used these notions of order and degree in his later work
(except the book \cite{CF58} which contains slightly different definitions of order and degree used for unrelated purposes), 
and we are not aware of any subsequent work developing Curry's early ideas further.

\subsection{Semantics of lambda calculi and operads}

Our work is closely related to Hyland's approach using semi-closed operads 
as semantic models of the $\lambda_\beta$-calculus \cite{Hyl17},
though its emphasis was more on algebraic theory rather than geometric reading.
Hasegawa's approach using closed operads \cite{Has22} is largely the same as ours using reflexive PROs;
the only difference is that \cite{Has22} focuses on the internal operad $\IA(\_,1)$ rather than whole of the internal PRO $\IA$.
We looked at PROs more seriously because they are more suitable for studying trace and duality.
Although we considered only the extensional case in this paper, Hasegawa \cite{Has23} recently generalised his approach to the non-extensional case,
which can be applied to ribbon combinatory algebras.

\subsection{Knot theory and semantics}

Classical results on braids \cite{Art25,KT08} and quantum topology \cite{Tur94,Yet01} are of great importance in our approach.
Also, studies on knotted graphs \cite{Kau89,Yet89} are relevant to our study. For instance, as noted in \cite{Has22},
the Reidemeister Move IV for trivalent graphs
\cite{Kau89} 
\begin{center}
\unitlength=.7pt
\begin{picture}(50,50)\thicklines
\lineseg{0}{40}{20}{25}
\lineseg{0}{10}{20}{25}
\lineseg{20}{25}{50}{25}
\put(33,25){\makebox(0,0){\color[rgb]{1,1,1}\large$\bullet$}}
\qbezier(25,0)(40,25)(25,50)
\end{picture}
\begin{picture}(30,50)
\put(15,25){\makebox(0,0){$\leftrightarrow$}}    
\end{picture}
\begin{picture}(50,50)\thicklines
\lineseg{0}{40}{30}{25}
\lineseg{0}{10}{30}{25}
\lineseg{30}{25}{50}{25}
\put(18,30){\makebox(0,0){\color[rgb]{1,1,1}\large$\bullet$}}
\put(18,20){\makebox(0,0){\color[rgb]{1,1,1}\large$\bullet$}}
\qbezier(25,0)(10,25)(25,50)
\end{picture}
\begin{picture}(30,50)
\end{picture}
\begin{picture}(50,50)\thicklines
\qbezier(25,0)(40,25)(25,50)
\put(33,25){\makebox(0,0){\color[rgb]{1,1,1}\large$\bullet$}}
\lineseg{0}{40}{20}{25}
\lineseg{0}{10}{20}{25}
\lineseg{20}{25}{50}{25}
\end{picture}
\begin{picture}(30,50)
\put(15,25){\makebox(0,0){$\leftrightarrow$}}    
\end{picture}
\begin{picture}(50,50)\thicklines
\qbezier(25,0)(10,25)(25,50)
\put(18,31){\makebox(0,0){\color[rgb]{1,1,1}\large$\bullet$}}
\put(18,19){\makebox(0,0){\color[rgb]{1,1,1}\large$\bullet$}}
\lineseg{0}{40}{30}{25}
\lineseg{0}{10}{30}{25}
\lineseg{30}{25}{50}{25}
\end{picture}
\end{center}
corresponds to the naturality of the braid with respect to $\BB\colon 2\rightarrow1$
(depicted in Sec. \ref{subsec:braidedCA}). 

A braided version of relational semantics is developed in \cite{Has12} using results from quantum topology,
from which the ribbon combinatory algebra in Example \ref{ex:trees} is derived.

\section{Summary and future work}\label{sec:concl}

\paragraph{Summary}
We have seen that a reflexive object in a ribbon category gives rise to
a ribbon combinatory algebra, 
whose internal PROB is a ribbon category which is {\em almost} equivalent to
the appropriate subcategory of the original ribbon category.

This situation can be summarised as follows:
\begin{center}
\unitlength=.40mm
\begin{picture}(200,50)
\put(0,5){\framebox(50,40){}}
\put(150,5){\framebox(50,40){}}
\put(25,35){\makebox(0,0){ribbon}}
\put(25,25){\makebox(0,0){combinatory}}
\put(25,15){\makebox(0,0){algebras}}
\put(175,35){\makebox(0,0){reflexive}}
\put(175,25){\makebox(0,0){ribbon}}
\put(175,15){\makebox(0,0){PROBs}}
\put(60,28){\vector(1,0){80}}
\put(140,22){\vector(-1,0){80}}
\put(100,43){\makebox(0,0){take the internal PROB}}
\put(100,33){\makebox(0,0){$\calA\mapsto\IA$}}
\put(100,17){\makebox(0,0){take the elements of $1$}}
\put(100,7){\makebox(0,0){$\mathbb{C}\mapsto\mathbb{C}(0,1)$}}
\end{picture}
\end{center}
where a reflexive ribbon PROB means a PROB $\mathbb{C}$ 
which is a ribbon category and $1$ is reflexive ($\mathbb{C}((\_)+1,1)\cong\mathbb{C}(-,1)$).

One might expect that this would form an equivalence of categories. However,
while $\IA(0,1)=\calA^\bullet\cong\calA$ holds for
a ribbon combinatory algebra $\calA$,  there is a nasty gap
between $\mathbb{C}$ and $\mathcal{I}_{\mathbb{C}(0,1)}$ for a ribbon PROB $\mathbb{C}$ as the functor
$F\colon \mathbb{C}\rightarrow\mathcal{I}_{\mathbb{C}(0,1)}$ does not have to give an equivalence. Nevertheless,
under a mild condition $\bigcirc=\mathit{id}_I$, $F$ indeed gives an equivalence
(Theorem \ref{thm:equivalence}).

We have given an axiomatisation of ribbon combinatory algebras in terms of the trace combinator $\Tr$, and shown that it is equivalent to
another axiomatisation using the combinators $\balpha$ and $\bbeta$ for the duality. 
Thus, in the untyped (reflexive) extensional setting, to give a trace combinator is 
to give a ribbon structure.

\paragraph{Future work} 
We have studied combinatory algebras whose internal
PROB (PROP) is ribbon ({compact closed} when symmetric). 
A related question is
to characterise combinatory algebras whose internal PROB (PROP) is {\em traced}
(but not necessarily ribbon or compact closed). 
For example, let $\mathbb{D}$ be a cartesian closed category with 
a Conway fixed-point operator and a reflexive object $D$ --- there are plenty of such cases found in domain theory, and a Conway operator
determines a trace \cite{Has99,Has09}.
The global elements of $D$
form an extensional $\BCI$-algebra (in fact an extensional $\mathbf{SK}$-algebra). Its internal PROP is a cartesian category with a Conway operator, hence traced; but it is not compact closed. 

Another important question is to give finer examples of ribbon combinatory algebras. The crossed $G$-set models
(Example \ref{ex:trees}) give ribbon combinatory algebras with non-trivial braids and twists, but they do not provide 
interesting invariants of tangles. We expect to obtain better examples by applying techniques from knot theory.
For instance, the recent work relating knots and links to Thompson's group $F$ \cite{Jon17,Jon19} might be relevant, as $F$ is the group of automorphisms on an object $A$  with an isomorphism
$A\otimes A\cong A$ in a monoidal category \cite{FL10}.

\begin{acks}
The first author is supported by 
\grantsponsor{key}{JSPS KAKENHI}{url}
Grant No. \grantnum{key}{21K11753} and 
\grantnum{key}{24K14822}. 
The second author is supported by
\grantsponsor{hasuo}{JST ERATO HASUO Metamathematics for Systems Design Project}{url}
(No. \grantnum{hasuo}{JPMJER1603}).
\end{acks}

\bibliographystyle{ACM-Reference-Format}
\bibliography{braid}

\end{document}